\newif\ifdraft
\newif\iffullversion
\iffullversion\usepackage{ae,aecompl}\fi
\renewcommand\paragraph[1]{\medskip\noindent\textbf{#1}}
\newcommand\fullshort[2]{#1}
\newcommand\fullshort[2]{#2}
\newcommand\fullonly[1]{\fullshort{#1}{}}
\newcommand\shortonly[1]{\fullshort{}{#1}}
\newtheorem{definition}{Definition}
\newtheorem{theorem}{Theorem}
\newtheorem{corollary}{Corollary}
\newtheorem{lemma}{Lemma}
\newtheorem{conjecture}{Conjecture}
\newenvironment{proof}[1][]{\medskip\noindent
  \textit{Proof\ifx\proof#1\proof\else\ #1\fi. }}{}
\newcommand\qed{\hfill$\Box$}
\let\oldIndex\index
\DeclareRobustCommand\index[1]{\oldIndex{#1}\marginparx{\normalfont\textnormal\raggedright\tiny$\bullet$#1}}
\let\oldLabel\label
\DeclareRobustCommand\label[1]{\oldLabel{#1}\marginparx{\normalfont\raggedright\footnotesize\underline{#1}}}
\newcommand\unruh@readlineopt{\let\unruh@temp\undefined\ifeof\unruh@tempread\else\immediate\read\unruh@tempread to\unruh@temp\fi}
\newread\unruh@tempread
\def\SVNrevision{-1}%
\edef\SVNrevision{\number\unruh@temp}\fi
\newcommand\delaytext[2]{%
  \long\expandafter\gdef\csname delaytext:#1\endcsname{#2}%
  \AtEndDocument{%
    \expandafter\ifx\csname delaytext-used:#1\endcsname\relax
    \@latex@error{delaytext #1 unused. Use \noexpand\usedelayedtext{#1}
      somewhere}\fi}%
}
\newcommand\usedelayedtext[1]{%
  \expandafter\ifx\csname delaytext-used:#1\endcsname\relax\else
  \@latex@error{delaytext #1 used twice}\fi
  \global\@namedef{delaytext-used:#1}{1}%
  \expandafter\ifx\csname delaytext:#1\endcsname\relax
  \@latex@error{delaytext #1 undefined}\fi
  \@nameuse{delaytext:#1}%
}
\newcommand\figureCorruptedAlice{
  \begin{pspicture}(-4.5,-5.8)(9,.9)
    \psset{nodesep=2pt}
    \rput(-4.5,.5){(a)}\rput(-3,0){\CAReal}
    \rput(-4.5,-2){(b)}\rput(-3,-2.5){\CAIdeal}
    \rput(-4.5,-4.8){(c)}\rput(-3,-4.8){\CAGameSix}
  \end{pspicture}}
\newcommand\CAReal{%
  \rput(0,0){\rnode{Adv}{$\Adv$}}
  \rput(2,0){\rnode{AC}{$A^C$}}
  \rput(4,-.3){\rnode{FCOM}{$\FCOM$}}
  \rput(6,0){\rnode{B}{$B$}}
  \rput(8,0){\rnode{Z}{$\calZ$}}
  \ncline[offset=2pt]{<->}{Adv}{AC}
  \ncline[offset=-3pt]{<-}{Adv}{AC}
  \ncarc{<->}{AC}{B}
  \ncarc{->}{B}{FCOM}
  \ncarc{->}{FCOM}{AC}
  \ncline[offset=2pt]{<-}{B}{Z}\Aput[2pt]{$\scriptstyle c$}
  \ncline[offset=-3pt]{->}{B}{Z}\Bput[2pt]{$\scriptstyle s$}
  \ncloop[angleB=180,linearc=.2]{<->}{Z}{Adv}
}
\newcommand\CAGameSix{%
  \rput(0,0){\rnode{Adv}{$\Adv$}}
  \rput(1.5,0){\rnode{AC}{$A^C$}}
  \rput(3.5,-.3){\rnode{FCOM}{$\FFC$}}
  \rput(5.5,0){\rnode{B}{$B$}}
  \rput(7,0){\rnode{AC2}{$A^C$}}
  \rput(8.5,0){\rnode{FROT}{$\FROT$}}
  \rput(10,0){\rnode{B2}{$\TIlde B$}}
  \rput(11.5,0){\rnode{Z}{$\calZ$}}
  \ncline[offset=2pt]{<->}{Adv}{AC}
  \ncline[offset=-3pt]{<-}{Adv}{AC}
  \ncarc{<->}{AC}{B}
  \ncarc{->}{B}{FCOM}
  \ncarc{->}{FCOM}{AC}
  \ncline{->}{B}{AC2}\Aput[2pt]{$\scriptstyle s_0,s_1$}
  \ncline{->}{AC2}{FROT}\Aput[2pt]{$\scriptstyle s_0,s_1$}
  \ncline[offset=2pt]{<-}{FROT}{B2}\Aput[2pt]{$\scriptstyle c$}
  \ncline[offset=-3pt]{->}{FROT}{B2}\Bput[2pt]{$\scriptstyle s_c$}
  \ncline[offset=2pt]{<-}{B2}{Z}\Aput[2pt]{$\scriptstyle c$}
  \ncline[offset=-3pt]{->}{B2}{Z}\Bput[2pt]{$\scriptstyle s_c$}
  \ncloop[angleB=180,linearc=.2]{<->}{Z}{Adv}
  \psframe[linestyle=dashed,framearc=.2](-.7,.5)(5.8,-.7)
  \rput(-.4,-.9){$\Sim$}
}
\newcommand\CAIdeal{%
  \rput(0,0){\rnode{Adv}{$\Sim$}}
  \rput(2,0){\rnode{AC}{$A^C$}}
  \rput(4,0){\rnode{FROT}{$\FROT$}}
  \rput(6,0){\rnode{B}{$\Tilde B$}}
  \rput(8,0){\rnode{Z}{$\calZ$}}
  \ncline{->}{Adv}{AC}
  \ncline[offset=2pt]{<-}{FROT}{B}\Aput[2pt]{$\scriptstyle c$}
  \ncline[offset=-3pt]{->}{FROT}{B}\Bput[2pt]{$\scriptstyle s_c$}
  \ncline{->}{AC}{FROT}\Aput[2pt]{$\scriptstyle s_0,s_1$}
  \ncline[offset=2pt]{<-}{B}{Z}\Aput[2pt]{$\scriptstyle c$}
  \ncline[offset=-3pt]{->}{B}{Z}\Bput[2pt]{$\scriptstyle s_c$}
  \ncloop[angleB=180,linearc=.2]{<->}{Z}{Adv}
}  
\newcommand\figureStandalone{
  \begin{pspicture}(-4.3,-6.7)(8.2,.7)
    \psset{nodesep=2pt}
    \rput(-4,0){\SAReal}
    \rput(0,0){\SARealX}
    \rput(5,0){\SARealXX}
    \rput(5,-4){\SAIdealXX}
    \rput(0,-4){\SAIdealX}
    \rput(-4,-4){\SAIdeal}
  \end{pspicture}}
\newcommand\SAReal{%
  \rput(0,0){\rnode{Z'}{$\calZ'$}}
  \rput(1,0){\rnode{D}{$D$}}
  \rput(0,-1){\rnode{Adv}{$\Adv$}}
  \rput(1,-1){\rnode{pi}{$A$}}
  \pnode(2,0){out}
  \psframe[linestyle=dashed,framearc=.2](-.3,.3)(1.3,-.3)
  \ncline{->}{Z'}{D}
  \ncline{->}{D}{out}
  \ncline{<->}{Z'}{Adv}
  \ncline{<->}{Adv}{pi}
  \ncline{->}{pi}{D}
  \rput(-.1,.49){$\calZ$}
  \rput(.5,-1.9){$(\mathit{Real})$}
}
\newcommand\SARealX{%
  \rput(0,0){\rnode{Z'}{$\calZ'$}}
  \rput(0,-1){\rnode{Adv}{$\Adv$}}
  \rput(1,-1){\rnode{pi}{$A$}}
  \pnode(1.7,-.5){out}
  \ncline{<->}{Z'}{Adv}
  \ncline{<->}{Adv}{pi}
  \ncline[nodesepB=5pt]{->}{Z'}{out}
  \ncline[nodesepB=5pt]{->}{pi}{out}
  \rput(.5,-1.9){$(\mathit{Real'})$}
}
\newcommand\SARealXX{%
  \rput(0,0){\rnode{Z'}{$\calZ'$}}
  \rput(0,-1){\rnode{Adv}{$\Adv$}}
  \rput(1,-1){\rnode{pi}{$A$}}
  \pnode(1.7,-.5){out}
  \ncline{<->}{Z'}{Adv}
  \ncline{<->}{Adv}{pi}
  \ncline[nodesepB=5pt]{->}{Z'}{out}
  \ncline[nodesepB=5pt]{->}{pi}{out}
  \rput(.2,-1.9){$(\mathit{Real''})$}
  \psframe[linestyle=dashed,framearc=.2](-.5,.3)(.5,-1.3)
  \rput(-.1,.56){$\calZ'_\Adv$}
}
\newcommand\SAIdealXX{%
  \rput(0,0){\rnode{Z'}{$\calZ'$}}
  \rput(0,-1){\rnode{Adv}{$\Adv$}}
  \rput(2.6,-1){\rnode{F}{$\calF$}}
  \rput(1.4,-1){\rnode{Sim*}{$\Sim^*$}}
  \pnode(3.3,-.5){out}
  \ncline{<->}{Z'}{Adv}
  \ncline{<->}{Adv}{Sim*}
  \ncline{<->}{Sim*}{F}
  \ncline[nodesepB=5pt]{->}{Z'}{out}
  \ncline[nodesepB=5pt]{->}{F}{out}
  \rput(.7,-2.5){$(\mathit{Ideal''})$}
  \psframe[linestyle=dashed,framearc=.2](-.5,.3)(.5,-1.3)
  \rput(-.1,.56){$\calZ'_\Adv$}
  \pspolygon[linestyle=dashed,linearc=.2]
  (-.7,1)
  (-.7,-1.5)
  (.7,-1.5)
  (2,-1.5)
  (2,-.7)
  (.6,-.7)
  (.7,1)
  \rput[t](-.3,-1.6){$\Sim'$}
}
\newcommand\SAIdealX{%
  \rput(0,0){\rnode{Z'}{$\calZ'$}}
  \rput(0,-1){\rnode{Adv}{$\Adv$}}
  \rput(2.6,-1){\rnode{F}{$\calF$}}
  \rput(1.4,-1){\rnode{Sim*}{$\Sim^*$}}
  \pnode(3.3,-.5){out}
  \ncline{<->}{Z'}{Adv}
  \ncline{<->}{Adv}{Sim*}
  \ncline{<->}{Sim*}{F}
  \ncline[nodesepB=5pt]{->}{Z'}{out}
  \ncline[nodesepB=5pt]{->}{F}{out}
  \psframe[linestyle=dashed,framearc=.2](-.5,-.7)(2,-1.3)
  \rput(.7,-2.5){$(\mathit{Ideal'})$}
  \rput[t](-.2,-1.4){$\Sim$}
}
\newcommand\SAIdeal{%
  \rput(0,0){\rnode{Z'}{$\calZ'$}}
  \rput(1,0){\rnode{D}{$D$}}
  \rput(0,-1){\rnode{Adv}{$\Sim$}}
  \rput(1,-1){\rnode{pi}{$\calF$}}
  \pnode(2,0){out}
  \psframe[linestyle=dashed,framearc=.2](-.3,.3)(1.3,-.3)
  \ncline{->}{Z'}{D}
  \ncline{->}{D}{out}
  \ncline{<->}{Z'}{Adv}
  \ncline{<->}{Adv}{pi}
  \ncline{->}{pi}{D}
  \rput(-.1,.49){$\calZ$}
  \rput(.5,-2.5){$(\mathit{Ideal})$}
}
\newcommand\DAReal{%
  \rput(0,0){\rnode{Z1}{$\calZ$}}
  \rput(1,0){\rnode{Adv1}{$\Adv$}}
  \rput(0,-1){\rnode{pi1}{$\pi$}}
  \ncline{-}{Z1}{Adv1}
  \ncarc[arcangle=45]{-}{Adv1}{pi1}
  \ncline{-}{pi1}{Z1}
  \rput(.5,-1.9){(\rmOne)}
}
\newcommand\DArouted{%
  \rput(0,0){\rnode{Z2}{$\calZ$}}
  \rput(1,0){\rnode{Adv2}{$\Adv$}}
  \rput(0,-1){\rnode{pi2}{$\pi$}}
  \rput(1,-1){\rnode{DA2}{\rlap{$\Advd$}\phantom{$\Adv$}}}
  \ncline{-}{Z2}{Adv2}
  \ncline{-}{pi2}{Z2}
  \ncline{-}{Adv2}{DA2}
  \ncline{-}{DA2}{pi2}
  \psframe[linestyle=dashed,framearc=.2](-.3,.3)(1.5,-.3)
  \rput(.1,.5){$\calZ_\Adv$}
  \rput(.8,-1.9){(\rmTwo)}
}
\newcommand\DAidealA{%
  \rput(0,0){\rnode{Z3}{$\calZ$}}
  \rput(1,0){\rnode{Adv3}{$\Adv$}}
  \rput(0,-1){\rnode{rho3}{$\rho$}}
  \rput(1,-1){\rnode{SimP3}{$\Sim'$}}
  \ncline{-}{Z3}{Adv3}
  \ncline{-}{rho3}{Z3}
  \ncline{-}{Adv3}{SimP3}
  \ncline{-}{SimP3}{rho3}
  \psframe[linestyle=dashed,framearc=.2](-.3,.3)(1.5,-.3)
  \rput(.1,.5){$\calZ_\Adv$}
  \rput(.5,-1.9){(\rmThree)}
}
\newcommand\DAidealB{%
  \rput(0,0){\rnode{Z4}{$\calZ$}}
  \rput(1,0){\rnode{Adv4}{$\Adv$}}
  \rput(0,-1){\rnode{rho4}{$\rho$}}
  \rput(1,-1){\rnode{SimP4}{$\Sim'$}}
  \ncline{-}{Z4}{Adv4}
  \ncline{-}{rho4}{Z4}
  \ncline{-}{Adv4}{SimP4}
  \ncline{-}{SimP4}{rho4}
  \psframe[linestyle=dashed,framearc=.2](.5,.3)(1.5,-1.4)
  \rput(.8,.49){$\Sim$}
  \rput(.7,-1.9){(\rmFour)}
}
\newcommand\figureDummyAdversary{%
  \begin{pspicture}(-4.5,-2)(7,0.6)
    \psset{nodesep=2pt}
    \rput(-4,0){\DAReal}
    \rput(-1,0){\DArouted}
    \rput(2.5,0){\DAidealA}
    \rput(5.3,0){\DAidealB}
  \end{pspicture}}
\newcommand\figureComposition{%
  \begin{pspicture}(-.5\figurewidth,-1.2)(.5\figurewidth,3)
    \def\twomach##1##2##3{\rput(0,0){\rnode{##1l}{##2}}\rput(1.4,0){\rnode{##1r}{##3}}\ncline{-}{##1l}{##1r}}
    \psset{nodesep=2pt}
    \rput(-5.7,2){(\rmOne)}\rput(-4,1){\realNet}
    \rput(-5.7,0.15){(\rmTwo)}\rput(-4,-1){\idealNet}
    \rput(.7,1){(\rmThree)}\rput(2.4,0){\hybNet}
  \end{pspicture}}
\newcommand\realNet{%
  \rput(-.5,1.7){\rnode{Z}{$\calZ$}}
  \rput(-.5,0.7){\rnode{A}{$\Adv$}}
  \rput(0.4,1.5){\rnode{sigma}{$\sigma$}}
  \rput(1.4,1.7){\rnode{pi1}{$\pi_1$}}
  \rput(1.4,1.3){$\vdots$}
  \rput(1.4,0.7){\rnode{pin}{$\pi_n$}}
  \ncline{-}{Z}{sigma}
  \ncline{-}{Z}{A}
  \ncline{-}{A}{sigma}
  \ncline{-}{sigma}{pi1}
  \ncline{-}{sigma}{pin}
  \ncline{-}{A}{pin}
  \ncarc[arcangle=-20]{-}{A}{pi1}
}
\newcommand\idealNet{%
  \rput(-.5,1.7){\rnode{Z}{$\calZ$}}
  \rput(-.5,0.7){\rnode{A}{$\Adv$}}
  \rput(0.4,1.2){\rnode{sigma}{$\sigma$}}
  \rput(1.4,1.7){\twomach{rho1}{$\Sim'_1$}{$\rho_1$}}
  \rput(1.4,1.3){$\vdots$}
  \rput(2.8,1.3){$\vdots$}
  \rput(1.4,0.7){\twomach{rhon}{$\Sim'_n$}{$\rho_n$}}
  \ncline{-}{Z}{sigma}
  \ncline{-}{Z}{A}
  \ncline{-}{A}{sigma}
  \ncline{-}{sigma}{rho1r}
  \ncline{-}{sigma}{rhonr}
  \ncline{-}{A}{rhonl}
  \ncarc[arcangle=-25]{-}{A}{rho1l}
  \pspolygon[linestyle=dashed,linearc=.2]
  (-1,1)
  (-1,.3)
  (1.9,.3)
  (1.9,2)
  (.85,2)
  (.85,1)
  \rput(0,-.1){\rput[t](-.68,.3){$\Sim$}}
}
\newcommand\hybNet{%
    \rput(-.5,1.7){\rnode{Z}{$\calZ$}}
    \rput(-.5,0.7){\rnode{A}{$\Adv$}}
    \rput(0.4,2){\rnode{sigma}{$\sigma$}}
    \rput(3.4,2.57){\rnode{pi1}{$\pi_1$}}
    \rput(3.4,2.2){$\vdots$}
    \rput(3.4,1.6){\rnode{im1}{$\pi_{i-1}$}}
    \rput(2.8,1){\twomach{i}{$\Advd$}{$\pi_i$}}
    \rput(2,0.3){\twomach{i1}{$\Sim'_{i+1}$}{$\rho_{i+1}$}}
    \rput(2,-.1){$\vdots$}
    \rput(3.4,-.1){$\vdots$}
    \rput(2,-.7){\twomach{n}{$\Sim'_n$}{$\rho_{n}$}}
    \ncline{-}{sigma}{pi1}
    \ncline{-}{sigma}{ir}
    \ncarc[arcangle=-37]{-}{sigma}{nr}
    \ncline{-}{Z}{sigma}
    \ncline{-}{Z}{A}
    \ncline{-}{A}{sigma}
    \ncline{-}{A}{pi1}
    \ncline{-}{A}{il}
    \ncarc[arcangle=-20]{-}{A}{nl}
    \pspolygon[linestyle=dashed,linearc=.2]
    (-1,2.9)
    (-1,-1.1)
    (3.9,-1.1)
    (3.9,0.65)(1.65,0.65)(1.65,1.3)(3.9,1.3)
    (3.9,2.9)
    \rput(.1,-.1){\rput[tl](-1,2.9){$\calZ_{\sigma,i}$}}
}
\newdimen\figurewidth
\begin{document}

\title{~\\[-1cm]Universally Composable \\Quantum Multi-Party Computation%
  \fullonly{\thanks{Funded by the Cluster of Excellence
    ``Multimodal Computing and Interaction''.}}}

\fullonly{\author{Dominique Unruh\\Saarland University}}

\date{}

\maketitle

\shortonly{\vspace*{-1.3cm}}

\shortonly{
  \renewenvironment{abstract}{\begin{quotation}\noindent\textbf{Abstract.}}{\end{quotation}}
}

\begin{abstract}
  \noindent The Universal Composability model (UC) by Canetti (FOCS 2001) 
  allows for secure composition of arbitrary protocols. We present a
  quantum version of the UC model which enjoys the same
  compositionality guarantees.  We
  prove that  in this model statistically secure oblivious transfer protocols can be
  constructed from commitments. Furthermore, we show
  that every statistically classically UC secure protocol is also
  statistically quantum UC secure. Such implications are not known for
  other quantum security definitions. As a corollary, we get that
  quantum UC secure protocols for general multi-party computation can
  be constructed from commitments.
  \shortonly{\par\noindent\textbf{Keywords:} Quantum cryptography,
      Universal Composability, oblivious transfer}
\end{abstract}

\pagestyle{plain}

\begin{fullversion}
{\columnseprule.4pt
\columnsep30pt
\setcounter{tocdepth}{2}
\hfill
\tableofcontents}
\newpage
\end{fullversion}

\section{Introduction}

Since the inception of quantum key distribution by Bennett and
Brassard \cite{Bennett:1984:Quantum}, it has been known that 
quantum communication permits to achieve protocol tasks that are
impossible given only a classical channel. For example, a
quantum key distribution\index{quantum key distribution}\index{key
  distribution!quantum}\index{QKD|see{quantum key distribution}} scheme \cite{Bennett:1984:Quantum} permits
to agree on a
secret key that is statistically secret, using only an authenticated but not secret channel. (By statistical security we
mean security against computationally unbounded adversaries, also
known as information-theoretical\index{information-theoretical|see{statistical}} security.) In contrast,
when using only classical communication, it is easy to see that such a
secret key can always be extracted by a computationally sufficiently
powerful adversary. Similarly, based on an idea by Wiesner
\cite{wiesner83conjugate}, Bennett, Brassard, Crépeau, and
Skubiszewska \cite{BBCS91} presented a protocol that was supposed to
construct an statistically secure oblivious
transfer\footnote{In an oblivious transfer protocol, Alice holds two
  bitstrings $m_0,m_1$, and Bob a bit $c$. Bob is supposed to get
  $m_c$ but not $m_{1-c}$, and Alice should not learn $c$.}  protocol
from a commitment, another feat that is easily seen to be impossible
classically.\footnote{We remark that,
on the other hand, Mayers \cite{Mayers:1996:Commitment} shows that
also in the quantum case, constructing an statistically secure 
commitment scheme \emph{without any additional assumption} is
impossible.
However, under additional assumptions like in the quantum bounded storage
model by Damgård, Fehr, Salvail, and Schaffner
\cite{Damgaard:2005:BoundedQuantum}, statistically secure bit commitment is possible.
See \autoref{sec:howto} for a discussion of the implications of
Mayers' impossibility result for
our result.}
Oblivious transfer, on the other hand, has been
recognized by Kilian \cite{Kilian:1988:Founding} to securely evaluate
arbitrary functions.  Unfortunately, the protocol of Bennett et
al.~could, at the time, not be proven secure, and the first complete
proof of (a variant of) that protocol was given almost two decades
later by Damgård, Fehr, Lunemann, Salvail, and Schaffner
\cite{Damgaard:2009:Improving}.  

Yet, although the oblivious transfer protocol
satisfies the intuitive secrecy requirements of oblivious transfer, in
certain cases the protocol might lose its security when used in a
larger context. In other words, there are limitations on how the
protocol can be composed. For example, no security guarantee is given
when several instances of the protocol are executed concurrently (see
\fullshort{\autoref{sec:restrict}}{Appendix~\ref{sec:restrict}} for a more detailed explanations of the various restrictions).

The problem of composability has been intensively studied by the
classical cryptography community (here and in the following, we use
the word classical as opposed to quantum). To deal with this problem
in a general way, Canetti \cite{Canetti:2001:Security} introduced the
notion of Universal Composability, short UC (Pfitzmann and Waidner
\cite{PfWa01} independently introduced the equivalent Reactive
Simulatability framework). The UC framework allows to express the
security of a multitude of protocol tasks in a unified way, and any UC-secure protocol automatically enjoys strong composability
guarantees (so-called universal composability). In particular, such a protocol can be run concurrently
with others, and it can be used as a subprotocol of other protocols in
a general way. Ben-Or and Mayers \cite{Ben-Or:2004:General} and
Unruh \cite{Unruh:2004:Simulatable} have shown that the idea of UC-security can be easily adapted to the quantum setting and have
independently presented quantum variants of the UC notion. These
notions enjoy the same strong compositionality guarantees. Shortly afterwards,
Ben-Or, Horodecki, Leung, Mayers, and Oppenheim
\cite{Ben-Or:2004:Universal} showed that many quantum key distribution
protocols are quantum-UC-secure. 

\paragraph{Our contribution.}
In this work, we use the UC framework to show the existence of a
statistically secure and universally composable oblivious
transfer protocol that uses only a commitment scheme. Towards this
goal, we first present a new definition of quantum-UC-security. In our
opinion, our notion is technically simpler than the notions of Ben-Or
and Mayers \cite{Ben-Or:2004:General} and Unruh
\cite{Unruh:2004:Simulatable}. We believe that this may also help to
increase the popularity of this notion in the quantum cryptography
community and to show the potential for using UC-security in the
design of quantum protocols. Second, we show that a variant of the
protocol by Bennett et at.~\cite{BBCS91} is indeed a UC-secure
oblivious transfer protocol.  By composing this protocol with a UC-secure protocol for general multi-party computations by Ishai,
Prabhakaran, and Sahai \cite{Ishai:2008:OT}, we get UC-secure
protocols for general multi-party computations using only commitments
and a quantum channel~-- this is easily seen to be
impossible in a purely classical setting.

\begin{fullversion}
  \fullshort\subsection\dotparagraph{Quantum Universal Composability
    (quantum-UC)} \fullonly{\label{sec:into.quc}} We begin by giving
  an overview over the UC framework. The basic idea behind the UC
  framework is to define security by comparison. Given a certain
  protocol task, say to implement a secure message
  transfer\index{secure message transfer}\index{message
    transfer!secure}, we first specify a machine, called the ideal
  functionality $\calF$ that, by definition, fulfills this protocol
  task securely. E.g., In the case of a secure message transfer, this
  functionality would take a value $x$ from Alice and give this value
  to Bob. All communication between parties and the functionality is
  done over secure channels. Obviously, this functionality $\calF$
  does exactly what we expect from a secure message transfer. Then, we
  define what it means for a protocol $\pi$ to be a secure
  implementation of $\calF$. Intuitively, we require that $\pi$ is no
  less secure than $\calF$. In other words, anything the adversary can
  do in an execution of $\pi$, the adversary could also do in an
  execution using $\calF$. (And in particular, since $\calF$ is secure
  by definition, the adversary then cannot perform any successful
  attacks on $\pi$ either.) This requirement is formally captured by
  requiring that for any adversary $\Adv$, there is another adversary
  $\Sim$, the simulator, such that an execution of $\pi$ with $\Adv$
  (called the real model) is indistinguishable from an execution of
  $\calF$ with $\Sim$ (called the ideal model). And
  indistinguishability in turn is modeled by requiring that no machine
  $\calZ$, called the environment, can guess whether it is interacting
  with the real model or with the ideal model. More precisely:
  \begin{definition}[(Classical) Universal Composability -- informal]
    We say $\pi$ classical-UC-emulates $\calF$ if for any adversary
    $\Adv$ there is a simulator $\Sim$ such that for all environments
    $\calZ$ we have that the difference between the following
    probabilities is negligible: The probability that $\calZ$ outputs
    $1$ in an execution of $\calZ$, $\Adv$, and $\pi$, and the
    probability that $\calZ$ outputs $1$ in an execution of $\calZ$,
    $\Sim$, and $\calF$. (We assume that $\calZ$ can freely
    communicate with the adversary/simulator.)
  \end{definition}
  In the example of a secure message transfer functionality $\calF$,
  the functionality would require its inputs $x$ from $\calZ$ and then
  send $x$ back to $\calZ$. In a secure message transfer protocol
  $\pi$, that is, in a protocol $\pi$ classical-UC-emulating $\calF$,
  Alice would than have to take the input $x$ from $\calZ$, and Bob
  would have to output $x$ to $\calZ$ (otherwise $\calZ$ could
  trivially distinguish the real and the ideal model).  All
  communication send between Alice and Bob over insecure channels,
  however, would be under the control of the adversary. Thus
  everything the adversary learns from that communication, the
  simulator would have to be able to produce on its own; in
  particular, the adversary cannot derive $x$ from that communication
  since the simulator could not simulate that knowledge (in the ideal
  model, the simulator cannot get $x$). This captures the intuitive
  requirement that a secure message transfer protocol should not
  reveal the message to the adversary. In a similar way, other
  properties like the authenticity of the message can be derived from
  the UC definition.

  The UC definition comes in many flavors. For example, in
  computational classical UC-security we restrict the adversary,
  simulator, and environment to polynomial-time machines. This variant
  is used if we want to show security based on computational
  assumptions.  In statistical classical UC, on the other hand, we
  quantify over all (possibly unbounded) adversaries, simulators, and
  environments. This variant is used to model statistical security.

  Besides providing a unified way to model the security of various
  protocol tasks by specifying the ideal functionality, the UC
  framework allows for very general composition of protocols. Assume a
  protocol~$\sigma^\calF$ that uses a functionality $\calF$ as a
  building block.  That is, in the real model, $\sigma^\calF$ has
  access to a functionality $\calF$ that performs a certain task in a
  fully trusted way. (We say, $\sigma^\calF$ runs in the
  $\calF$-hybrid model.) Assume that $\sigma^\calF$
  classical-UC-emulates some other functionality~$\calG$ and that we
  are given a protocol $\pi$ that classical-UC-emulates~$\calF$. Then
  the so-called universal composition theorem states that
  $\sigma^\pi$, the protocol resulting from using the subprotocol
  $\pi$ instead of $\calF$, also classical-UC-emulates $\calG$. This
  even holds if $\sigma^\calF$ invokes many instances of $\calF$
  concurrently.  Such a composition theorem is very useful for proving
  the security of larger protocols in a modular way: One first
  abstracts away a subprotocol (here $\pi$) by replacing it by some
  functionality (here~$\calF$), leading to a simpler protocol
  $\sigma^\calF$ in the $\calF$-hybrid model that is more amenable to
  analysis. Then the protocol $\pi$ is analyzed separately.  It should
  be noted that it was shown by Lindell \cite{Lindell:2003:General}
  that no security notion weaker than (a particular variant of)
  classical UC can have such a composition theorem.

  To get a variant of the UC notion suitable for modeling quantum
  cryptography, we only need to slightly modify the definition:
  Instead of quantifying over classical adversaries, simulators, and
  environment, we quantify over quantum adversaries, simulators, and
  environment. That is, the protocol parties, the adversary, the
  simulator, and the environment are allowed to store, send, and
  compute with quantum states. (And in the computational variant of
  quantum-UC-security, we additionally restrict adversaries,
  simulators, and environment to be restricted to polynomial-time
  quantum computations.)  Since, in a sense, we only change the
  machine model, most structural theorems about UC-security, in
  particular the universal composition theorem, still hold for
  quantum-UC-security; their proofs are almost identical in the
  classical and in the quantum setting. We present our model of
  quantum-UC in \autoref{sec:quc} and give a universal composition
  theorem for that model.
\end{fullversion}

\fullshort\subsection\dotparagraph{UC-secure quantum oblivious transfer}
The oblivious transfer (OT)\index{OT} protocol used in this paper is essentially
the same a the protocol proposed by Damgård et~al.~\cite{Damgaard:2009:Improving} which in turn is based on a
protocol by Bennett et~al.~\cite{BBCS91}.  The basic idea of the
protocol is that Alice encodes a random sequence $\tilde x$ of bits as a quantum
state, each bit randomly either in the computational basis or in the
diagonal basis.\footnote{If we were to use photons for transmission,
  in the computational basis we might encode the bit~$0$ as a
  vertically polarized photon and the bits~$1$ as a horizontally
  polarized photon. In the diagonal basis we might encode the bit~$0$
  as a $45$°-polarized photon, and the bit~$1$ as a $135$°-polarized
  photon.}  Then Bob is supposed to measure all bits, this time in
random bases of his choosing. Then Alice sends the bases she used to
Bob. Let $I_=$ denote the indices of the bits $\tilde x_i$ where Alice and Bob
chose the same basis, and $I_{\neq}$ the indices of the bits where
Alice and Bob chose different bases.  Assume that Bob wants to receive
the message $m_c$ out of Alice's messages $m_0,m_1$. Then Bob sets
$I_c:=I_=$ and $I_{1-c}:=I_{\neq}$ and sends $(I_0,I_1)$ to
Alice. Alice will not know which of these two sets is which and hence
does not learn $c$. Bob will know the bits $\tilde x_i$ at indices
$i\in I_c$. But
even a dishonest Bob, assuming that he measured the whole quantum
state, will not know the bits at indices $I_{1-c}$ since he used the
wrong bases for these bits. Thus Alice uses the bits at $I_0$ to
mask her message $m_0$, and the bits at $I_1$ to mask her message
$m_1$. Then Bob can recover $m_c$ but not $m_{1-c}$. (To deal with the
fact that a malicious Bob might have partial knowledge about the bits
at $I_{1-c}$, we use so-called privacy amplification to extract a near
uniformly mask from these bits.)

The problem with this analysis is that we have assumed that a
malicious Bob measures the whole quantum state upon reception. But
instead, Bob could store the quantum state until he learns the bases
that Alice used, and then use these bases to measure all bits
$\tilde x_i$ accurately. Hence, we need to force a dishonest Bob to
measure all bits before Alice sends the bases. The idea of Bennett et
al.~\cite{BBCS91} is to introduce the following test: Bob has to
commit to the bases he used and to his measurement outcomes. Then
Alice picks a random subset of the bits, and Bob opens the commitments
on his bases and outcomes corresponding to this subset of bits. Alice
then checks whether Bob's measurement outcomes are consistent with
what Alice sent. If Bob does not measure enough bits, then he will
commit to the wrong values in many of the commitments, and
there will be a high probability that Alice detects this.

It was a long-standing open problem what kind of a commitment needs to
be used in order for this protocol to be secure. Damgård et
al.~\cite{Damgaard:2009:Improving} give criteria for the commitment
scheme under which the OT protocol can be proven to have so-called
stand-alone security; stand-alone security, however, does not give as
powerful compositionality guarantees as UC-security
(cf.~\autoref{sec:restrict} below).
In
order to achieve UC-security, we assume that the commitment is given
as an ideal functionality. Then we have to show UC-security in the
case of a corrupted Alice, and UC-security in the case of a corrupted
Bob. The case of a corrupted Alice is simple, as one can easily see
that no information flows from Bob to Alice (the commitment
functionality does, by definition, not leak any information about the
committed values). The case of a corrupted Bob is more complex and
requires a careful analysis about the amount of information that Bob
can retrieve about Alice's bits. Such an analysis has already been
performed by Damgård et al.~\cite{Damgaard:2009:Improving} in their
setting. Fortunately, we do not need to repeat the analysis. We show
that that under certain special conditions, stand-alone security
already implies UC-security. Since in the case of a
corrupted Bob, these conditions are fulfilled, we get the security in
the case of a corrupted Bob as a corollary from the work by Damgård et
al.~\cite{Damgaard:2009:Improving}.

In \autoref{sec:ot}, we show that the OT protocol by Damgård et
al.~\cite{Damgaard:2009:Improving}, when using an ideal functionality
for the commitment, is statistically quantum-UC-secure. Furthermore,
the universal composition theorem guarantees that we can replace the
commitment functionality by any quantum-UC-secure commitment protocol.

\fullshort\subsection\dotparagraph{Quantum lifting and multi-party computation}
We are now equipped with a statistically quantum-UC-secure OT protocol
$\piQOT$ in the commitment-hybrid model. As noted first by Kilian
\cite{Kilian:1988:Founding}, OT can be used for securely evaluating
arbitrary functions, short, OT is complete for multi-party
computation. Furthermore, Ishai, Prabhakaran, and
Sahai~\cite{Ishai:2008:OT} showed that for any functionality $\calG$
(even interactive functionalities that proceed in several rounds),
there is a classical protocol $\rho^{\FOT}$ in the OT-hybrid model
that statistically classical-UC-emulates~$\calG$. Thus, to get a
protocol for $\calG$ in the commitment-hybrid model, we simply replace
all invocations to $\FOT$ by invocations of the subprotocol $\piQOT$,
resulting in a protocol $\rho^{\piQOT}$. We then expect that the
security of $\rho^{\piQOT}$ follows directly using the universal
composition theorem (in its quantum variant). There is, however, one
difficulty: To show that $\rho^{\piQOT}$ statistically
quantum-UC-emulates $\calG$, the universal composition theorem
requires that the following premises are fulfilled: $\piQOT$
statistically quantum-UC-emulates $\FOT$, and $\rho^{\FOT}$
statistically quantum-UC-emulates~$\calG$. But from the result of
Ishai et al.~\cite{Ishai:2008:OT} we only have that $\rho^{\FOT}$
statistically \emph{classical}-UC-emulates~$\calG$. Hence, we first
have to show that the same result also holds with respect to quantum-UC-security. Fortunately, we do not have to revisit the proof of Ishai
et al., because we show the following general fact:
\begin{theorem}[Quantum lifting theorem -- informal]\label{theo:qlift.inform}
  If the protocols~$\pi$ and~$\rho$ are classical protocols,
  and~$\pi$ statistically classical-UC-emulates~$\rho$, then 
  and~$\pi$ statistically quantum-UC-emulates~$\rho$.
\end{theorem}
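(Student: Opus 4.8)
The plan is to reduce, in both the classical and the quantum world, to the dummy adversary $\Advd$ (the adversary that merely relays messages between environment and protocol), and then to exploit the fact that once the real adversary has been replaced by $\Advd$, the entire lower part of the execution is a purely classical system, so that nothing quantum is left except the environment.

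First I would invoke the quantum dummy-adversary theorem (the straightforward quantum analogue of Canetti's classical one, which holds in this model by essentially the same argument, cf.\ the merging of $\Adv$ into the environment depicted in \figureDummyAdversary): to prove that $\pi$ quantum-UC-emulates $\rho$ it suffices to exhibit a single simulator $\Sim$ that fools every \emph{quantum} environment when the adversary is $\Advd$. On the classical side, the hypothesis that $\pi$ classical-UC-emulates $\rho$ yields, again via the dummy-adversary theorem (classical version), exactly such a simulator $\Sim$ for $\Advd$: a classical machine such that every classical environment distinguishes the real execution ($\pi$ with $\Advd$) from the ideal execution ($\rho$ with $\Sim$) only with negligible probability. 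I would then claim that this very same classical $\Sim$ already works against all quantum environments.

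The heart of the proof is the following lifting lemma: for any quantum environment $\calZ$ and any purely classical interactive system $S$ --- here $S$ is either $\pi$ together with $\Advd$, or $\rho$ together with $\Sim$, all of whose machines are classical --- there is a computationally unbounded classical environment $\calZ^{\mathit{cl}}$ whose output distribution in interaction with $S$ is identical to that of $\calZ$. To build $\calZ^{\mathit{cl}}$ I would have it classically simulate the quantum machine $\calZ$ step by step, maintaining a full (exponentially large) classical description of $\calZ$'s mixed state; this is where statistical, i.e.\ unbounded, security is essential. The only subtlety is message passing: whenever $\calZ$ would hand a register to the classical system $S$, I use the fact that a classical machine does nothing but measure its incoming register in the computational basis, so the measurement may be commuted to the sender's side. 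Thus $\calZ^{\mathit{cl}}$ performs that measurement on its simulated copy, samples the outcome with the correct probability, forwards the resulting classical value to $S$, and updates to the post-measurement state. Because $S$ holds no quantum memory of its own, no entanglement between $\calZ$ and $S$ is ever lost in this classical bookkeeping, so the simulation is exact and the joint distribution of transcript and output is reproduced faithfully. This is exactly the step where the hypothesis that $\pi$ and $\rho$ are classical protocols is indispensable: were $S$ to keep quantum state, a classical $\calZ^{\mathit{cl}}$ could not track the joint state and the equality would break.

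Finally I would assemble the pieces. Applying the lemma to $S=\pi$-with-$\Advd$ and to $S=\rho$-with-$\Sim$ produces one classical environment $\calZ^{\mathit{cl}}$ reproducing $\calZ$'s behaviour in both executions, so $\calZ$'s quantum distinguishing advantage equals the classical distinguishing advantage of $\calZ^{\mathit{cl}}$, which is negligible by the classical security established above. Hence $\Sim$ fools every quantum environment against $\Advd$, and the quantum dummy-adversary theorem upgrades this to full quantum-UC-emulation, proving the theorem. I expect the main obstacle to be the lifting lemma itself: making the ``commute the measurement across a classical interface'' argument precise in the model's formalism, and tracking carefully where classicality of the protocols and unboundedness of the environment are each used --- these being precisely the two features whose absence makes the analogous computational lifting statement false.
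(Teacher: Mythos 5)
Your proposal is correct and follows essentially the same route as the paper: reduce to the dummy adversary, observe that because the protocol (and simulator) side is classical every message crossing the interface is effectively measured in the computational basis, and hence replace the quantum environment by an exponential-overhead classical simulation of it whose output distribution is identical in both the real and ideal executions, so the classical simulator for the (classical) dummy adversary already fools all quantum environments. The paper packages your ``commute the measurement across the classical interface'' step as an explicit operator $\calC(\cdot)$ applied to the environment and a classical dummy-adversary $\Advdc$ that measures everything it forwards, but the substance is the same.
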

Combining this theorem with the universal composition theorem, we
immediately get that $\rho^{\piQOT}$ statistically quantum-UC-emulates
$\calG$. In other words, any multi-party computation can be performed
securely using only a commitment and a quantum-channel.  In contrast, we show
that in the classical setting a commitment is not even
sufficient to compute the AND-function.

We stress that a property like the quantum lifting theorem should not
be taken for granted. For example, for the so-called stand-alone model
as considered by Fehr and Schaffner \cite{fehr09composing}, no
corresponding property is known.  
A special case of security in the stand-alone model is the
zero-knowledge property: The question whether protocols that are
statistical zero-knowledge with respect to classical adversaries are
also zero-knowledge with respect to quantum adversaries has been
answered positively by Watrous \cite{watrous06zk} for particular
protocols, but is still open in the general case.

\subsection{How to interpret our result}%
\label{sec:howto}
We show that we can perform arbitrary statistically UC-secure
multi-party computations, given a quantum channel and a
commitment. However, Mayers \cite{Mayers:1996:Commitment} has shown
that, even in the quantum setting, statistically secure commitment
schemes do not exist, not even with respect to security notions much
weaker than quantum-UC-security. In the light of this result, the
reader may wonder whether our result is not vacuous.  To illustrate
why our result is useful even in the light of Mayers' impossibility result, we present four
possible application scenarios.

\paragraph{Weaker computational assumptions.}
The first application of our result would be to combine our protocols with a
commitment scheme that is only \emph{computationally} quantum-UC-secure. Of course, the
resulting multi-party computation protocol would then not be
\emph{statistically} secure any more. However, since commitment
intuitively seems to be a simpler task than oblivious transfer,
constructing a computationally quantum-UC-secure commitment scheme
might be possible using simpler computational assumptions, and our
result then implies that the same computational assumptions can be
used for general multi-party computation.

\paragraph{Physical setup.}\index{physical setup}\index{setup!physical}
One might seek a direct physical implementation of a commitment, such
as a locked strongbox (or an equivalent but technologically more advanced
construct). With our result, such a physical implementation would be
sufficient for general multi-party computation. In contrast, in a
classical setting one would be forced to try to find physical
implementations of OT. It seems that a commitment might be a simpler
physical assumption than OT (or at least an incomparable one). So our
result reduces the necessary assumptions when implementing general
multi-party computation protocols based on physical assumptions. Also,
Kent \cite{kent99commitment} proposes to build commitments based on
the fact that the speed of light\index{speed of
  light}\index{light!speed of} is bounded. Although it is not clear
whether his schemes are UC-secure (and in particular, how to model his
physical assumptions in the UC framework), his ideas might lead to a
UC-secure commitment scheme that then, using our result, gives general
UC-secure multi-party computation based on the limitation of the speed
of light.

\paragraph{Theoretical separation.} Our result can also be seen from
the purely theoretical point of view.  It gives a separation between
the quantum and the classical setting by showing that in the quantum setting,
commitment is complete for general statistically secure multi-party
computation, while in the classical world it is not. Such
separations -- even without practical applications -- may increase our
understanding of the relationship between the classical and the
quantum setting and are therefore arguably interesting in their own
right.

\paragraph{Long-term
  security.}\index{security!long-term}\index{long-term security} Müller-Quade and Unruh
\cite{muellerquade07long} introduce the concept of long-term UC-security. In a nutshell, long-term UC-security is a strengthening of
computational UC-security that guarantees that a protocol stays
secure even if the adversary gets unlimited computational power after
the protocol execution. This captures the fact that, while we might
confidently judge today's technology, we cannot easily make
predictions about which computational problems will be hard in the
future. Müller-Quade and Unruh show that (classically) long-term UC-secure commitment protocols exist given certain practical
infrastructure assumptions, so-called signature cards\index{signature card}. It is, however,
likely that their results cannot be extended to achieve general multi-party
computation.  Our result, on the other hand, might allow to overcome this
limitation: Assume that we show that the commitment protocol of
Müller-Quade and Unruh is also secure in a quantum variant of
long-term UC-security. Then we could compose that commitment protocol
with the protocols presented here, leading to long-term UC-secure
general multi-party protocols from signature cards.

\delaytext{compositionality restrictions}{
\fullshort\subsection\section{Compositionality restrictions in prior work}
\label{sec:restrict}

\fullshort{Above}{In the introduction}, we claimed that the results of prior work on commitment schemes
in the quantum setting have limitations concerning their composability
guarantees. We will now briefly explain in which cases composition is
possible in prior models, and what the restrictions are. All prior
results giving some kind of composability guarantee work in the some
variant of the so-called stand-alone model\index{security!stand-alone|see{stand-alone model}}. The basic idea of the
stand-alone model is similar to that of the UC model: We specify a
protocol $\pi$ and an ideal functionality $\calF$, and we say that
$\pi$ implements $\calF$ in the stand-alone model if for every
adversary $\Adv$ attacking $\pi$ (real model), there is a simulator
$\Sim$ attacking $\calF$ (ideal model), so that the real and the ideal
model are indistinguishable. But in contrast to the UC model,
indistinguishability of the real and the ideal model is not defined
with respect to an environment that tries to guess which model it is
interacting with. Instead, given fixed inputs for all honest parties,
we require that the output of the honest parties and of the adversary
(considered as a joint quantum state) is indistinguishable from the
output of the functionality and of the simulator (considered as a
joint quantum state). The notion of indistinguishability of quantum
states is then defined depending of the flavor of the stand-alone
model. Security in the stand-alone model is strictly weaker than
security in the UC model: the UC environment may introduce additional
dependencies between the messages send in the protocol and the
protocol inputs/outputs. For example, the environment could give a
message that has been sent over an insecure channel by Bob as initial
protocol input to Alice. Such dependencies are explicitly excluded in
the stand-alone model.

In the classical case, it has been shown by Canetti
\cite{Canetti:2000:Security} that the stand-alone model allows for
sequential composition. Sequential composition\index{sequential composition}\index{composition!sequential} means that we are
allowed to run several protocols or several instances of one protocol
one after the other without loosing security, but we are not allowed
to run them concurrently or interleave the protocol steps (as can
easily happen if the protocol parties are not careful about their
synchronization). Similar results have been obtained in the quantum
case by Wehner and Wullschleger \cite{Wehner:2008:Composable} and by Fehr
and Schaffner \cite{fehr09composing} for different variants of the
quantum stand-alone model. 

There are two main flavors of the quantum stand-alone model:
Statistical and computational security. In the first case, adversary
and simulator are allowed to be unlimited, and in the second case,
adversary and simulator are computationally bounded. Note that when
defined like this, statistical stand-alone security does not imply
computational stand-alone security because statistical stand-alone
security does not guarantee that the simulator corresponding to a
computationally bounded adversary is also computationally bounded. The
effect of this is slightly paradoxical: one can compose statistically
secure protocols with each other, and one can compose computationally
secure protocols with each other, but no guarantees are given if one
composes a computationally secure protocol with a statistically secure
protocol. 

We note that the problems arising from an unlimited simulator can be
avoided by simply strengthening the statistical stand-alone model and
requiring that the simulator is computationally bounded if the
adversary is. This is the approach we also take in our modeling of
statistical quantum-UC-security.

The protocols analyzed by Wehner and Wullschleger
\cite{Wehner:2008:Composable} and Fehr and Schaffner
\cite{fehr09composing} are proven secure in (different variants of)
the statistical stand-alone model. Furthermore, the simulator they
construct does not run in polynomial time, therefore their results do
not imply computational stand-alone security and the difficulties
outlined above apply. 

The situation concerning the OT protocol analyzed by Damgård, Fehr,
Lunemann, Salvail, and Schaffner \cite{Damgaard:2009:Improving} is
even more subtle. They prove that in the case of a corrupted recipient
Bob, their protocol is secure in the computational stand-alone
model. Furthermore, for a corrupted sender Alice, the protocol is
secure in the statistical stand-alone model with non-polynomial-time
simulator. Thus, the protocol can be composed sequentially with other
protocols that are computationally secure for corrupted Bob and
statistically secure for corrupted Alice; yet it cannot be composed
with protocols which are statistically secure for corrupted Bob and
computationally secure for corrupted Alice. In particular, the OT
protocol cannot be composed with another instance of itself where Bob
is the sender. The full version \cite[Section
5]{Damgaard:2009:Improving:v3} of their paper describes an extension
of the underlying commitment scheme which enables the construction of an
efficient simulator. With such an
extension, sequential composition of their OT protocol with
computationally secure protocols is possible. 

In all three papers, when composing classical and quantum protocols,
it is necessary that even the classical protocols are proven secure
with respect to a definition involving quantum adversaries. A result
like our quantum lifting theorem (\autoref{theo:qlift.inform}) is an
open problem in the stand-alone model.
}
\fullonly{\usedelayedtext{compositionality restrictions}}

\begin{fullversion}
  \subsection{Related work}

\paragraph{Security models.} 
General quantum security models based on the stand-alone model have
first been proposed by van de Graaf \cite{Graaf:1998:Towards}. His
model comes without a composition theorem. The notion has been refined
by Wehner and Wullschleger \cite{Wehner:2008:Composable} and by Fehr
and Schaffner \cite{fehr09composing} who also prove sequential
composition theorems. Quantum security models in the style of the UC
model have been proposed by Ben-Or and Mayers
\cite{Ben-Or:2004:Universal} and by Unruh
\cite{Unruh:2004:Simulatable}.  The original idea behind the UC
framework in the classical setting was independently discovered by
Canetti \cite{Canetti:2001:Security} and by Pfitzmann and Waidner
\cite{PfWa01} (the notion is called Reactive Simulatability in the
latter paper).

\paragraph{Quantum protocols.} The idea of using quantum communication
for cryptographic purposes seems to originate from Wiesner
\cite{wiesner83conjugate}. The idea gained widespread recognition with
the BB84 quantum key-exchange protocol by Bennett and Brassard
\cite{Bennett:1984:Quantum}. A statistically hiding and binding
commitment scheme was proposed by Brassard, Cr\'epeau, Jozsa, and
Langlois \cite{brassard93commitment}. Unfortunately, the scheme was
later found to be insecure; in fact, Mayers
\cite{Mayers:1996:Commitment} showed that statistically hiding and
binding quantum commitments are impossible without using additional
assumptions. Kent \cite{kent99commitment} circumvents this
impossibility result by proposing a statistically hiding and binding
commitment scheme that is based on the limitation of the speed of
light.  Bennett, Brassard, Cr\'epeau, and Skubiszewska \cite{BBCS91}
present a protocol for statistically secure oblivious transfer in the
quantum setting. They prove their protocol secure under the assumption
that the adversary cannot store qubits and measures each qubit
individually. They also sketch an extension that uses a commitment
scheme to make their OT protocol secure against adversaries that can
store and compute on quantum states. The protocol analyzed in the
present paper is, in its basic idea, that extension.  Yao
\cite{Yao:1995:Security} gave a partial proof of the extended OT
protocol. His proof, however, is incomplete and refers to a future
complete paper which, to the best of our knowledge, never appeared.
As far as we know, the first complete proof of a variant of that OT
protocol has been given by Damgård, Fehr, Lunemann, Salvail, and
Schaffner \cite{Damgaard:2009:Improving}; their protocol is secure in
the stand-alone model. Hofheinz and Müller-Quade
\cite{Hofheinz:2003:Paradox} conjectured that the extended OT protocol
by Bennett et al.~\cite{BBCS91} is indeed UC-secure; in the present
paper we prove this claim.  Damgård, Fehr, Salvail, and Schaffner
\cite{Damgaard:2005:BoundedQuantum} have presented OT and commitment
protocols which are statistically secure under the assumption that the
adversary has a bounded quantum storage capacity.

\paragraph{Classical vs.~quantum security.} To the best of our
knowledge, van de Graaf \cite{Graaf:1998:Towards} was the first to
notice that even statistically secure classical protocols are not
necessarily secure in a quantum setting. The reason is that the
powerful technique of rewinding the adversary is not available in the
quantum setting. Watrous \cite{watrous06zk} showed that in particular
cases, a technique similar to classical rewinding can be used. He uses
this technique to construct quantum zero-knowledge proofs. No general
technique relating classical and quantum security is known; to the
best of our knowledge, our quantum lifting theorem is the first such
result (although restricted to the statistical UC model).

\paragraph{Miscellaneous.} Kilian \cite{Kilian:1988:Founding} first
noted that OT is complete for general multi-party computation.  Ishai,
Prabhakaran, and Sahai \cite{Ishai:2008:OT} prove that this also holds
in the UC setting. Computationally secure UC commitment schemes have
been presented by Canetti and Fischlin
\cite{Canetti:2001:Commitments}.
\end{fullversion}

\subsection{Preliminaries}

\paragraph{General.}
A nonnegative function $\mu$ is called negligible\index{negligible} if for all $c>0$ and
all sufficiently large $k$, $\mu(k)<k^{-c}$. A nonnegative function $f$ is
called overwhelming\index{overwhelming} if $f\geq 1-\mu$ for some negligible
$\mu$. Keywords in typewriter font (e.g., \texttt{environment}) are
assumed to be fixed but arbitrary, distinct non-empty words in
$\bits*$. $\varepsilon\in\bits*$ denotes the empty word\index{empty word}\index{word!empty}. Given a
sequence $x=x_1,\dots, x_n$, and a set $I\subseteq\{1,\dots,n\}$,
$x_{|I}$ denote the sequence $x$ restricted to the indices $i\in T$.

\paragraph{Quantum systems.} We can only give a few terse overview
over the formalism used in quantum computing. For a thorough
introduction, we recommend the textbook by Nielsen and Chuang
\cite[Chap.~1--2]{NiCh_00}. A (pure) state\index{pure state}\index{state!pure} in a quantum system is
described by a vector $\ket\psi$ in some Hilbert space $\calH$. In
this work, we only use Hilbert spaces of the form $\calH=\setC^N$ for
some countable set $N$, usually $N=\bit$ for qubits or $N=\bits*$ for
bitstrings. We always assume a designated orthonormal basis
$\{\ket x:x\in N\}$ for each Hilbert space, called the computational
basis\index{computational basis}. The basis states $\ket x$ represent classical states (i.e.,
states without superposition). Given several separate subsystems
$\calH_1=\setC^{N_1},\dots,\calH_n=\setC^{N_n}$, we describe the joint
system by the tensor product
$\calH_1\otimes\dots\otimes\calH_n=\setC^{N_1\times\dots\times N_n}$.
We write $\bra\Psi$ for the linear transformation mapping $\ket\Phi$
to the scalar product $\braket\Psi\Phi$. Consequently, $\butter\Psi$
denotes the orthogonal projector on $\ket\Psi$.  We set
$\ket0_+:=\ket0$, $\ket1_+:=\ket1$,
$\ket0_\times:=\frac1{\sqrt2}(\ket0+\ket1)$, and
$\ket1_\times:=\frac1{\sqrt2}(\ket0-\ket1)$. For $x\in\bits n$ and
$\theta\in\{+,\times\}^n$, we define $\ket
x_\theta:=\ket{x_1}_{\theta_1}\otimes\dots\otimes\ket{x_n}_{\theta_n}$.

\paragraph{Mixed states.}\index{state!mixed}\index{mixed state} If a system is not in a single pure state, but
instead is in the pure state $\ket{\Psi_i}\in\calH$ with
probability $p_i$ (i.e., it is in a mixed state), we describe the system by a density
operator\index{density operator}\index{operator!density}
$\rho=\sum_i p_i\butter{\Psi_i}$ over $\calH$.  This representation
contains all physically observable information about the distribution
of states, but some distributions are not distinguishable by any
measurement and are represented by the same mixed state. The set of
all density operators is the set of all positive\footnote{We call an
  operator positive if it is Hermitean and has only nonnegative
  Eigenvalues.} operators $\calH$ with trace $1$, and is denoted
$\calP(\calH)$. Composed systems are descibed by operators in
$\calP(\calH_1\otimes\dots\otimes\calH_n)$.  In the following, when
speaking about (quantum) states, we always mean mixed states in the density
operator representation.  A mapping
$\calE:\calP(\calH_1)\to\calP(\calH_2)$ represents a physically
possible operation (realizable by a sequence of unitary
transformations, measurements, and initializations and removals of
qubits) iff it is a completely positive trace preserving
map.\footnote{A map $\calE$ is completely positive iff for all Hilbert
  spaces $\calH'$, and all positive operators $\rho$ over
  $\calH_1\otimes\calH'$, $(\calE\otimes\id)(\rho)$ is positive.} We
call such mappings superoperators\index{superoperator}\index{operator!super-}.  The superoperator $\Einit^m$ on
$\calP(\calH)$ with $\calH:=\setC^{\bits*}$ and $m\in\bits*$ is
defined by $\Einit^m(\rho):=\butter m$ for all $\rho$. 

\paragraph{Composed systems.}\index{composed systems} Given a superoperator $\calE$ on
$\calP(\calH_1)$, the superoperator $\calE\otimes\id$ operates on
$\calP(\calH_1\otimes\calH_2)$. Instead of saying ``we apply
$\calE\otimes\id$'', we say ``we apply $\calE$ to $\calH_1$''. If we say ``we
initialize $\calH$ with $m$'', we mean ``we apply $\Einit^m$ to
$\calH$''.  Given a state $\rho\in\calP(\calH_1\otimes\calH_2)$, let
$\rho_x:=(\butter x\otimes\id)\rho(\butter x\otimes\id)$. Then the outcome of
measuring $\calH_1$ in the computational basis is $x$ with probability
$\tr\rho_x$, and after measuring $x$, the quantum state is $\frac{\rho_x}{\tr\rho_x}$.
Since we will only performs measurements in the computational basis in
this work, we will omit the qualification ``in the computational basis''.
The terminology in this paragraph generalizes to systems composed of
more than two subsystems.

\paragraph{Classical states.}\index{classical state}\index{state!classical} Classical probability distributions
$P:N\to[0,1]$ over a countable set~$N$ are represented by density
operators $\rho\in\calP(\setC^N)$ with $\rho=\sum_{x\in N}P(x)\butter{x}$ where
$\{\ket x\}$ is the computational basis. We call a state classical if
it is of this form. We thus have a canonical isomorphism between the
classical states over $\setC^N$ and the probability distributions over~$N$. We call a superoperator
$\calE:\calP(\setC^{N_1})\to\calP(\setC^{N_2})$
classical\index{classical superoperator}\index{superoperator!classical} iff if there is a
randomized function $F:N_1\to N_2$ such that
$\calE(\rho)=\sum_{\substack{x\in N_1\\y\in N_2}}\Pr[F(x)=y]\cdot \bra
x\rho\ket x\cdot\butter y$.
Classical superoperators describe what can be realized with classical
computations.  An example of a classical superoperator on
$\calP(\setC^N)$ is
$\Eclass:\rho\mapsto\sum_x\bra x\rho\ket x\cdot\butter x$.
Intuitively, $\Eclass$ measures $\rho$ in the computational basis and then
discards the outcome, thus removing all superpositions from $\rho$.

\section{Quantum Universal Composability}
\label{sec:quc}

We now present our quantum-UC-framework. 
\fullonly{For a motivation of the
model, we refer to \autoref{sec:into.quc}.}
\shortonly{The basic idea of our definition is the same as that
  underlying Canetti's UC-framework \cite{Canetti:2001:Security}.  The
  main change is that we allow all machines to perform quantum
  computations and to send quantum states as messages.  
  For a gentler introduction into the ideas and intuitions underlying
  the UC-framework, we refer to~\cite{Canetti:2001:Security}.
}

\paragraph{Machine model.}
A machine\index{machine} $M$ is described by an
identity\index{identity!of a machine} $\id_M$ in $\bits*$ and a
sequence of superoperators $\calE^{(k)}_M$ ($k\in\setN$) on
$\Hstate\otimes\Hclass\otimes\Hquant$ with
$\Hstate,\Hclass,\Hquant:=\setC^{\bits*}$ (the \emph{state transition
  operators}\index{state transition operator}). The index $k$ in
$\calE^{(k)}_M$ denotes
the security parameter. The Hilbert space
$\Hstate$ represents the state kept by the machine between
invocations, and $\Hclass$ and $\Hquant$ are used both for incoming
and outgoing messages. Any message consists of a classical part stored
in $\Hclass$ and a quantum part stored in $\Hquant$.  If a machine
$\id_\mathit{sender}$ wishes to send a
message with classical part $m$ and quantum part $\ket\Psi$ to a
machine $\id_\mathit{rcpt}$, the machine $\id_\mathit{sender}$
initializes $\Hclass$ with $(\id_\mathit{sender},\id_\mathit{rcpt},m)$
and $\Hquant$ with $\ket\Psi$.  (See the definition of the network
execution below for details.)  The separation of messages into a
classical and a quantum part is for clarity only, all information
could also be encoded directly in a single register. If a machine does not
wish to send a message, it initializes $\Hclass$ and $\Hquant$ with $\varepsilon$.

A network\index{network} $\bfN$ is a set of machines with pairwise
distinct identities containing a machine~$\calZ$ with
$\id_\calZ=\mathtt{environment}$. We write $\ids_\bfN$ for the set of
the identities of the machines in~$\bfN$.

We call a machine $M$
quantum-polynomial-time\index{quantum-polynomial-time}\index{polynomial-time!quantum-}
if there is a uniform\footnote{A sequence of circuits $C_k$ is uniform
  if a deterministic Turing machine can output the description of~$C_k$ in time polynomial in $k$.}  sequence of quantum circuits
$C_k$ such that for all $k$, the circuit $C_k$ implements the
superoperator~$\calE^{(k)}_M$.

\paragraph{Network execution.}
The state space $\calH_\bfN$ for a network $N$ is defined as
$\calH_\bfN:=
\Hclass\otimes\Hquant\otimes\bigotimes_{\id\in\ids_\bfN}\Hstate_\id$
with $\Hstate_\id,\Hclass,\Hquant:=\setC^{\bits*}$. Here $\Hstate_\id$
represents the local state of the machine with identity $\id$ and
$\Hclass$ and $\Hquant$ represent the state spaces used for
communication. ($\Hclass$ and $\Hquant$ are shared between all
machines. Since only one machine is active at a time, no conflicts
occur.)

A step in the execution of $\bfN$ is defined by a superoperator
$\calE:=\calE_\bfN^{(k)}$ operating on $\calH_\bfN$. This
superoperator performs the following steps: First, $\calE$ measures
$\Hclass$ in the computational basis, and parses the outcome as
$(\id_\mathit{sender},\id_\mathit{rcpt},m)$.  Let $M$ be the machine
in $\bfN$ with identity $\id_\mathit{rcpt}$.  Then $\calE$ applies
$\calE_M^{(k)}$ to $\Hstate_\id\otimes\Hclass\otimes\Hquant$.  Then
$\calE$ measures $\Hclass$ and parses the outcome as
$(\idsender',\idrcpt',m')$.  If the outcome could not be parsed, or if
$\idsender'\neq\idrcpt$, initialize $\Hclass$ with
${(\varepsilon,\mathtt{environment},\varepsilon)}$ and $\Hquant$
with $\varepsilon$. (This ensures
that the environment is activated if a machine sends no or an
ill-formed message.) 

The output of the network $\bfN$ on input $z$ and security parameter
$k$ is described by the following algorithm: Let
$\rho\in\calP(\calH_\bfN)$ be the state that is initialized to
${(\varepsilon,\environment,z)}$ in $\Hclass$, and to the empty
word $\varepsilon$ in all other registers. Then repeat the
following indefinitely: Apply $\calE_\bfN^{(k)}$ to $\rho$.  Measure
$\Hclass$. If the outcome is of the form
$(\environment,\varepsilon,\mathit{out})$, return $\mathit{out}$ and
terminate. Otherwise, continue the loop. The probability distribution
of the return value $\mathit{out}$ is denoted by
$\Exec_\bfN(k,z)$. 

\paragraph{Corruptions.}\index{corruption}
To model corruptions, we introduce \emph{corruption
  parties}\index{corruption party}\index{party!corruption}, special
machines that follow the instructions given by the adversary.  When
invoked, the corruption party\pagelabel{page:corruption.party}~$P^C_\id$ with identity $\id$
measures $\Hclass$ and parses the outcome as $(\idsender,\idrcpt,m)$.
If $\idsender=\adversary$, $\Hclass$ is initialized with
${m}$. (In this case, $m$ specifies both the message and the
sender/recipient. Thus the adversary can instruct a corruption party
to send to arbitrary recipients.)  Otherwise, $\Hclass$ is initialized with
$(\id,\adversary,(\idsender,\idrcpt,m))$. (The message
is forwarded to the adversary.) Note that, since $P^C_\id$ does not
touch the $\Hquant$, the quantum part of the message is forwarded.
\fullonly\par
Given a network $\bfN$, and a set of identities $C$, we write
$\bfN^C$ for the set resulting from replacing each machine
$M\in\bfN$ with identity $\id\in C$ by $P^C_\id$.

\paragraph{Security model.} A protocol\index{protocol} $\pi$ is a set of machines with
$\mathtt{environment},\mathtt{adversary}\notin\ids(\pi)$.  We assume a
set of identities $\parties_\pi\subseteq\ids(\pi)$\index{party} to be associated
with $\pi$. $\parties_\pi$ denotes which of the machines in the
protocol are actually protocol parties (as opposed to incorruptible
entities such as ideal functionalities).  

An environment is a machine with identity
$\mathtt{environment}$, an adversary or a simulator is a machine with
identity $\mathtt{adversary}$ (there is no formal distinction between
adversaries and simulators, the two terms refer to different
intended roles of a machine).

In the following we call two networks
indistinguishable\index{indistinguishability!of networks}
if there is a negligible function $\mu$ such that
for all $z\in\bits*$ and $k\in\setN$,
$ \abs{ \Pr[\Exec_N(k,z)=1] - \Pr[\Exec_M(k,z)=1] } \leq \mu(k)$.
We speak of perfect
indistinguishability\index{indistinguishability!perfect}\index{perfect
  indistinguishability} if $\mu=0$.

\begin{definition}[Statistical quantum-UC-security]\label{def:stat.quc}\index{UC!statistical quantum}\index{statistical quantum
    UC}\index{quantum-UC!statistical}
  Let protocols $\pi$ and $\rho$ be given. We say $\pi$
  \emph{statistically quantum-UC-emulates $\rho$}%
  \index{quantum-UC-emulate!statistically}%
  \index{UC-emulate!statistically quantum-}%
  \index{emulate!statistically quantum-UC-}%
  \index{statistically quantum-UC-emulate}
  iff for every set
  $C\subseteq \parties_\pi$ and for every adversary $\Adv$ there is a
  simulator $\Sim$ such that for every environment $\calZ$, the networks
  $\pi^C\cup\{\Adv,\calZ\}$ (called the \index{model!real}\index{real
    model}real model) and $\rho^C\cup\{\Sim,\calZ\}$ (called the \index{model!ideal}\index{ideal
    model}ideal model) are
  indistinguishable.  We furthermore require that if $\Adv$ is
  quantum-polynomial-time, so is $\Sim$.
\end{definition}

\begin{definition}[Computational quantum-UC-security]\index{UC!computational quantum}\index{computational quantum
    UC}\index{quantum-UC!computational}\index{Universal Composability|see{UC}}
  Let protocols $\pi$ and $\rho$ be given. We say $\pi$
  \emph{computationally quantum-UC-emulates $\rho$}%
  \index{quantum-UC-emulate!computationally}%
  \index{UC-emulate!computationally quantum-}%
  \index{emulate!computationally quantum-UC-}%
  \index{computationally quantum-UC-emulate}
  iff for every set $C\subseteq \parties_\pi$
  and for every quantum-polynomial-time adversary $\Adv$ there is a
  quantum-polynomial-time simulator $\Sim$ such that for every
  quantum-polynomial-time environment $\calZ$, the networks
  $\pi^C\cup\{\Adv,\calZ\}$ and $\rho^C\cup\{\Sim,\calZ\}$ are
  indistinguishable.
\end{definition}

\noindent Note that although $\Exec_{\pi^C\cup\{\Adv,\calZ\}}(k,z)$ may return
arbitrary bitstrings, we only compare whether the return value of
$\calZ$ is $1$ or not. This effectively restricts $\calZ$ to returning
a single bit. This can be done without loss of generality (see
\cite{Canetti:2001:Security} for a discussion this issue; their
arguments also apply to the quantum case) and simplifies the definition.

In our framework, any communication between two parties is perfectly
secure since the network model guarantees that they are delivered to the
right party and not leaked to the adversary. To model a protocol with
insecure channels\index{channel!insecure}\index{insecure channel}
instead, one would explicitly instruct the protocol parties to send
all messages through the adversary. Authenticated
channels\index{authenticated channel}\index{channel!authenticated} can
be realized by introducing an ideal functionality (see the next
section) that realizes an authenticated channel. For simplicity, we
only consider protocols with secure
channels\index{channel!secure}\index{secure channel} in this work.

\fullshort\subsection\dotparagraph{Ideal functionalities}
In most cases, the behavior of the ideal model is described by a
single machine $\calF$, the so-called ideal
functionality\index{functionality}\index{ideal functionality|see{functionality}}. We can
think of this functionality as a trusted third party that perfectly
implements the desired protocol behavior. For example, the
functionality $\FOT$ for oblivious transfer would take as input from
Alice two bitstrings $m_0,m_1$, and from Bob a bit $c$, and send to
Bob the bitstring $m_c$. Obviously, such a functionality constitutes a
secure oblivious transfer. We can thus define a protocol $\pi$ to be a
secure OT protocol if $\pi$ quantum-UC-emulates $\FOT$ where $\FOT$
denotes the protocol consisting only of one machine, the functionality
$\FOT$ itself. There is, however, one technical difficulty here. In
the real protocol $\pi$, the bitstring $m_c$ is sent to the
environment $\calZ$ by Bob, while in a the ideal model, $m_c$ is sent
by the functionality.  Since every message is tagged with the sender
of that message, $\calZ$ can distinguish between the real and the
ideal model merely by looking at the sender of $m_c$. To solve this
issue, we need to ensure that $\calF$ sends the message $m_c$ in the
name of Bob (and for analogous reasons, that $\calF$ receives messages
sent by $\calZ$ to Alice or Bob). To achieve this, we use so-called
dummy-parties \cite{Canetti:2001:Security} in the
ideal model. These are parties with the identities of Alice and Bob
that just forward messages between the functionality and the
environment. 
\begin{definition}[Dummy-party]\label{def:dummy.party}\index{dummy-party}\index{party!dummy-}
  Let a machine $P$ and a functionality $\calF$ be given.  The
  dummy-party $\Tilde P$ for $P$ and~$\calF$
  is a machine that has the same identity as
  $P$ and has the following state transition operator: Let $\id_\calF$
  be the identity of $\calF$. When activated, measure $\Hclass$. If
  the outcome of the measurement is of the form
  $(\mathtt{environment},\id_P,m)$, initialize $\Hclass$ with
  $(\id_P,\id_\calF,m)$. If the outcome is of the form
  $(\id_\calF,\id_P,m)$, initialize $\Hclass$ with
  $(\id_P,\mathtt{environment},m)$. In all cases, the quantum communication
  register is not modified (i.e., the message in that register is
  forwarded).
\end{definition}
Note the strong analogy to the corruption parties
(\autopageref{page:corruption.party}).

Thus, if we write $\pi$ quantum-UC-emulates $\calF$, we mean that
$\pi$ quantum-UC-emulates $\rho_\calF$ where $\rho_\calF$ consists of
the functionality $\calF$ and the dummy-parties corresponding to the
parties in $\pi$. More precisely:
\begin{definition}
  Let $\pi$ be a protocol and $\calF$ be a functionality.  We say that
  $\pi$ statistically/computationally quantum-UC-emulates $\calF$ if
  $\pi$ statistically/computationally quantum-UC-emulates $\rho_\calF$
  where $\rho_\calF:=\{\Tilde P:P\in\parties_\pi\}\cup\{\calF\}$.
\end{definition}

\noindent For more discussion of dummy-parties and functionalities, see \cite{Canetti:2001:Security}.

Using the concept of an ideal functionality, we can specify a range of
protocol tasks by simply defining the corresponding
functionality. Below, we give the definitions of various
functionalities. All these functionalities are classical, we therefore
do not explicitly describe when the registers $\Hclass$ and $\Hquant$
are measured/initialized but instead describe the functionality in
terms of the messages sent and received.

\begin{definition}[Commitment]\label{def:com}\index{commitment!functionality}\index{functionality!commitment}
  Let $A$ and $B$ be two parties.  The functionality
  $\FCOM^{B\rightarrow A,\ell}$ behaves as follows: Upon (the first)
  input $(\mathtt{commit},x)$ with $x\in\{0,1\}^{\ell(k)}$ from $B$, send $\mathtt{committed}$ to
  $A$. Upon input $\mathtt{open}$ from $B$ send $(\mathtt{open},x)$ to $A$. 
  All communication/input/output is classical.
  \fullonly\par
  We call $B$ the sender and $A$ the recipient.
\end{definition}

\begin{definition}[Oblivious transfer
  (OT)]\label{def:ot}\index{OT!functionality}\index{oblivious transfer|see{OT}}\index{functionality!OT}
  Let $A$ and $B$ be two parties.  The functionality
  $\FOT^{A\rightarrow B,\ell}$ behaves as follows: When receiving input
  $(s_0,s_1)$ from $A$ with $s_0,s_1\in\{0,1\}^{\ell(k)}$ and $c\in\bit$
  from $B$, send $s:=s_c$ to $B$.  All communication/input/output is
  classical.
  \fullonly\par
  We call $A$ the sender and $B$ the recipient.\footnote{We used $A$
    as the sender in the description of the OT functionality, and as
    the recipient in the description of the commitment
    functionality. We do so to simplify notation later; our
    protocol for OT from $A$ to $B$ will use a commitment from $B$ to
    $A$.}
\end{definition}

\begin{definition}[Randomized oblivious transfer
  (ROT)]\label{def:rot}\index{randomized
    OT!functionality}\index{functionality!randomized OT}
  Let $A$ and $B$ be two parties.  The functionality
  $\FROT^{A\rightarrow B,\ell}$ behaves as follows: If $A$ is
  uncorrupted, when receiving input $c\in\bit$ from $B$, choose
  $s_0,s_1\in\bits{\ell(k)}$ uniformly and send $(s_0,s_1)$ to $A$ and
  $s:=s_c$ to $B$.  If $A$ is corrupted, when receiving input
  $(s_0,s_1)$ from $A$ with $s_0,s_1\in\{0,1\}^{\ell(k)}$ and
  $c\in\bit$ from $B$, send $s:=s_c$ to $B$.  All
  communication/input/output is classical.
\end{definition}

\fullonly{\subsection{Elementary properties of UC-security}
\label{sec:prop-uc-secur}}

\delaytext{lemma:ref.trans}{
\begin{lemma}[Reflexivity, transitivity]\label{lemma:ref.trans}\index{reflexivity}\index{transitivity}
  Let $\pi$, $\rho$, and $\sigma$ be protocols. Then $\pi$
  quantum-UC-emulates $\pi$. If $\pi$ quantum-UC-emulates $\rho$
  and $\rho$ quantum-UC-emulates $\sigma$, then $\pi$
  quantum-UC-emulates $\sigma$.
  \fullonly\par
  This holds both for statistical and computational
  quantum-UC-security.
\end{lemma}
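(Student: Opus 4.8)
The plan is to dispatch reflexivity immediately and then build the transitivity simulator by composing the two given simulators. For reflexivity, fix a corruption set $C\subseteq\parties_\pi$ and an adversary $\Adv$, and simply take $\Sim:=\Adv$. Since here $\rho=\pi$, the ideal network $\pi^C\cup\{\Sim,\calZ\}$ is literally the same network as the real network $\pi^C\cup\{\Adv,\calZ\}$, so the two executions coincide and are perfectly indistinguishable. The requirement that $\Sim$ be quantum-polynomial-time whenever $\Adv$ is holds trivially because $\Sim=\Adv$.

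For transitivity, suppose $\pi$ quantum-UC-emulates $\rho$ and $\rho$ quantum-UC-emulates $\sigma$, and fix $C$ and an adversary $\Adv$. First I would invoke the first emulation, obtaining a simulator $\Sim_1$ such that for \emph{every} environment $\calZ$ the networks $\pi^C\cup\{\Adv,\calZ\}$ and $\rho^C\cup\{\Sim_1,\calZ\}$ are indistinguishable. The key move is then to regard $\Sim_1$ as an adversary against $\rho$ and apply the second emulation, which yields a simulator $\Sim_2$ such that for every $\calZ$ the networks $\rho^C\cup\{\Sim_1,\calZ\}$ and $\sigma^C\cup\{\Sim_2,\calZ\}$ are indistinguishable. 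I would then declare $\Sim_2$ to be the simulator for $\Adv$. To conclude, for each fixed $\calZ$ let $\mu_1,\mu_2$ be the negligible bounds from the two steps; by the triangle inequality applied to $\babs{\Pr[\Exec_N(k,z)=1]-\Pr[\Exec_M(k,z)=1]}$, the advantage between $\pi^C\cup\{\Adv,\calZ\}$ and $\sigma^C\cup\{\Sim_2,\calZ\}$ is bounded by $\mu_1+\mu_2$, which is again negligible. The polynomial-time clause propagates along the same chain: if $\Adv$ is quantum-polynomial-time then so is $\Sim_1$ by the first emulation, and hence so is $\Sim_2$ by the second.

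The main point requiring care — and essentially the only obstacle — is the order of quantifiers in \autoref{def:stat.quc}. The argument works precisely because the simulator $\Sim_1$ produced in the first step is a \emph{single} machine that is good against \emph{all} environments; this is what licenses plugging $\Sim_1$ in as one fixed adversary when applying the second emulation, whose resulting $\Sim_2$ is then again uniform over all $\calZ$. Were the definition to allow $\Sim_1$ to depend on $\calZ$, the composition would not type-check. A secondary bookkeeping remark is that the same corruption set $C$ must be a legitimate set of corruptible parties for all three protocols (i.e.\ $C\subseteq\parties_\pi\cap\parties_\rho\cap\parties_\sigma$), which is implicit in the standing convention that $\pi$, $\rho$, and $\sigma$ share a common party structure; granting this, the chaining above goes through verbatim, and the identical argument works in the computational setting by everywhere restricting $\Adv$, $\Sim_1$, $\Sim_2$, and $\calZ$ to quantum-polynomial-time machines.
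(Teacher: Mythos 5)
Your proposal is correct and follows essentially the same route as the paper's own proof: reflexivity via $\Sim:=\Adv$ giving literally identical networks, and transitivity by treating the first (environment-independent) simulator as an adversary against $\rho$, applying the second emulation, and concluding by the triangle inequality, with the polynomial-time clause propagating along the chain. Your added remarks on the quantifier order and on $C$ being a valid corruption set for all three protocols are sound bookkeeping observations that the paper leaves implicit.
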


\begin{proof}
  We first consider the case of statistical quantum-UC-security.

  For any adversary $\Adv$ and any set $C$, with $\Sim:=\Adv$, we have that
  $\pi^C\cup\{\Adv,\calZ\}$ and $\pi^C\cup\{\Sim,\calZ\}$ are equal
  and hence perfectly indistinguishable for all $\calZ$. If $\Adv$ is
  quantum-polynomial-time, so is $\Sim=\Adv$. Thus $\pi$
  quantum-UC-emulates $\rho$. 

  Assume that $\pi$ quantum-UC-emulates $\rho$ and $\rho$
  quantum-UC-emulates $\sigma$. Fix an adversary $\Adv$ and a set $C$. Then there is
  a simulator $\Sim$ such that for all $\calZ$,
  $\pi^C\cup\{\Adv,\calZ\}$ and $\rho^C\cup\{\Sim,\calZ\}$ are
  indistinguishable. Furthermore, for the adversary $\Adv':=\Sim$,
  there is a simulator $\Sim'$ such that
  $\rho^C\cup\{\Sim,\calZ\}=\rho^C\cup\{\Adv',\calZ\}$ and
  $\sigma^C\cup\{\Sim',\calZ\}$ are indistinguishable for all $\calZ$.
  Since indistinguishability is transitive, 
  $\pi^C\cup\{\Adv,\calZ\}$ and $\sigma^C\cup\{\Sim',\calZ\}$ are
  indistinguishable for all $\calZ$.  Finally, if $\Adv$ is quantum-polynomial-time,
  so is $\Adv'=\Sim$, and thus also $\Sim'$.  Thus $\pi$ quantum-UC-emulates~$\sigma$.

  In the case of computational quantum-UC-security, the proof is
  identical, except that we quantify over quantum-polynomial-time
  $\Adv$ and $\calZ$.
  \qed
\end{proof}
}
\fullonly{\usedelayedtext{lemma:ref.trans}}

\paragraph{Dummy-adversary.}\index{dummy-adversary}\index{adversary!dummy}
In the definition of UC-security, we have three entities interacting
with the protocol: the adversary, the simulator, and the
environment. Both the adversary and the environment are
all-quantified, hence we would expect that they do, in some sense,
work together. This intuition is backed by the following fact which
was first noted by Canetti \cite{Canetti:2001:Security}: Without loss
of generality, we can assume an adversary that is completely controlled by
the environment. This so-called dummy-adversary only forwards messages
between the environment and the protocol. The actual attack is then
executed by the environment.

\begin{definition}[Dummy-adversary $\Advd$]\label{def:adv.dummy}
  When activated, the dummy-ad\-ver\-sary $\Advd$ measures $\Hclass$;
  call the outcome $m$. If $m$ is of the form
  $(\environment,\penalty0\adversary,\penalty0 m')$, initialize $\Hclass$ with
  $m'$. Otherwise initialize $\Hclass$ with
  $(\adversary,\penalty0\environment,\penalty0m)$.  In all cases, the quantum
  communication register is not modified (i.e., the message in that
  register is forwarded).
\end{definition}

\noindent Note the strong analogy to the dummy-parties
(\autoref{def:dummy.party}) and the corruption parties
(\autopageref{page:corruption.party}).

\begin{lemma}[Completeness of the
  dummy-adversary]\label{lemma:dummy.complete}\index{dummy-adversary!completeness
  of}
  Assume that $\pi$ quantum-UC-emulates~$\rho$ with respect to the
  dummy-adversary (i.e., instead of quantifying over all adversaries
  $\Adv$, we fix $\Adv:=\Advd$). Then $\pi$ quantum-UC-emulates~$\rho$.
  \fullonly\par
  This holds both for statistical and computational quantum-UC-security.
\end{lemma}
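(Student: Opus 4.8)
The plan is to follow the standard argument that the dummy-adversary is complete: any real-world attacker $\Adv$ splits into a pure ``relay part'', which is exactly what $\Advd$ does, and an ``attack-logic part'', which can be absorbed into the environment. Fixing a corruption set $C\subseteq\parties_\pi$ and an arbitrary adversary $\Adv$, I would build the simulator $\Sim$ for $\Adv$ out of $\Adv$ itself together with the simulator $\Sim'$ that the hypothesis already supplies for $\Advd$.

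First I would define, for each environment $\calZ$, a combined environment $\calZ_\Adv$ that internally runs copies of $\calZ$ and of $\Adv$, wired together exactly as in the real model: messages $\calZ$ addresses to the adversary are fed to the internal $\Adv$, and $\Adv$'s replies are delivered back to the internal $\calZ$. The only external interface of $\calZ_\Adv$ is the adversary interface of its internal $\Adv$, and this interface is connected to $\Advd$. Because $\Advd$ (\autoref{def:adv.dummy}) transparently forwards every message in both directions ---the environment supplies the full routing $(\idsender,\idrcpt,m')$ that $\Advd$ relays to the protocol, and protocol messages addressed to the adversary are relayed back verbatim, with the quantum register untouched--- the network $\pi^C\cup\{\Advd,\calZ_\Adv\}$ produces exactly the same execution as $\pi^C\cup\{\Adv,\calZ\}$. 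I would verify this equality of $\Exec$ distributions by comparing the two step operators message by message; the only point to check is that $\Advd$'s forwarding convention lets $\calZ_\Adv$ make $\Adv$ send to arbitrary recipients (including the corruption parties $P^C_\id$) with arbitrary sender tags, which it does.

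Next I would invoke the hypothesis. Since $\pi$ quantum-UC-emulates $\rho$ with respect to $\Advd$, there is a single simulator $\Sim'$ (depending only on $C$) such that for every environment, in particular for every $\calZ_\Adv$, the networks $\pi^C\cup\{\Advd,\calZ_\Adv\}$ and $\rho^C\cup\{\Sim',\calZ_\Adv\}$ are indistinguishable. Crucially, $\Sim'$ communicates with $\calZ_\Adv$ through precisely the dummy-adversary interface, i.e.\ the same interface the internal $\Adv$ was using. I can therefore ``cut'' along that interface and define $\Sim$ to be the machine obtained by running $\Adv$ and $\Sim'$ side by side, connecting $\Adv$'s protocol-facing interface to $\Sim'$'s environment-facing interface; the composite talks to $\calZ$ through $\Adv$'s original environment interface and to $\rho^C$ through $\Sim'$'s protocol interface. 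By the same transparent-relay bookkeeping as before, $\rho^C\cup\{\Sim,\calZ\}$ and $\rho^C\cup\{\Sim',\calZ_\Adv\}$ yield identical executions. Chaining the three relations shows $\pi^C\cup\{\Adv,\calZ\}$ and $\rho^C\cup\{\Sim,\calZ\}$ indistinguishable for every $\calZ$, as required; these four networks are exactly the hybrids (\rmOne)--(\rmFour).

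The polynomial-time clause is then immediate: $\Advd$ is quantum-polynomial-time, so the hypothesis yields a quantum-polynomial-time $\Sim'$; hence if $\Adv$ is quantum-polynomial-time, the composite $\Sim$ built from $\Adv$ and $\Sim'$ is too, and $\calZ_\Adv$ is quantum-polynomial-time whenever $\calZ$ is, so the computational version goes through with all quantifiers restricted to quantum-polynomial-time machines. The only place where real work hides is the first step: showing that folding $\Adv$ into the environment and re-routing through $\Advd$ reproduces the original execution \emph{exactly}, not merely approximately. This is where one must check that no information is lost or re-ordered by the relay, that the quantum part of each message is carried through unmodified, and that the identity and routing conventions of the network-execution step operator (and of the corruption parties) are respected by $\Advd$'s forwarding rule.
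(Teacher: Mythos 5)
Your proposal is correct and follows essentially the same route as the paper's proof: fold $\Adv$ into the environment to obtain $\calZ_\Adv$ interacting with $\Advd$ (a perfect simulation), apply the hypothesis to get $\Sim'$, and then fold $\Adv$ together with $\Sim'$ into the final simulator $\Sim$, exactly the hybrids (\rmOne)--(\rmFour) of the paper's Figure~\ref{fig:da}. The polynomial-time bookkeeping also matches the paper's treatment of the statistical and computational cases.
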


\delaytext{proof lemma:dummy.complete}{
\fullshort{\begin{proof}}{\begin{proof}[of \autoref{lemma:dummy.complete}]}
  We first consider the case of statistical quantum-UC-security.

  \begin{figure}[t]
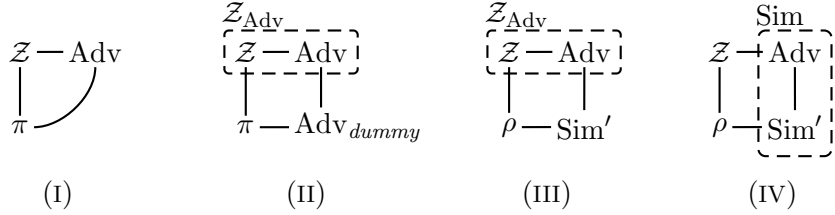

    \centering\figureDummyAdversary
    \caption{\label{fig:da}Completeness of the dummy-adversary: proof steps}
  \end{figure}

  Assume that $\pi$ statistically quantum-UC-emulates $\rho$ with
  respect to the dummy-adversary.  Fix an
  adversary $\Adv$. We have to show that there exists a
  simulator $\Sim$ such that for all
  environments $\calZ$ we have that
  $\pi\cup\{\Adv,\calZ\}$ and $\rho\cup\{\Sim,\calZ\}$ are
  indistinguishable.  Furthermore, if $\Adv$ is
  quantum-polynomial-time, $\Sim$ has to be quantum-polynomial-time,
  too.

  For a given environment $\calZ$, we construct an environment $\calZ_\Adv$ that is supposed to interact
  with $\Advd$ and internally simulates $\calZ$ and $\Adv$, and that
  routes all messages sent by the simulated $\Adv$ to $\pi$ through $\Advd$ and
  vice versa. Then $\pi\cup\{\Adv,\calZ\}$ and
  $\pi\cup\{\Advd,\calZ_\Adv\}$ are perfectly
  indistinguishable. (Cf.~networks (\rmOne) and (\rmTwo) in \autoref{fig:da}.) Since $\pi$ statistically
  quantum-UC-emulates $\rho$ with respect to the dummy-adversary, we
  have that $\pi\cup\{\Advd,\calZ_\Adv\}$ and
  $\rho\cup\{\Sim',\calZ_\Adv\}$ are indistinguishable for some
  $\Sim'$ and all
  $\calZ$. (Cf.~networks (\rmTwo) and~(\rmThree).) Since $\Advd$ is quantum-polynomial-time, so is
  $\Sim'$.
  We construct a machine $\Sim$ that internally simulates
  $\Sim'$ and $\Adv$ (network (\rmFour)). Then $\rho\cup\{\Sim',\calZ_\Adv\}$ and
  $\rho\cup\{\Sim,\calZ\}$ are perfectly indistinguishable. 
  Summarizing, 
  $\pi\cup\{\Adv,\calZ\}$ and $\rho\cup\{\Sim,\calZ\}$ are
  indistinguishable for all environments  $\calZ$. Furthermore, since $\Sim'$ is
  quantum-polynomial-time, we have that $\Sim$ is
  quantum-polynomial-time if $\Adv$ is. 
  This concludes the proof in the case of statistical quantum-UC-security.

  The proof in the case of computational quantum-UC-security is
  identical, except that we consider only quantum-polynomial-time $\Adv$ and
  $\calZ$, and thus have that $\calZ_\Adv$, $\Sim'$, and $\Sim$ are
  quantum-polynomial-time.  \qed
\end{proof}
}
\fullonly{\usedelayedtext{proof lemma:dummy.complete}}

\begin{shortversion}
  \noindent The proof of \autoref{lemma:dummy.complete} is very similar to that given in
  \cite{Canetti:2001:Security} and given in Appendix~\ref{app:proofs:sec:quc}.
\end{shortversion}

\fullshort\subsection\dotparagraph{Universal composition}
\fullonly{\label{sec:comp}}%
For some protocol $\sigma$, and some protocol~$\pi$, by $\sigma^\pi$
we denote the protocol where $\sigma$ invokes (up to polynomially
many) instances of $\pi$. That is, in $\sigma^\pi$ the machines from~$\sigma$ and from $\pi$ run together in one network, and the machines
from $\sigma$ access the inputs and outputs of $\pi$. (That is,
$\sigma$ plays the role of the environment from the point of view of~$\pi$. In particular,
$\calZ$ then talks only to $\sigma$ and not to the subprotocol $\pi$
directly.)   A typical
situation would be that $\sigma^\calF$ is some protocol that makes use
of some ideal functionality~$\calF$, say a commitment functionality, and then
$\sigma^\pi$ would be the protocol resulting from implementing that
functionality with some protocol $\pi$, say a commitment protocol. 
(We say that $\sigma^\calF$ is a protocol  in the $\calF$-hybrid
model\index{hybrid model}\index{model!hybrid}.)
One
would hope that such an implementation results in a secure protocol
$\sigma^\pi$. That is, we hope that if $\pi$ quantum-UC-emulates $\calF$ and~$\sigma^\calF$
quantum-UC-emulates $\calG$, then~$\sigma^\pi$ quantum-UC-emulates $\calG$. Fortunately,
this is the case:

  \begin{theorem}[Universal Composition
    Theorem]\label{theo:comp}\index{composition theorem}
    Let $\pi$, $\rho$, and $\sigma$ be quantum-polynomial-time
    protocols. Assume that $\pi$ quantum-UC-emulates $\rho$. Then
    $\sigma^\pi$ quantum-UC-emulates~$\sigma^\rho$.
    \fullonly\par
    This holds both for statistical and computational
    quantum-UC-security.
  \end{theorem}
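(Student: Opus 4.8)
The plan is to follow Canetti's template, reducing to the dummy-adversary and then replacing the instances of $\pi$ by $\rho$ one at a time in a hybrid argument. First I would invoke \autoref{lemma:dummy.complete}: it suffices to show that $\sigma^\pi$ quantum-UC-emulates $\sigma^\rho$ \emph{with respect to the dummy-adversary} $\Advd$, so from now on the adversary attacking $\sigma^\pi$ is fixed to $\Advd$. Since $\pi$ quantum-UC-emulates $\rho$, specializing the assumption to the dummy-adversary yields, for every corruption set $C'\subseteq\parties_\pi$, a simulator $\Sim_{C'}$ together with a negligible bound $\mu_{C'}$ such that $\pi^{C'}\cup\{\Advd,\calZ\}$ and $\rho^{C'}\cup\{\Sim_{C'},\calZ\}$ are indistinguishable with advantage at most $\mu_{C'}(k)$ for \emph{all} environments $\calZ$. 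Crucially this bound is uniform in $\calZ$, and in the computational case each $\Sim_{C'}$ is quantum-polynomial-time.

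Let $m=m(k)$ be the polynomial bound on the number of instances of $\pi$ invoked by $\sigma$. I would define hybrid networks $H_0,\dots,H_m$, where $H_i$ runs the first $i$ instances as $\rho$ equipped with the corresponding simulator pieces $\Sim_{C_j}$ and the remaining $m-i$ instances as $\pi$ under the dummy-adversary's forwarding. Thus $H_0$ is the real model $(\sigma^\pi)^C\cup\{\Advd,\calZ\}$ and $H_m$ is the ideal model $(\sigma^\rho)^C\cup\{\Sim_\sigma,\calZ\}$, where the combined simulator $\Sim_\sigma$ performs the dummy routing on the $\sigma$-part and runs one copy of the appropriate $\Sim_{C_j}$ per replaced instance. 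Note that $\Sim_\sigma$ is built from the fixed single-instance simulators and is therefore defined independently of $\calZ$, so the quantifier order of \autoref{def:stat.quc} is respected.

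The heart of the argument is showing that consecutive hybrids $H_{i-1}$ and $H_i$ are indistinguishable. For this I would build an environment $\calZ_{\sigma,i}$ that internally simulates $\calZ$, the machines of $\sigma$, the dummy-adversary routing, and all instances except the $i$-th (the first $i-1$ as $\rho$ with $\Sim$, the last $m-i$ as $\pi$ with $\Advd$), while exposing the interface of the single $i$-th instance to the outside. When $\calZ_{\sigma,i}$ is connected to $\pi^{C_i}\cup\{\Advd\}$ its overall execution reproduces exactly $H_{i-1}$, and when connected to $\rho^{C_i}\cup\{\Sim_{C_i}\}$ it reproduces $H_i$; hence the single-instance assumption bounds the distinguishing advantage between $H_{i-1}$ and $H_i$ by $\mu_{C_i}(k)$. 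Summing over the $m$ steps via the triangle inequality gives $\abs{\Pr[\Exec_{H_0}(k,z)=1]-\Pr[\Exec_{H_m}(k,z)=1]}\le m(k)\cdot\mu(k)$, where $\mu:=\max_{C'}\mu_{C'}$ is the maximum over the (constantly many) corruption patterns of $\parties_\pi$; since $m$ is polynomial and $\mu$ negligible, the right-hand side is negligible, establishing that $H_0$ and $H_m$ are indistinguishable. Applying \autoref{lemma:dummy.complete} once more lifts the conclusion from the dummy-adversary to arbitrary adversaries. The computational case is identical once one checks that $\calZ_{\sigma,i}$ is quantum-polynomial-time, which holds because $\sigma$, $\pi$, $\rho$, and each $\Sim_{C_j}$ are, and $m$ is polynomial.

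The main obstacle I anticipate lies in the step relating $H_{i-1}$ and $H_i$: one must verify that $\calZ_{\sigma,i}$ faithfully reproduces both hybrids when plugged into the external single instance. This is exactly where fixing the dummy-adversary pays off, since its pure message-forwarding behaviour lets the environment cleanly \emph{cut} the network at the $i$-th instance and absorb everything else into itself. A secondary point worth stressing is that the argument is entirely straight-line and uses no rewinding, which is why the classical proof transfers essentially verbatim to the quantum setting; the only genuinely quantum ingredient is that $\calZ_{\sigma,i}$ may need to store and forward quantum messages, which the machine model already permits.
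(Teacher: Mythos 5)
Your overall architecture --- specializing the assumption on $\pi$ to the dummy-adversary to obtain a single-instance simulator, replacing the instances one at a time, and cutting the network at the $i$-th instance with a hybrid environment $\calZ_{\sigma,i}$ --- is exactly the paper's. But there is a genuine gap at the point where you sum the hybrid losses. You assert that the single-instance assumption yields a negligible bound $\mu_{C'}$ that is ``uniform in $\calZ$''. It does not: \autoref{def:stat.quc} quantifies ``for every environment $\calZ$, the networks are indistinguishable'', and indistinguishability of two networks only asserts the \emph{existence} of a negligible function, which may depend on $\calZ$. Your hybrid environments $\calZ_{\sigma,1},\dots,\calZ_{\sigma,m(k)}$ are $m(k)$ \emph{different} environments, with $m(k)$ growing in $k$, so a priori you get a different negligible function $\mu_i$ for each, and $\sum_{i\le m(k)}\mu_i(k)$ need not be negligible (each $\mu_i$ could, say, only start decaying at $k=i$). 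Taking the maximum over the constantly many corruption patterns does not address this; the problematic dependence is on the environment, not on $C'$.

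The paper closes exactly this hole with one extra device: it defines a \emph{single} environment $\calZ_\sigma$ that receives the hybrid index as part of its auxiliary input $(i,z)$ and then runs $\calZ_{\sigma,i}$ on input $z$. Applying the single-instance guarantee to this one environment produces one negligible $\mu$ bounding the advantage simultaneously for all $i$, after which the triangle inequality legitimately yields the $n\cdot\mu(k)$ bound. Everything else in your proposal --- the perfect indistinguishability of $\pi\cup\{\Advd,\calZ_{\sigma,i}\}$ and $\rho\cup\{\Sim',\calZ_{\sigma,i+1}\}$, the independence of the combined simulator from $\calZ$, and the polynomial-time checks --- matches the paper; whether you reduce the outer adversary to $\Advd$ via \autoref{lemma:dummy.complete} or, as the paper does, wire an arbitrary $\Adv$ directly into the constructed simulator is immaterial.
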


\noindent If we additionally have that $\sigma$ quantum-UC-emulates $\calG$,
from the transitivity of quantum-UC-emulation
(\autoref{lemma:ref.trans}\shortonly{ in Appendix~\ref{app:proofs:sec:quc}}), it immediately follows that $\sigma^\pi$
quantum-UC-emulates~$\calG$.

  \begin{fullversion}
    The compositionality guarantee given by \autoref{theo:comp} is
    often called \emph{universal composability}. One should not
    confuse universal composability with UC-security. Although UC
    security implies universal composability, it has been shown by
    Hofheinz and Unruh \cite{Hofheinz:2005:Comparing,
      Hofheinz:2006:Concurrent, Unruh:2006:Protokollkomposition} that~-- in the classical setting at least --
    universal composability is a strictly weaker notion than UC
    security.
  \end{fullversion}

  \delaytext{proof theo:comp}{
  \begin{figure}[t]
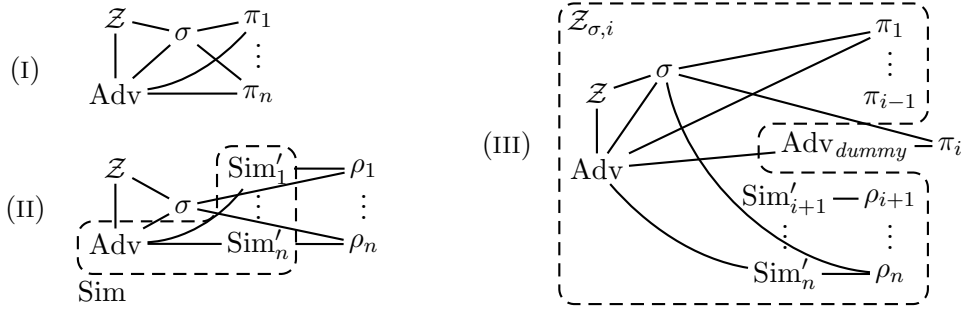

    \centering\figureComposition
    \caption{\label{fig:comp}Networks occurring in the proof sketch of
      \autoref{theo:comp}. Network (\rmOne) represents the real model,
      (\rmTwo) the ideal model, and (\rmThree) the hybrid case. To
      avoid cluttering, in (\rmThree), the connections to $\pi_{i-1}$,
      $\Sim'_{i+1}$, and $\rho_{i+1}$ have been omitted.}
  \end{figure}

  \begin{proof}[of \autoref{theo:comp}]
    We first show  \autoref{theo:comp} for the case of computational quantum-UC-security. Thus,
    our goal is to prove that under the assumptions of
    \autoref{theo:comp}, $\sigma^\pi$ computationally quantum-UC-emulates
    $\sigma^\rho$. 
    Since $\sigma$ is quantum-polynomial-time, $\sigma$ invokes at
    most a polynomial number $n$ of instances of its subprotocol $\pi$
    or $\rho$.
    Since $\pi$
    quantum-UC-emulates $\rho$, there is a quantum-polynomial-time simulator
    $\Sim'$ such that for all environments $\calZ$ we have that
    $\pi\cup\{\Advd,\calZ\}$ and $\rho\cup\{\Sim',\calZ\}$ are
    indistinguishable. In the following, we call $\Sim'$ the
    dummy-simulator.

  Let a quantum-polynomial-time adversary $\Adv$ be given (that is supposed
  to attack~$\sigma^\pi$). We construct a simulator $\Sim$ that
  internally simulates the adversary $\Adv$ and $n$ instances
  $\Sim'_1,\dots,\Sim'_n$ of the dummy-simulator
  $\Sim'$. The simulated adversary $\Adv$ is connected to the
  environment and to the protocol $\sigma$, but all messages between
  $\Adv$ and the $i$-th instance $\pi_i$ of $\pi$ are routed through
  the dummy-simulator-instance $\Sim'_i$ (which is then supposed
  to transform these messages into a form suitable for instances of
  $\rho$). The simulator $\Sim$ is depicted by the dashed box in network
  (\rmTwo) in \autoref{fig:comp}.
  
  We have to show that for any environment $\calZ$ we have that
  $\sigma^\pi\cup\{\Adv,\calZ\}$ and
  $\sigma^\rho\cup\{\Sim,\calZ\}$ are
  indistinguishable, i.e., that the output of $\calZ$ in the networks
  (\rmOne) and (\rmTwo) in \autoref{fig:comp} is statistically
  indistinguishable.

  For this, we construct a hybrid environment $\calZ_{\sigma,i}$. (It
  is depicted as the dashed box in network (\rmThree) in
  \autoref{fig:comp}.) This environment simulates the machines
  $\calZ$, $\Adv$, the protocol $\sigma$, instances
  $\pi_1,\dots,\pi_{i-1}$ of the real protocol~$\pi$, and instances
  $\Sim'_{i+1},\dots,\Sim'_n$ and
  $\rho_{i+1},\dots,\rho_n$ of the dummy-simulator $\Sim'$ and the ideal
  protocol~$\rho$, respectively. The communication between $\calZ$,
  $\Adv$, and $\sigma$ is directly forwarded by $\calZ_{\sigma,i}$. Communication between $\Adv$
  and the $j$-th protocol instance is forwarded as follows: If
  $j<i$, the communication is simply forwarded to $\pi_j$. If
  $j>i$, the communication is routed through the corresponding
  dummy-simulator $\Sim'_j$ (which is then supposed to
  transform these messages into a form suitable for $\rho_i$). And
  finally, if $j=i$, the communication is passed to the adversary/simulator outside of
  $\calZ_{\sigma,i}$. Communication between $\sigma$ and the instances
  of $\pi$ or $\rho$ is directly forwarded.

We will now show that there is a negligible function $\mu$ such that
$\abs{\Pr[\Exec_{\pi\cup\{\Advd,\calZ_{\sigma,i}\}}(k,z)=1]-\Pr[\Exec_{\rho\cup\{\Sim',\calZ_{\sigma,i}\}}(k,z)=1]}\leq\mu(k)$
for any security parameter $k$ and any $i=1,\dots,n$. For this,
construct an environment $\calZ_\sigma$ which expects as its initial
input a pair $(i,z)$, and then runs $\calZ_{\sigma,i}$ with input
$z$. Since $\pi\cup\{\Advd,\calZ\}$ and $\rho\cup\{\Sim',\calZ\}$ are
indistinguishable for all quantum-polynomial-time environments $\calZ$,  there exists a negligible function $\mu$ such
that the difference of
$
\Pr[\Exec_{\pi\cup\{\Advd,\calZ_{\sigma,i}\}}(k,z)=1] =
\Pr[\Exec_{\pi\cup\{\Advd,\calZ_{\sigma}\}}(k,(i,z))=1]$
and
$\Pr[\Exec_{\rho\cup\{\Sim',\calZ_{\sigma,i}\}}(k,z)=1]
=\Pr[\Exec_{\rho\cup\{\Sim',\calZ_{\sigma}\}}(k,(i,z))=1]
$
is bounded by $\mu(k)$ for all~$i,k,z$.

  The game $\Exec_{\pi\cup\{\Advd,\calZ_{\sigma,i}\}}(k,z)$ is depicted
  as network (\rmThree) in \autoref{fig:comp} (except that we denoted
  the external copy of $\pi$ with $\pi_i$). Observe that
  $\Exec_{\rho\cup\{\Sim',\calZ_{\sigma,i+1}\}}(k,z)$ (note the changed
  index $i+1$) contains the same machines as  $\Exec_{\pi\cup\{\Advd,\calZ_{\sigma,i}\}}(k,z)$ (when unfolding the
  simulation performed by $\calZ_{\sigma,i}$ into individual machines)
  up to the fact that the communication with the $i$-th instance of $\pi$
  is routed through the dummy-adversary $\Advd$. However, the
  latter just forwards messages, so
  $\pi\cup\{\Advd,\calZ_{\sigma,i}\}$ and
  $\rho\cup\{\Sim',\calZ_{\sigma,i+1}\}$ are perfectly indistinguishable.

  Using the triangle inequality, it
  follows that 
  $\abs{\Pr[\Exec_{\pi\cup\{\Advd,\calZ_{\sigma,n}\}}(k,z)=1]
    -
    \Pr[\Exec_{\rho\cup\{\Sim',\calZ_{\sigma,1}\}}(k,z)=1]}$ is bounded by
  $n\cdot\mu(k)$ which is negligible.  Moreover,
  $\Exec_{\pi\cup\{\Advd,\calZ_{\sigma,n}\}}(k,z)$ and
  $\Exec_{\sigma^\pi\cup\{\Adv,\calZ\}}(k,z)$ describe the same game (up to
  unfolding of simulated submachines and up to one instance of the
  dummy-adversary). Similarly,
  $\Exec_{\rho\cup\{\Sim',\calZ_{\sigma,1}\}}(k,z)$ and
  $\Exec_{\sigma^\rho\cup\{\Sim,\calZ\}}(k,z)$ describe the same game (up to
  unfolding of simulated submachines).  Thus
  $\babs{\Pr[\Exec_{\sigma^\pi\cup\{\Adv,\calZ\}}(k,z)=1]-
  \Pr[\Exec_{\sigma^\rho\cup\{\Sim,\calZ\}}(k,z)=1]}$ is negligible and thus
  $\sigma^\pi\cup\{\Adv,\calZ\}$ and $\sigma^\rho\cup\{\Sim,\calZ\}$
  are indistinguishable. Furthermore, since $\Adv$ and $\Sim'$ are
  quantum-polynomial-time, so is $\Sim$.

  Since this holds for all $\calZ$, and the construction of $\Sim$
  does not depend on $\calZ$, we have that~$\sigma^\pi$
  computationally quantum-UC-emulates~$\sigma^\rho$.

  The case of statistical quantum-UC-security is shown analogously,
  except that $\Adv$ and $\calZ$ may be unbounded, and $\Sim$ is only
  quantum-polynomial-time if $\Adv$ is.
  \qed
\end{proof}
}
\fullonly{\usedelayedtext{proof theo:comp}}

\begin{shortversion}
  The proof of \autoref{theo:comp} is very similar to that given in
  \cite{Canetti:2001:Security} and given in Appendix~\ref{app:proofs:sec:quc}.
\end{shortversion}

\section{Relating classical and quantum-UC}

We call a machine classical\index{classical!machine}\index{machine!classical} if its state transition operator is
classical. A protocol\index{protocol!classical}\index{classical!protocol} is classical if all its machines are classical.

Using this definition we can reformulate the definition of statistical
classical
UC in our framework.

\begin{definition}[Statistical classical-UC-security]\label{def:stat.class.uc}
  Let protocols $\pi$ and $\rho$ be given. We say $\pi$
  \emph{statistically classical-UC-emulates
    $\rho$}\index{UC!classical}\index{classical!UC} iff for every set
  $C\subseteq \parties_\pi$ and for every classical adversary $\Adv$
  there is a classical simulator $\Sim$ such that for every classical
  environment $\calZ$, $\pi^C\cup\{\Adv,\calZ\}$ and
  $\rho^C\cup\{\Sim,\calZ\}$ are indistinguishable.  We furthermore
  require that if $\Adv$ is probabilistic-polynomial-time, so is $\Sim$.
\end{definition}

\noindent Note that classical statistical UC is essentially the same as the
notion of statistical UC-security defined by Canetti
\cite{Canetti:2001:Security}.\footnote{Details such as the machine
  model and message scheduling are defined differently, of course. But
  since these details also considerably change between different
  versions of the full version \cite{Canetti:2005:Security:Full}, we
  feel justified in saying that the notion of statistical classical UC
  is essentially the same as that formulated by Canetti.} Thus, known
results for statistical UC-security carry over to the setting of
\autoref{def:stat.class.uc}.

The next theorem guarantees that if a classical protocol is
statistically classical UC-secure, then it is also statistically
quantum-UC-secure. This allows, e.g., to first prove the security of a
protocol in the (usually much simpler) classical setting, and then to
compose it with quantum protocols using the universal composition
theorem (\autoref{theo:comp}).

\begin{theorem}[Quantum lifting
  theorem]\label{theo:lift.stat}\index{quantum lifting}\index{lifting!quantum}
  Let $\pi$ and $\rho$ be classical protocols.  Assume that $\pi$
  statistically classical-UC-emulates $\rho$. Then $\pi$ statistically
  quantum-UC-emulates~$\rho$.
\end{theorem}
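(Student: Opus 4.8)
The plan is to show that statistical classical-UC-security implies statistical quantum-UC-security by proving the contrapositive: given a quantum environment (together with a quantum adversary) that distinguishes the real and ideal models, I would extract from it a \emph{classical} environment-adversary pair that distinguishes with the same advantage, contradicting the assumed classical security. Since $\pi$ and $\rho$ are classical protocols, all the protocol machines and corruption parties already apply $\Eclass$-like measurements to their inputs; the only genuinely quantum components in the networks $\pi^C\cup\{\Adv,\calZ\}$ and $\rho^C\cup\{\Sim,\calZ\}$ are $\Adv$ (resp.\ $\Sim$) and $\calZ$. The key structural observation is therefore that the interface between the (classical) protocol and the (quantum) adversary/environment is purely classical, so any quantum coherence the adversary and environment maintain is confined to their own local registers and the quantum communication register $\Hquant$ exchanged among themselves.

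\textbf{Main reduction.} First I would invoke the completeness of the dummy-adversary (\autoref{lemma:dummy.complete}): it suffices to prove quantum-UC-emulation with respect to $\Advd$. This is crucial, because the dummy-adversary merely forwards messages and does no computation of its own, so for the dummy-adversary the ``attack'' lives entirely in $\calZ$. Thus I must produce, for $\Adv=\Advd$, a simulator $\Sim$ such that every quantum environment fails to distinguish $\pi\cup\{\Advd,\calZ\}$ from $\rho\cup\{\Sim,\calZ\}$. Now by the assumed classical security, applied with the classical dummy-adversary $\Advdc$ (which is manifestly classical), there is a classical simulator $\Sim_0$ that fools every \emph{classical} environment. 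I would take $\Sim:=\Sim_0$ and argue that this same classical simulator also fools every \emph{quantum} environment.

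\textbf{Key step.} The heart of the argument is a simulation/coherence lemma: because the protocol $\pi$ (and $\rho$, and the classical simulator) only ever touches the classical register $\Hclass$ and applies a classical superoperator, the quantum register $\Hquant$ and the environment's internal quantum state are simply carried along untouched by the protocol. Concretely, I would show that for a classical protocol the network execution factors as a classical channel on the protocol-adversary interface tensored with the identity on the environment's private quantum workspace. Given a quantum distinguishing environment $\calZ$ with advantage $\delta$, I would construct a classical environment $\calZ'$ that behaves as follows: whatever classical messages $\calZ$ would send to the protocol, $\calZ'$ sends the same (measured) messages; and $\calZ'$ simulates the purely-internal quantum computation of $\calZ$ honestly, which is legitimate for an \emph{unbounded} classical environment since in the statistical setting $\calZ'$ need not be efficient. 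The point is that since the protocol's behavior depends only on the classical transcript, and that transcript is identically distributed whether produced by $\calZ$ or by the measured $\calZ'$, the final decision bit has the same distribution. Hence $\calZ'$ distinguishes $\pi^C\cup\{\Advdc,\calZ'\}$ from $\rho^C\cup\{\Sim_0,\calZ'\}$ with advantage $\delta$, contradicting $\delta\le\mu(k)$ from classical security.

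\textbf{Main obstacle.} The delicate point is justifying that a quantum environment gains no advantage over a classical one \emph{despite} being able to keep the quantum communication register $\Hquant$ in superposition and correlate it across many rounds. I expect the real work to be a careful commutation argument showing that, against a classical protocol, deferring all of $\calZ$'s internal quantum operations to the very end (and measuring only the classical messages actually fed to the protocol at each round) does not change the joint output distribution — essentially a principle of deferred measurement combined with the fact that a classical superoperator commutes with any operation on disjoint quantum registers. One must also handle the polynomial-time clause correctly: here the statistical setting is what makes the reduction go through, since the classical $\calZ'$ simulating a quantum $\calZ$ may be unbounded, and the resulting simulator $\Sim_0$ inherits the polynomial-time guarantee from classical security only when $\Advdc$ is efficient (which it always is). I would take care to note that this reliance on an unbounded classical environment is exactly why the analogous \emph{computational} lifting is not claimed.
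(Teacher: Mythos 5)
Your proposal is correct and follows essentially the same route as the paper's proof: reduce to the dummy-adversary via \autoref{lemma:dummy.complete}, replace it by a measuring (classical) dummy-adversary $\Advdc$, observe that the classical protocol renders all interface messages effectively measured, replace the quantum environment by an unbounded classical machine simulating it, and invoke the classical UC guarantee for $\Advdc$ before transferring back. The paper phrases this as a direct chain of perfectly indistinguishable networks rather than a contrapositive, but the content is the same.
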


\begin{proof}
  Given a machine $M$, let $\calC(M)$ denote the machine which behaves
  like $M$, but measures incoming messages in the computational basis
  before processing them, and measures outgoing messages in the
  computational basis. More precisely, the superoperator
  $\calE_{\calC(M)}^{(k)}$ first invokes $\Eclass$
  on $\Hclass\otimes\Hquant$, then invokes $\calE^{(k)}_M$ on
  $\Hstate\otimes\Hclass\otimes\Hquant$, and then again invokes
  $\calE_{class}$ on $\Hclass\otimes\Hquant$.  Since it is possible to
  simulate quantum Turing machines on classical Turing machines (with
  an exponential overhead), for every machine $M$, there exists a
  classical machine $M'$ such that $\calC(M)$ and $M'$ are perfectly
  indistinguishable.\footnote{More precisely, for any set of machines
    $N$, the networks $N\cup\{M\}$ and $N\cup\{\calC(M)\}$ are
    perfectly indistinguishable.}

  We define the classical dummy-adversary
  $\Advdc$\index{dummy-adversary!classical}\index{adversary!classical
    dummy}\index{classical dummy adversary} to be the classical machine
  that is defined like $\Advd$ (\autoref{def:adv.dummy}), except that
  in each invocation, it first measures $\Hclass$, $\Hquant$, and
  $\Hstate$ in the computational basis (i.e., it applies $\Eclass$ to
  $\Hstate\otimes\Hclass\otimes\Hquant$) and then proceeds as does
  $\Advd$. Note that $\Advdc$ is probabilistic-polynomial-time.

  By \autoref{lemma:dummy.complete}, we only need to show that for any set $C$
  of corrupted parties, there exists a quantum-polynomial-time machine
  $\Sim$ such that for every machine $\calZ$ the real model
  $\pi^C\cup\{\calZ,\Adv_\mathit{dummy}\}$ and the ideal model
  $\rho^C\cup\{\calZ,\Sim\}$ are indistinguishable. 

  The protocol $\pi$ is classical, thus $\pi^C$ is classical, too, and thus all messages forwarded
  by $\Advd$ from $\pi^C$ to $\calZ$ have been measured in the
  computational basis by $\pi^C$, and all messages forwarded by
  $\Advd$ from $\calZ$ to $\pi^C$ will be measured by $\pi^C$ before
  being used. Thus, if $\Adv$ would additionally measure all messages
  it forwards in the computational basis, the view of $\calZ$ would
  not be modified. More formally,
  $\pi^C\cup\{\calZ,\Adv_\mathit{dummy}\}$ and
  $\pi^C\cup\{\calZ,\Advdc\}$ are perfectly indistinguishable.  Furthermore,
  since both $\pi^C$ and $\Advdc$ measure all messages upon sending
  and receiving, $\pi^C\cup\{\calZ,\Advdc\}$ and
  $\pi^C\cup\{\calC(\calZ),\Advdc\}$ are perfectly indistinguishable. 
  Since it is possible to simulate quantum machines on classical
  machines (with an exponential overhead), there exists a classical
  machine $\calZ'$ that is perfectly indistinguishable from
  $\calC(\calZ')$. Then
  $\pi^C\cup\{\calC(\calZ),\Advdc\}$ and $\pi^C\cup\{\calZ',\Advdc\}$
  are perfectly indistinguishable.  Since $\Advdc$ and $\calZ'$ are classical and
  $\Advdc$  is polynomial-time, there exists a classical probabilistic-polynomial-time simulator $\Sim$
  (whose construction is independent of $\calZ'$) such that
  $\pi^C\cup\{\calZ',\Advdc\}$ and $\rho^C\cup\{\calZ',\Sim\}$ are indistinguishable.

  Then $\rho^C\cup\{\calZ',\Sim\}$ and
  $\rho^C\cup\{\calC(\calZ),\Sim\}$ are perfectly indistinguishable by
  construction of $\calZ'$. And since both $\rho^C$ and $\Sim$
  measure all messages they send and receive,
  $\rho^C\cup\{\calC(\calZ),\Sim\}$ and $\rho^C\cup\{\calZ,\Sim\}$
  are perfectly indistinguishable.  

  Summarizing, we have that $\pi^C\cup\{\calZ,\Adv_\mathit{dummy}\}$
  and $\rho^C\cup\{\calZ,\Sim\}$ are indistinguishable for all
  quantum-polynomial-time environments $\calZ$. Furthermore, $\Sim$ is classical
  probabilistic-polynomial-time and hence quantum-polynomial-time and its construction does not
  depend on the choice of $\calZ$. Thus $\pi$ statistically quantum-UC-emulates~$\rho$. \qed
\end{proof}

\begin{fullversion}
\subsection{The computational case}

We now formulate a computational analogue to the quantum lifting
theorem (\autoref{theo:lift.stat}) from the previous section.  We
cannot, however, expect a theorem of the following form: If $\pi$
computationally classical-UC-emulates $\rho$, then $\pi$
computationally quantum-UC-emulates $\rho$. For example, if the
security of $\pi$ is based on the hardness of the discrete logarithm,
then $\pi$ may computationally classical-UC-emulate $\rho$, but
certainly $\pi$ does not computationally quantum-UC-emulate $\rho$ --
a quantum-polynomial-time adversary can easily compute discrete
logarithms using Shor's algorithm \cite{Shor:1994:Algorithms}.  Thus,
in order to get a computational quantum lifting theorem, we need to
give the adversary in the classical setting the same computational
power as in the quantum setting. Classical machines that are as
powerful as quantum-polynomial-time machines, we call QPPT machines.

\begin{definition}[Quantum-strong
  PPT]\index{QPPT}\index{quantum-strong PPT|see{QPPT}}\index{PPT!quantum-strong|see{QPPT}}
  A classical machine $M$ is said to be QPPT (quantum-strong
  probabilistic polynomial-time) if there is a quantum-polynomial-time
  machine $\Tilde M$ such that for any network $N$, $N\cup\{M\}$ and
  $N\cup\{\Tilde M\}$ are perfectly indistinguishable (short: $M$ and
  $\Tilde M$ are perfectly indistinguishable).
\end{definition}

\begin{definition}[QPPT classical UC
  security]\index{UC!QPPT classical}\index{QPPT classical
    UC}\index{classical UC!QPPT}
  Let protocols $\pi$ and $\rho$ be given. We say $\pi$ \emph{QPPT
    classical-UC-emulates $\rho$} iff for every set
  $C\subseteq \parties_\pi$ and for every QPPT adversary $\Adv$ there
  is a QPPT simulator $\Sim$ such that for every QPPT environment~$\calZ$, the networks $\pi^C\cup\{\Adv,\calZ\}$ and
  $\rho^C\cup\{\Sim,\calZ\}$ are indistinguishable.
\end{definition}

\begin{theorem}[Quantum lifting theorem --
  computational]\label{theo:lift.comp}\index{quantum lifting!computational}
  Let $\pi$ and $\rho$ be classical protocols.  Assume that $\pi$
   QPPT classical-UC-emulates $\rho$. Then $\pi$ 
  computationally quantum-UC-emulates~$\rho$.
\end{theorem}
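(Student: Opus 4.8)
The plan is to mimic the proof of the statistical quantum lifting theorem (\autoref{theo:lift.stat}) almost verbatim, replacing its single use of ``classically simulate a quantum machine'' by the QPPT machinery, which is exactly what the new hypothesis is designed to consume. As there, for a machine $M$ let $\calC(M)$ denote the machine that applies $\Eclass$ to $\Hclass\otimes\Hquant$ immediately before and after running $\calE_M^{(k)}$; since $\calC(M)$ only inserts measurements, it is quantum-polynomial-time whenever $M$ is. By \autoref{lemma:dummy.complete} it suffices to construct, for each corruption set $C$, a single quantum-polynomial-time simulator $\Sim$ such that $\pi^C\cup\{\calZ,\Advd\}$ and $\rho^C\cup\{\calZ,\Sim\}$ are indistinguishable for every quantum-polynomial-time environment $\calZ$.

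First I would reduce to purely classical machines exactly as in \autoref{theo:lift.stat}. Because $\pi^C$ is classical, it measures every incoming and outgoing message, so $\pi^C\cup\{\calZ,\Advd\}$ and $\pi^C\cup\{\calZ,\Advdc\}$ are perfectly indistinguishable, and likewise $\pi^C\cup\{\calZ,\Advdc\}$ and $\pi^C\cup\{\calC(\calZ),\Advdc\}$; here $\Advdc$ is the classical dummy-adversary, which is probabilistic-polynomial-time and hence QPPT. The key new observation concerns the environment: since $\calZ$ is quantum-polynomial-time, so is $\calC(\calZ)$, and therefore the \emph{classical} machine $\calZ'$ that is perfectly indistinguishable from $\calC(\calZ)$ (obtained by classically simulating the quantum computation with exponential overhead) is QPPT \emph{by definition} -- even though it is no longer polynomial-time. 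This is precisely the place where the statistical proof could afford an arbitrary unbounded classical $\calZ'$, whereas here we must pay with the QPPT notion. Thus $\pi^C\cup\{\calC(\calZ),\Advdc\}$ and $\pi^C\cup\{\calZ',\Advdc\}$ are perfectly indistinguishable.

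With both $\Advdc$ and $\calZ'$ classical and QPPT, the hypothesis that $\pi$ QPPT classical-UC-emulates $\rho$ yields a QPPT classical simulator $\Sim_0$, whose construction depends only on $\Advdc$ and $C$ and not on $\calZ'$, such that $\pi^C\cup\{\calZ',\Advdc\}$ and $\rho^C\cup\{\calZ',\Sim_0\}$ are indistinguishable. Reversing the earlier perfect-indistinguishability steps -- using that $\rho^C$ and $\Sim_0$ both measure all messages they send and receive, together with the construction of $\calZ'$ -- gives that $\rho^C\cup\{\calZ',\Sim_0\}$ is perfectly indistinguishable from $\rho^C\cup\{\calZ,\Sim_0\}$. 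To produce a genuine quantum simulator I would then invoke the QPPT definition one last time: let $\Sim$ be the quantum-polynomial-time machine perfectly indistinguishable from $\Sim_0$, so that $\rho^C\cup\{\calZ,\Sim_0\}$ and $\rho^C\cup\{\calZ,\Sim\}$ are perfectly indistinguishable. Chaining all steps shows $\pi^C\cup\{\calZ,\Advd\}$ and $\rho^C\cup\{\calZ,\Sim\}$ are indistinguishable, with $\Sim$ quantum-polynomial-time and independent of $\calZ$, as required.

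The main obstacle is conceptual rather than computational: one must notice that the classical simulation of the quantum environment blows up to exponential time and therefore cannot be fed into an ordinary computational classical-UC hypothesis, and that QPPT classical UC-security is the exact strengthening needed -- it tolerates such exponential-but-quantum-simulable environments while still delivering a simulator that, being QPPT itself, can be pulled back to a true quantum-polynomial-time machine. The routine points I would verify are that perfect indistinguishability is genuinely preserved under each environment/simulator substitution (this is where the classicality of $\pi^C$ and $\rho^C$ is used), and that the universality of the QPPT-classical simulator over environments is really available, so that the final $\Sim$ does not depend on $\calZ$.
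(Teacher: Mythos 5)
Your proposal is correct and follows essentially the same route as the paper's own proof: reduce to the dummy-adversary, replace it by the classical $\Advdc$, replace $\calZ$ by $\calC(\calZ)$ and then by a perfectly indistinguishable classical machine $\calZ'$ that is QPPT precisely because $\calC(\calZ)$ is quantum-polynomial-time, invoke QPPT classical-UC-emulation to obtain a QPPT simulator independent of $\calZ'$, undo the substitutions, and finally replace the QPPT simulator by its perfectly indistinguishable quantum-polynomial-time counterpart. The key observation you single out (that the exponential-time classical simulation of the environment is exactly what the QPPT notion is designed to absorb) is the same one the paper relies on.
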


\begin{proof}
  We define $\calC(M)$ and $\Advdc$ as in the proof of \autoref{theo:lift.stat}.

  By \autoref{lemma:dummy.complete}, we only need to show that for any set $C$ of corrupted parties,
  there exists a quantum polynomial-time machine $\Sim$ such that for
  every quantum-polynomial-time machine $\calZ$ the real model
  $\pi^C\cup\{\calZ,\Adv_\mathit{dummy}\}$ and the ideal model
  $\rho^C\cup\{\calZ,\Sim\}$ are indistinguishable.

  The protocol $\pi$ is classical, so is $\pi^C$ is classical, and thus all messages forwarded
  by $\Advd$ from $\pi^C$ to $\calZ$ have been measured in the
  computational basis by $\pi^C$, and all messages forwarded by
  $\Advd$ from $\calZ$ to $\pi^C$ will be measured by $\pi^C$ before
  being used. Thus, if $\Adv$ would additionally measure all messages
  it forwards in the computational basis, the view of $\calZ$ would
  not be modified. More formally,
  $\pi^C\cup\{\calZ,\Adv_\mathit{dummy}\}$ and
  $\pi^C\cup\{\calZ,\Advdc\}$ are perfectly indistinguishable.  Furthermore,
  since both $\pi^C$ and $\Advdc$ measure all messages upon sending
  and receiving, $\pi^C\cup\{\calZ,\Advdc\}$ and
  $\pi^C\cup\{\calC(\calZ),\Advdc\}$ are indistinguishable. By
  definition of QPPT machines, and since $\calC(\calZ)$ is
  quantum-polynomial-time, there is a QPPT machine $\calZ'$ that is
  perfectly indistinguishable from $\calC(\calZ)$. Then
  $\pi^C\cup\{\calC(\calZ),\Advdc\}$ and $\pi^C\cup\{\calZ',\Advdc\}$
  are perfectly indistinguishable.  Since $\Advdc$ and $\calZ'$ are QPPT
  machines, there exists a QPPT simulator $\Sim'$
  (whose construction is independent of $\calZ'$) such that
  $\pi^C\cup\{\calZ',\Advdc\}$ and $\rho^C\cup\{\calZ',\Sim'\}$ are indistinguishable.

  Then $\rho^C\cup\{\calZ',\Sim'\}$ and
  $\rho^C\cup\{\calC(\calZ),\Sim'\}$ are perfectly indistinguishable by
  construction of $\calZ'$. And since both $\rho^C$ and $\Sim'$
  measure all message they send and receive,
  $\rho^C\cup\{\calC(\calZ),\Sim'\}$ and $\rho^C\cup\{\calZ,\Sim'\}$
  are perfectly indistinguishable. Since $\Sim'$ is a QPPT machine,
  by definition there exists a quantum-polynomial-time machine $\Sim$ such that
  $\Sim$ and $\Sim'$ are perfectly indistinguishable.  Then $\rho^C\cup\{\calZ,\Sim'\}$ and
  $\rho^C\cup\{\calZ,\Sim\}$ are perfectly indistinguishable.

  Summarizing, we have that 
  $\pi^C\cup\{\calZ,\Adv_\mathit{dummy}\}$ and 
  $\rho^C\cup\{\calZ,\Sim\}$ are perfectly indistinguishable for all
  quantum-polynomial-time environments $\calZ$. Furthermore, $\Sim$ is
  quantum-polynomial-time and its construction does not depend on the
  choice of $\calZ$. Thus $\pi$ computationally
  quantum-UC-emulates $\rho$. \qed
\end{proof}

A word of caution: While the statistical quantum lifting theorem
(\autoref{theo:lift.stat}) can be directly applied to existing
statistically UC-secure protocols, the computational variant of this
theorem cannot be directly applied to existing proofs. Although
proving that a classical protocol is QPPT classical UC-secure is
probably simpler than directly performing the proof in the quantum
setting, at various places in a proof of QPPT classical UC-security
one has to prove that the machines one constructed from the
adversary/environment are QPPT. (This needs to be done whenever a
proof step is done by reduction, and when showing that the final
simulator is QPPT). As long as the constructed machines simulate the
original adversary as a black-box without rewinding, this will be
straightforward. However, when the constructed machine internally
rewinds a QPPT machine, showing that the constructed machine is also
QPPT will be non-trivial. Thus, to apply \autoref{theo:lift.comp} to
an existing protocol, we need to carefully revisit the original proof,
and we need to be aware of the fact that the closure properties of the
class of QPPT machines are not the same as those of the class of PPT
machines.

In this context, we formulate the following open problem: Can we
formulate the class of all QPPT machines as the class of all
probabilistic-polynomial-time machines relative to a suitable oracle?
More precisely, is the following conjecture true?

\begin{conjecture}
  There exists an oracle $\calO$ (e.g., the decision oracle of a
  BQP-complete problem) such that a classical machine $M$ is QPPT if
  and only if there exists an oracle machine $\Hat M^\calO$ which runs
  in probabilistic-polynomial-time and which is perfectly
  indistinguishable from $M$.
\end{conjecture}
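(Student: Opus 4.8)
The plan is to prove the two implications of the biconditional separately, after first reducing the statement about machines to a statement about the step functions they realize. Since both a QPPT machine and a PPT oracle machine are classical, each is characterized up to perfect indistinguishability by the randomized function it applies to $\Hstate\otimes\Hclass\otimes\Hquant$ in each invocation; and two classical machines are perfectly indistinguishable in every network exactly when they realize the same family of randomized step functions. Thus it suffices to show that a family of (efficiently describable) randomized functions is realizable in quantum-polynomial-time precisely when it is realizable by a probabilistic-polynomial-time oracle machine with access to $\calO$.

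For the easier direction ($\Leftarrow$), suppose $M$ is perfectly indistinguishable from a PPT machine $\Hat M^\calO$. I would construct a quantum-polynomial-time machine $\Tilde M$ that runs the classical program of $\Hat M$ verbatim and, whenever $\Hat M$ issues a query $q$ to $\calO$, answers it by running the bounded-error quantum algorithm $A$ for the underlying $\mathrm{BQP}$-complete problem on $q$. Since $\Hat M$ makes at most polynomially many queries, amplifying $A$ so that each query errs with probability at most $2^{-p(k)}$ for a sufficiently large polynomial $p$ makes $\Tilde M$ statistically indistinguishable from $\Hat M^\calO$, and hence from $M$; by definition this witnesses that $M$ is QPPT.

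For the harder direction ($\Rightarrow$), suppose $M$ is QPPT, witnessed by a quantum-polynomial-time $\Tilde M$ perfectly indistinguishable from $M$. I would simulate $\Tilde M$ by a PPT oracle machine $\Hat M^\calO$. Because $M$ is classical, the message $\Tilde M$ returns is always a classical string obtained by measuring $\Hclass\otimes\Hquant$, and its distribution is generated by a quantum-polynomial-time circuit, so extracting (the bits of) this distribution is a $\mathrm{BQP}$ task and reduces, by completeness of $\calO$, to polynomially many queries to $\calO$. The machine $\Hat M^\calO$ would sample the next output symbol bit by bit, deciding each successive bit by flipping a coin against the conditional probability that the circuit outputs a string with that bit set, estimated from $\calO$ via a standard search-to-decision reduction. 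Using the self-lowness of $\mathrm{BQP}$ (i.e.\ $\mathrm{BQP}^{\mathrm{BQP}}=\mathrm{BQP}$), all these nested estimations remain within $\mathrm{BQP}$ and hence reducible to $\calO$, so $\Hat M^\calO$ runs in probabilistic-polynomial-time.

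The main obstacle in both directions is the demand for \emph{perfect} indistinguishability ($\mu=0$). Every simulation sketched above is only statistically faithful: replacing an exact oracle answer by a bounded-error $\mathrm{BQP}$ computation, or reconstructing a quantum output distribution from decision queries, incurs a negligible but nonzero error, and the output probabilities of a quantum circuit are generically algebraic numbers that a finite sequence of fair coin flips cannot reproduce exactly. Closing this gap seems to require either a zero-error ($\mathrm{EQP}$-style) version of the oracle simulation, which is not known to be available for $\mathrm{BQP}$-complete problems, or a reformulation in which ``perfectly indistinguishable'' is relaxed to ``statistically indistinguishable.'' Under that relaxation the two directions above go through, so I expect the genuine content of the conjecture to lie precisely in whether exactness can be recovered, and I would concentrate the effort there rather than on the routine class-inclusion arguments.
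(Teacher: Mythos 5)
The statement you were asked to prove is not proven in the paper at all: it is explicitly posed as a conjecture, introduced with the words ``we formulate the following open problem,'' so there is no proof of the paper's to compare yours against. Read as a proof attempt, your proposal correctly fails to close, and to your credit you say so yourself: the requirement of \emph{perfect} indistinguishability ($\mu=0$ in the paper's definition) is exactly what blocks both directions. A bounded-error BQP algorithm standing in for the oracle, or a coin-flipping reconstruction of a quantum circuit's output distribution, can only be statistically faithful; and since acceptance probabilities of quantum circuits are generically irrational algebraic numbers, no finite sequence of fair coin flips reproduces them exactly. Identifying this as the genuine content of the conjecture is a fair assessment, consistent with the paper leaving the statement open.

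Two further gaps, which you do not flag, would remain even under your proposed relaxation to statistical indistinguishability. First, in the ($\Rightarrow$) direction, your bit-by-bit conditional sampling needs each conditional probability to negligible additive accuracy for the overall statistical distance to be negligible; but a decision oracle for a (Promise)BQP-complete problem only reliably answers threshold queries whose promise gap is achievable by polynomial-time amplification, i.e.\ inverse-polynomial gaps. Queries at superpolynomially fine thresholds fall outside the promise, and the oracle's answers on them carry no guarantee, so your ``standard search-to-decision reduction'' delivers only inverse-polynomial, not negligible, statistical accuracy. This is the familiar gulf between decision and sampling: it is not known that a BQP decision oracle suffices even for approximately sampling quantum circuit output distributions to negligible error, so the ($\Rightarrow$) direction is not ``routine'' even statistically. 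Second, BQP is not known to possess complete problems in the ordinary language sense --- only PromiseBQP has complete promise problems --- so the phrase ``decision oracle of a BQP-complete problem'' (taken from the conjecture itself) hides a modeling choice: with a promise problem, the behavior of $\Hat M^\calO$ on out-of-promise queries must be pinned down before ``perfectly indistinguishable from $M$'' is even well-defined. Your plan is a reasonable map of the terrain, but neither direction is established, and the statement remains what the paper says it is: open.
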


A positive answer to this question would allow rewinding of QPPT
machines (since an oracle machine $\Hat M^\calO$ can be
rewound). However, the impact of such a positive answer would not be
limited to our setting; we expect that it would also allow a simple
analysis of classical protocols in the quantum stand-alone model, and
of classical zero-knowledge proofs in the quantum setting. 
\end{fullversion}

\delaytext{relation stand-alone}{
\section{Relation to the stand-alone model}
\label{sec:rel.sa}

In this section, we show that security in the quantum stand-alone
model does, in some cases, already imply quantum-UC-security. We will
need this result as a tool for reusing parts of the proof given by
Damgård et al.~\cite{Damgaard:2009:Improving} for their OT
protocol. We first review the necessary parts of the stand-alone
model as defined by Fehr and Schaffner \cite{fehr09composing}. For
details, see their paper.

The basic idea behind the stand-alone
model\index{model!stand-alone}\index{stand-alone model} is similar to that of the
UC model.  We are given a protocol $\pi$ and a functionality $\calF$,
and we call the protocol $\pi$ secure if any attack on $\pi$ can be
simulated in an ideal model where the simulator only has access to the
functionality~$\calF$.\footnote{In the stand-alone model, one usually
  call this functionality a function because it is required to be
  non-interactive, first taking inputs from all parties, and then
  sending the computed outputs to all parties.}  We will only need the
special case of a two-party protocol in which Alice does not take any
input. In this case, we say that the protocol $\pi$ implements $\calF$
in the statistical quantum stand-alone model for corrupted Bob if the
following holds: For any adversary $\Adv$, there is a simulator $\Sim$
such that such that for any quantum state $\rho_\mathit{adv}$, the
trace distance between the states $\rho_\mathit{real}$ and
$\rho_\mathit{ideal}$ is negligible. Here, the state $\rho_\mathit{real}$ is
defined to be the joint state consisting of the output of Alice and of the
adversary after a protocol execution in which the adversary gets
$\rho_\mathit{adv}$ as his initial input.  The state
$\rho_\mathit{ideal}$ is defined to be the joint state consisting of the
output of Alice and of the simulator after an execution in which the
simulator first gets $\rho_\mathit{adv}$ as his initial input, then
may give arbitrary inputs to $\calF$ in the name of Bob, then gets the
outputs for Bob from $\calF$, and then produces his output.

\begin{theorem}\label{theo:sa.vs.uc}
  Fix a protocol $\pi$ with parties Alice and Bob (not using any ideal
  functionality).  Assume that in
  this protocol Alice takes no input, and that Alice does not accept
  messages after sending her output.

  Assume that the protocol $\pi$ implements a two-party functionality
  $\calF$ in the statistical quantum stand-alone model for corrupted
  Bob.

  Assume that the corresponding simulator is quantum-polynomial-time,
  that the simulator internally simulates the adversary as a black-box
  (and in particular, the description of the simulator does not
  otherwise depend on the adversary), that the simulator does not
  rewind the adversary, and that the simulator outputs the state
  output by the internally simulated adversary.

  Then $\pi$ statistically quantum-UC-emulates $\calF$ in the case of corrupted Bob.
\end{theorem}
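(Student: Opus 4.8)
The plan is to reuse the given black-box stand-alone simulator verbatim as a quantum-UC simulator, exploiting exactly the three properties assumed of it. Fix the corruption set $C=\{B\}$ and an adversary $\Adv$, and let $\Sim_\mathit{sa}$ be the stand-alone simulator supplied by the hypothesis. I would define the quantum-UC simulator $\Sim$ to be $\Sim_\mathit{sa}$ run with $\Adv$ in its black-box slot: the protocol-facing (Alice-facing) interface of the internally simulated $\Adv$ is handled by $\Sim_\mathit{sa}$, which instead talks to $\calF$ in Bob's name and feeds the Bob-outputs it receives back to $\Adv$ as Alice's messages, while the environment-facing interface of $\Adv$ is wired straight through to the external $\calZ$. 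This construction is legitimate \emph{precisely} because of the assumed properties of $\Sim_\mathit{sa}$: being black-box and adversary-independent lets us instantiate its slot with an $\Adv$ that secretly also converses with $\calZ$; never rewinding means we never have to rewind $\calZ$, which would be impossible in the UC execution; and outputting the simulated adversary's final state means the whole $\calZ$-facing conversation survives into the ideal run.

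To prove that $\pi^C\cup\{\Adv,\calZ\}$ and $\rho_\calF^C\cup\{\Sim,\calZ\}$ are indistinguishable for every $\calZ$, I would fold $\calZ$ into a stand-alone adversary. Given $\calZ$ with input $z$, let $\Adv^*$ be the stand-alone adversary that takes as $\rho_\mathit{adv}$ an encoding of $z$, internally runs $\calZ$ connected to $\Adv$, relays $\Adv$'s protocol messages to the external Alice, runs until the protocol terminates (Alice has produced her output), and then outputs the joint internal configuration of $\calZ$ and $\Adv$ \emph{without yet delivering Alice's output to $\calZ$}. Applying the stand-alone guarantee to $\Adv^*$ yields states $\rho_\mathit{real}$ and $\rho_\mathit{ideal}$, each a joint state of Alice's output and the $(\calZ,\Adv)$ configuration, with negligible trace distance. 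Because $\Sim_\mathit{sa}$ is black-box, non-rewinding and output-preserving, the ideal run of $\Sim_\mathit{sa}$ on the opaque adversary $\Adv^*$ produces exactly the $(\calZ,\Adv)$ configuration obtained by running the external $\calZ$ against $\Sim=\Sim_\mathit{sa}[\Adv]$ talking to $\calF$ (up to the routing of $\calF$'s Alice-output through the dummy party). Let $D$ be the fixed superoperator that delivers Alice's output into $\calZ$, lets $\calZ$ together with the now purely local Bob part finish, and reads off $\calZ$'s guess bit.

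It then remains to identify $\Pr[D(\rho_\mathit{real})=1]$ and $\Pr[D(\rho_\mathit{ideal})=1]$ with the probabilities that $\calZ$ outputs $1$ in the real and ideal UC executions; since $D$ is trace preserving and trace distance does not increase under such maps, the UC distinguishing advantage is bounded by the trace distance of $\rho_\mathit{real}$ and $\rho_\mathit{ideal}$, hence negligible. The two structural assumptions enter exactly at this identification: because Alice \emph{takes no input}, the only interaction $\calZ$ has with the honest side is \emph{receiving} Alice's output, so nothing $\calZ$ sends can depend on an honest-party computation except through the corrupted Bob, which lives inside the adversary; and because Alice \emph{accepts no messages after producing her output}, that output is final, so deferring its delivery to $\calZ$ until the post-processing step $D$ changes nothing. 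Certifying this reorganization — that the interleaved UC execution is faithfully reproduced by the ``run-the-protocol-then-post-process'' shape of the stand-alone model — is the main obstacle; everything else is bookkeeping. Finally, since $\Sim_\mathit{sa}$ is quantum-polynomial-time and used only as a straight-line black-box wrapper around $\Adv$, the simulator $\Sim$ is quantum-polynomial-time whenever $\Adv$ is, as \autoref{def:stat.quc} requires; note that \autoref{lemma:dummy.complete} is not even needed here, as the black-box $\Sim_\mathit{sa}$ handles every $\Adv$ uniformly.
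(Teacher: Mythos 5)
Your proposal is correct and follows essentially the same route as the paper: both proofs fold the environment together with the adversary into a single stand-alone adversary whose output is the joint $(\calZ,\Adv)$ configuration, invoke the stand-alone guarantee on the resulting joint state with Alice's output, use the black-box/straight-line/output-preserving assumptions to pull $\calZ$ back out of the simulator, and defer $\calZ$'s final decision to a fixed operation $D$ so that the UC distinguishing advantage is bounded by the trace distance. The one structural difference is that the paper first invokes \autoref{lemma:dummy.complete} to reduce to the dummy-adversary $\Advd$, whereas you insist this is unnecessary and work with an arbitrary $\Adv$ directly. You are right that the black-box simulator handles every $\Adv$ uniformly, but the dummy-adversary reduction is not merely cosmetic: the paper uses it to argue that $\calZ$ has nothing useful to say to the adversary after Alice's output (such messages are just forwarded to Alice and ignored), so that $D$ collapses to a single measurement on the final joint state. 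With a general $\Adv$ you must additionally argue that the post-output conversation between $\calZ$ and $\Adv$ --- and, in the ideal model, between $\Adv$ and whatever remains of $\Sim^*$ and $\calF$ --- is one and the same local superoperator acting on $\rho_\mathit{real}$ and on $\rho_\mathit{ideal}$; your phrase about letting ``the now purely local Bob part finish'' gestures at this but leaves implicit that the stand-alone simulator and $\calF$ are inert after producing their outputs, mirroring Alice's inertness. This is fixable (or avoidable by adopting the paper's dummy-adversary reduction), so I would call it a presentational gap rather than an error; the rest, including feeding $z$ in as $\rho_\mathit{adv}$, the independence of $\Sim$ from $\calZ$, and the polynomial-time claim, matches the paper's argument.
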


\begin{figure}
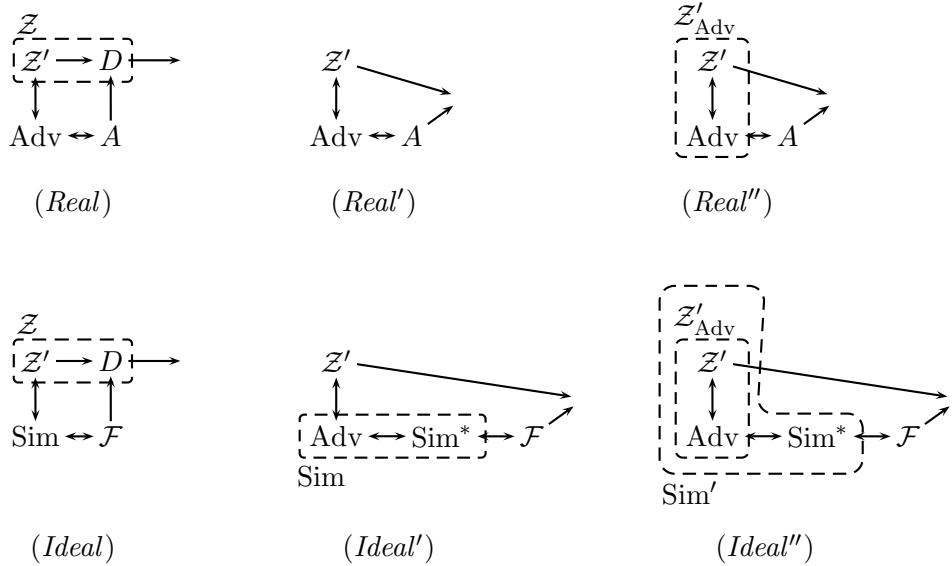

  \centering{\figureStandalone}
  \caption{\label{fig:sa.vs.uc}Networks occurring in the proof of
    \autoref{theo:sa.vs.uc}. Dashed boxes represent machines that
    internally simulate other machines. Arrows between machines
    represent communication, and arrows leaving the network represent
    the overall output of the network (indistinguishability is defined
    in terms of that output). Dummy-parties and corruption parties are
    omitted for simplicity.
  }
\end{figure}

\begin{proof}
  Fix an environment $\calZ$. By \autoref{lemma:dummy.complete}, we have to
  construct a simulator $\Sim$ such that the probability that $\calZ$
  outputs $1$ in the real and ideal model is negligibly close. (This
  simulator needs to be independent of the choice of $\calZ$.) Here, the
  real model $\mathit{Real}$ consists of the environment $\calZ$, the dummy-adversary
  $\Adv:=\Advd$, the honest party $A$ (Alice), and the corruption party $B^C$.  The
  ideal model $\mathit{Ideal}$ consists of the environment $\calZ$, the simulator
  $\Sim$, the functionality $\calF$, the dummy-party $\Tilde A$, and
  the corruption party~$B^C$. 

  Alice does not accept any messages after sending her output, 
  so we can assume without loss of generality that $\calZ$ does not
  send any messages to Alice after receiving her output. Since $\Adv$
  is the dummy-adversary, we can assume that $\calZ$ also does not
  send any messages to $\Adv$ after receiving Alice's output (since
  these messages would be routed through $\Adv$ through $B^C$ and then
  to Alice and ignored).
  Thus, we can assume without loss of generality that after receiving
  Alice's output, $\calZ$ does not send any messages, but performs a
  measurement $D$ on its state and Alice's output with some
  outcome $d\in\bit$. Then $\calZ$ terminates with output $d$.  Thus
  we can represent $\calZ$ as consisting internally of two machines
  $\calZ'$ and $D$. The machine $D$ gets the outputs of $\calZ'$ and
  Alice and outputs $d$.  This situation is depicted in
  \autoref{fig:sa.vs.uc}, network $\mathit{Real}$.

  We then define a network $\mathit{Real'}$ which contains $\calZ'$
  instead of $\calZ$. See \autoref{fig:sa.vs.uc}. Let
  $\rho(\mathit{Real'})$ denote the joint output of $\calZ'$ and
  Alice. Note that this output is not a single bit (as in
  \autoref{def:stat.quc}) but a quantum state. Note that when applying the
  measurement $D$ to $\rho(\mathit{Real'})$, the distribution of the
  measurement outcome is the distribution of the output of $\calZ'$ in
  $\mathit{Real}$.

  We then define a network $\mathit{Real''}$ which results from
  $\mathit{Real'}$ by replacing $\calZ'$ and $\Adv$ by a single
  machine $\calZ_\Adv$ which internally simulates $\calZ'$ and
  $\Adv$. See \autoref{fig:sa.vs.uc}.
  Then $\rho(\mathit{Real'})=\rho(\mathit{Real''})$.
  
  Now, since $\pi$ implements $\calF$ in the statistical quantum stand-alone model, and
  since $\calZ'_\Adv$ is a valid adversary in the quantum stand-alone
  model (it only interacts with the honest parties, but does not
  provide inputs or get the outputs), we have that there is a
  simulator $\Sim'$ such that the trace distance between
  $\rho(\mathit{Real''})$ and $\rho(\mathit{Ideal''})$ is
  negligible. Here $\mathit{Ideal''}$ is the network consisting of
  $\Sim'$ and $\calF$.

  By assumption, the simulator $\Sim'$ internally simulates
  $\calZ'_\Adv$ as a black box and outputs what the simulated
  $\calZ'_\Adv$ outputs. Hence we can represent $\Sim'$ as internally
  consisting of some two machines: the adversary $\calZ'_\Adv$, and
  some machine $\Sim^*$ that interacts with $\calZ'_\Adv$. 
  The construction of $\Sim^*$ does not depend on $\calZ'_\Adv$, and
  $\Sim^*$ is quantum-polynomial-time since $\Sim'$ is
  quantum-polynomial-time by assumption. The output
  of $\Sim'$ is that of $\calZ'_\Adv$. Note further that $\calZ'_\Adv$
  by construction also consists of two internally simulated machines
  $\calZ'$ and $\Adv$ and outputs what $\calZ'$ outputs. So the output
  of $\Sim'$ is that of the internal $\calZ'$. See
  \autoref{fig:sa.vs.uc}, network $\mathit{Ideal''}$.
  
  The simulator $\Sim'$ internally simulates $\calZ'$, $\Adv$, and
  $\Sim^*$. We define $\mathit{Ideal'}$ by replacing $\Sim'$ in
  $\mathit{Ideal''}$ by $\calZ'$ and $\Sim$, where $\Sim$ is defined
  to internally simulate $\Adv$ and $\Sim^*$. See
  \autoref{fig:sa.vs.uc}. Then
  $\rho(\mathit{Ideal''})=\rho(\mathit{Ideal'})$.

  Thus the trace distance between $\rho(\mathit{Real'})$ and
  $\rho(\mathit{Ideal'})$ is negligible. Furthermore, when applying
  the measurement $D$ to $\rho(\mathit{Real'})$, the distribution of
  the measurement outcome is the distribution of the output of
  $\calZ$ in $\mathit{Real}$. Similarly, when applying the
  measurement $D$ to $\rho(\mathit{Ideal'})$ is the distribution of
  the output of $\calZ$ in $\mathit{Ideal}$. Thus the statistical
  distance between the output of $\calZ'$ in $\mathit{Real}$ and in
  $\mathit{Ideal}$ is negligible. Thus $\mathit{Real}$ and
  $\mathit{Ideal}$ are indistinguishable.

  Furthermore, since $\Sim$ consists of $\Adv$ and $\Sim^*$, it is
  independent of $\calZ$. And since $\Sim^*$ is
  quantum-polynomial-time, $\Sim$ is quantum-polynomial-time if
  $\Adv$ is. Thus $\pi$ statistically quantum-UC-emulates $\calF$ in
  the case of corrupted Bob.
\qed
\end{proof}
}
\fullonly{\usedelayedtext{relation stand-alone}}

\section{Oblivious transfer}
\label{sec:ot}

\begin{definition}[OT protocols]\label{def:pi}
  \index{OT!protocol}The protocol $\piQROT$ is defined in \autoref{fig:qrot}. Fix a
  commitment scheme~$\com$. The protocol $\piQROT^\com$ is defined
  like $\piQROT$, but instead of using the functionality $\FCOM$, the
  commitment scheme $\com$ is used. The protocol $\piQOT$ is defined
  like $\piQROT$, with the following modifications: Alice takes as
  input two $\ell(k)$-bit strings $v_0,v_1$.  In
  Step~\ref{pi.send}, Alice additionally sends $t_0,t_1$ with
  $t_i:=s_i\oplus v_i$.  Bob outputs $s\oplus t_c$ instead of $s$ in
  Step~\ref{pi.output}.
\end{definition}

We first analyze $\piQROT$ and will then deduce the security of
$\piQOT$ from that of $\piQROT$.

\begin{figure}[t]
  \framebox{\parbox{\hsize}{
  \noindent\textbf{Parameters:} 
  Integers $n$, $m>n$, $\ell$, a family $\mathbf F$ of universal hash functions.

  \noindent\textbf{Parties:} The sender Alice and the recipient Bob.

  \noindent\textbf{Inputs:} Alice gets no input, Bob gets a bit $c$.

  \begin{compactenum}
  \item Alice chooses $\tilde x^A\in\bits m$ and $\tilde \theta^A\in\bases m$ and
    sends $\ket {\tilde x^A}_{\tilde \theta^A}$ to Bob. 
  \item Bob receives the state $\ket\Psi$ sent by the
    sender.  Then Bob chooses $\tilde\theta^B\in\bases m$ and measures
    the qubits of 
    $\ket\Psi$ in the bases $\tilde\theta^B$. Call the result $\tilde x^B$.
  \item For each $i$, Bob commits to $\tilde\theta^B_i$ and 
    $\tilde x^B_i$ using one instance of $\FCOM^{B\to A,1}$ each.
  \item Alice chooses a set $T\subseteq\{1,\dots,m\}$ of size $m-n$
    and sends $T$ to Bob.
  \item\label{pi.open}Bob opens the commitments of $\tilde\theta^B_i$ and $\tilde
    x^B_i$ for all $i\in T$.
  \item\label{alice.tests}Alice checks $\tilde x^A_i=\tilde x^B_i$ for all $i$ with
    $i\in T$ and $\tilde\theta^A_i=\tilde\theta^B_i$. If this test
    fails, Alice aborts.
  \item Let $x^A$ be the $n$-bit string resulting from removing the
    bits at positions $i\in T$ from~$\tilde x^A$. Define $\theta^A$,
    $x^B$, and $\theta^B$ analogously.
  \item Alice sends $\theta^A$ to Bob.
  \item Bob sets $I_c:=\{i:\theta^A_i=\theta^B_i\}$ and
    $I_{1-c}:=\{i:\theta^A_i\neq\theta^B_i\}$. Then Bob sends
    $(I_0,I_1)$ to Alice.
  \item\label{pi.send}Alice chooses $s_0,s_1\in\bits{\ell(k)}$ and
    $f_0,f_1\in\mathbf F$, output $(s_0,s_1)$, and computes
    $m_i:=s_i\oplus f_i(x^A|_{I_i})$  for
    $i=1,2$. Then Alice sends $f_0,f_1,m_0,m_1$ to Bob.
  \item\label{pi.output}Bob outputs $s:=m_c\oplus f_c(x^B|_{I_c})$.
  \end{compactenum}}}
  \caption{\label{fig:qrot}Protocol $\piQROT$ for randomized oblivious transfer.}
\end{figure}

\delaytext{ot trivial}{
\fullshort{We first state}{Finally, we need} the trivial cases (note for the uncorrupted case that
we assume secure channels):

\begin{lemma}\label{lemma:ot.uncorr}
  The protocol $\piQROT$ statistically quantum-UC-emulates
  $\FROT^{A\to B,\ell}$ in the case of no corrupted parties and in the
  case of both Alice and Bob being corrupted.
\end{lemma}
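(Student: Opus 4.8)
The plan is to treat the two corruption patterns separately, since both are the ``degenerate'' cases in which the simulator has essentially nothing to hide and can reproduce the real execution exactly; in fact I expect to obtain perfect indistinguishability in both cases, which is stronger than the statistical emulation claimed. For the case of \emph{no corrupted parties}, the key observation is that all channels in our model are secure, so the adversary $\Adv$ never sees any protocol message: its only interaction is with $\calZ$. I would therefore simply set $\Sim:=\Adv$, so that the $\Adv$/$\Sim$--$\calZ$ interaction is literally identical in the real and ideal models. It then remains to check that the outputs handed to $\calZ$ have the same distribution in both models. This is a correctness argument for an honest run of $\piQROT$ (\autoref{fig:qrot}): an honest Bob who measures Alice's qubit $i$ in the basis $\tilde\theta^B_i=\tilde\theta^A_i$ recovers $\tilde x^A_i$ with certainty, so Alice's test in Step~6 never fails and no abort occurs; moreover, for every $i\in I_c$ the bases agree, hence $x^B|_{I_c}=x^A|_{I_c}$ and Bob's output is $s=m_c\oplus f_c(x^B|_{I_c})=s_c$. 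Since Alice chooses $s_0,s_1$ uniformly and outputs them, the joint distribution of $(s_0,s_1)$ at Alice and $s=s_c$ at Bob is exactly the output distribution of $\FROT^{A\to B,\ell}$ in the uncorrupted case (\autoref{def:rot}). The two models are thus perfectly indistinguishable, and $\Sim=\Adv$ is quantum-polynomial-time whenever $\Adv$ is.

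For the case where \emph{both Alice and Bob are corrupted}, I would argue that the honest protocol logic is never executed at all: in $\piQROT^{C}$ the two party machines are replaced by corruption parties, so the real network consists only of $\Adv$, $\calZ$, the two corruption parties (which merely relay), and the $\FCOM$ instances invoked in Step~3. The ideal network likewise consists of $\Sim$, $\calZ$, the two corruption parties (obtained by corrupting the dummy-parties of $\rho_{\FROT}$), and the functionality $\FROT$. The natural simulator $\Sim$ internally runs a copy of $\Adv$ together with internal copies of all $\FCOM$ instances, routing every message that the simulated $\Adv$ directs (through a corrupted party) to a commitment instance into the corresponding internal $\FCOM$, and relaying every message between the simulated $\Adv$ and $\calZ$ faithfully through the ideal corruption parties (and the direct adversary--environment channel). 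Because both interfaces of $\FROT$ lead to corrupted parties, $\FROT$ produces no output unless the simulator itself feeds it inputs; $\Sim$ therefore simply never invokes $\FROT$, and it stays silent. The internal simulation of $\Adv$ together with the $\FCOM$ instances is then an exact copy of the real network as seen by $\calZ$, giving perfect indistinguishability. Since there are only polynomially many (in the parameter $m$) commitment instances and each $\FCOM$ is trivially efficient, $\Sim$ is quantum-polynomial-time if $\Adv$ is.

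Optionally I could first invoke \autoref{lemma:dummy.complete} to restrict attention to the dummy-adversary $\Advd$, but for these two cases the direct constructions above ($\Sim:=\Adv$ in the uncorrupted case, the relaying simulator in the fully corrupted case) are already uniform in $\Adv$ and need no rewinding, so the reduction to the dummy adversary is not essential. The only real work, and the place where I would be most careful, is the bookkeeping of message routing: verifying that the forwarding behaviour of corruption parties and dummy-parties composes so that the environment's interface is byte-for-byte the same in the real and simulated executions, and verifying the honest-run correctness claim (no abort, and $s=s_c$) that underlies the uncorrupted case. Neither step presents a genuine mathematical obstacle; the substance of the OT security proof is entirely in the corrupted-Alice and corrupted-Bob cases handled elsewhere.
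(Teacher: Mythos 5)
Your proof is correct; note that the paper itself gives no proof of \autoref{lemma:ot.uncorr} at all --- it is stated as one of ``the trivial cases'' --- so your argument is simply a sound fleshing-out of the standard reasoning the author left implicit. Both halves are right: in the uncorrupted case the secure channels make $\Sim:=\Adv$ work once honest-run correctness (no abort, $x^B|_{I_c}=x^A|_{I_c}$, hence $s=s_c$ with $(s_0,s_1)$ uniform) is checked, and in the fully corrupted case the simulator that internally runs $\Adv$ together with the $\FCOM$ instances and never touches $\FROT$ reproduces the environment's view exactly, with the only residual care being the activation-order bookkeeping you already flag.
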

}
\fullonly{\usedelayedtext{ot trivial}}

\subsection{Corrupted Alice}

\begin{lemma}\label{lemma:ot.a}
  The protocol $\piQROT$ statistically quantum-UC-emulates $\FROT^{A\to B,\ell}$ in
  the case of corrupted Alice.
\end{lemma}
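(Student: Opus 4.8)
The plan is to invoke the completeness of the dummy-adversary (\autoref{lemma:dummy.complete}), so that it suffices to treat $\Adv=\Advd$; then the environment $\calZ$ effectively plays the corrupted Alice directly, sending Alice's protocol messages and reading everything Alice would see, while Bob and the commitment instances $\FCOM^{B\to A,1}$ remain honest. I would then exhibit a single simulator $\Sim$ (independent of $\calZ$) that interacts with $\FROT^{A\to B,\ell}$ and the corruption party $A^C$, and show that the real network $\piQROT^C\cup\{\Advd,\calZ\}$ and the ideal network $\rho_{\FROT}^C\cup\{\Sim,\calZ\}$ are perfectly indistinguishable, which in particular yields statistical indistinguishability.

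The simulator internally runs the code of the honest Bob together with the instances of $\FCOM^{B\to A,1}$. Because it plays the commitment functionality itself, the only thing the corrupted Alice ever learns from the commitment phase is the content-free message $\mathtt{committed}$, exactly as in the real model. The crucial design choice is that $\Sim$ defers Bob's measurements: it keeps the received quantum register coherent, and upon receiving the test set $T$ it measures only the positions $i\in T$ in fresh uniformly random bases, opening those commitments honestly to the obtained values; the remaining non-test positions it leaves unmeasured until Alice reveals $\theta^A$, at which point it measures all of them in the basis $\theta^A$ (rather than in a random basis, as honest Bob would), obtaining $x^B$. For the index sets it simply samples a uniformly random complementary pair $(I_0,I_1)$ of the non-test positions and sends it to Alice. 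Finally, after receiving $f_0,f_1,m_0,m_1$ from Alice, it sets $s_j:=m_j\oplus f_j(x^B|_{I_j})$ for $j\in\{0,1\}$ and inputs $(s_0,s_1)$ to $\FROT^{A\to B,\ell}$ in the name of the corrupted Alice (an abort by Alice is mirrored by $\Sim$ handing no input to $\FROT$). Since $\Sim$ only runs Bob's code, measurements, and a commitment simulation, it is quantum-polynomial-time whenever $\Adv$ is.

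The correctness argument compares the three kinds of information Alice receives. The commitment messages are identical ($\mathtt{committed}$ in both worlds). The test openings are identical in distribution: in both models the opened positions carry the outcomes of measuring those qubits in uniformly random bases, and deferring these measurements does not change their distribution, since the non-test qubits are distinct registers. The sets $(I_0,I_1)$ also have the same distribution: in the real model $I_c=I_{=}$ is a uniformly random subset of the non-test positions (because $\theta^B$ is uniform) with $I_{1-c}$ its complement, so the ordered pair is a uniformly random complementary pair regardless of $c$, matching $\Sim$'s choice; moreover, on the non-test positions Alice has no information about $\theta^B$ and hence cannot detect that $\Sim$'s sets are unrelated to any actual basis choice. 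It remains to check that $\FROT$'s output to Bob has the correct joint distribution with Alice's view. The real Bob outputs $m_c\oplus f_c(x^B|_{I_c})$ with $I_c=I_{=}$, i.e.\ using outcomes of the $I_c$-qubits measured in the matching basis $\theta^A$; and $\Sim$'s value $s_c=m_c\oplus f_c(x^B|_{I_c})$ likewise uses outcomes of the $I_c$-qubits measured in $\theta^A$. Because $\Sim$ measures \emph{every} non-test qubit in $\theta^A$, this holds for whichever of the two sets $\FROT$ selects, so $\Sim$ need not know $c$. The only genuine difference between the models is that the qubits outside $I_c$ are measured in $\theta^A$ by $\Sim$ but in a possibly different basis by the real Bob; since those outcomes are neither shown to Alice nor used by $\FROT$, they are effectively traced out, and measuring-and-discarding a qubit equals tracing it out. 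Hence the joint distribution of $(I_0,I_1)$, Alice's messages, and the measurement outcomes on $I_c$ is identical in both models, giving perfect indistinguishability.

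The step I expect to be the main obstacle is precisely this last point: arguing that $\Sim$ can reconstruct the value an honest Bob would output for the environment's secret choice $c$ \emph{without knowing $c$}. The naive simulator that mimics honest Bob by measuring in a random basis fails, since it could correctly reconstruct only the one of $s_0,s_1$ indexed by the matching set $I_{=}$, and it cannot align its labeling of $(I_0,I_1)$ with the real Bob's $c$-dependent labeling. The deferred-measurement-in-$\theta^A$ trick is what removes this dependence on $c$, and the technical care goes into justifying the tracing-out argument — that changing the measurement basis on the discarded ($I_{1-c}$ and unused) registers leaves the observable distribution unchanged — for an arbitrary, possibly entangled, state sent by the malicious Alice.
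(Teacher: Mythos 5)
Your proposal is correct and follows essentially the same route as the paper's proof: the paper also lets the simulator absorb the commitment functionality so that Bob's measurements can be deferred, measures all non-test qubits in Alice's bases $\theta^A$ (justified because the $I_c$ positions agree with $\theta^B$ anyway and the $I_{1-c}$ outcomes are discarded), replaces $(I_0,I_1)$ by a uniformly random partition, and has the simulator compute both $s_0,s_1$ and feed them to $\FROT$ in Alice's name. The only cosmetic difference is that the paper organizes these observations as a sequence of perfectly indistinguishable game hops rather than a direct simulator construction, and does not route through the dummy-adversary lemma.
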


\begin{proof}
  First, we describe the structure of the real and ideal model in the
  case that the party $A$ (Alice) is corrupted:

  In the real model, we have the environment $\calZ$, the adversary
  $\Adv$, the corruption party $A^C$, the honest party $B$ (Bob), and the
  $2m$ instances of the commitment functionality $\FCOM$. The adversary controls the
  corruption party $A^C$, so effectively he controls the communication
  with Bob and the inputs of $\FCOM$. Bob's input (a choice bit $c$)
  is chosen by the environment, and the environment also gets Bob's
  output (a bitstring $s\in\bits\ell$).  See \autoref{fig:ot.a}(a).

  \begin{figure}[t]
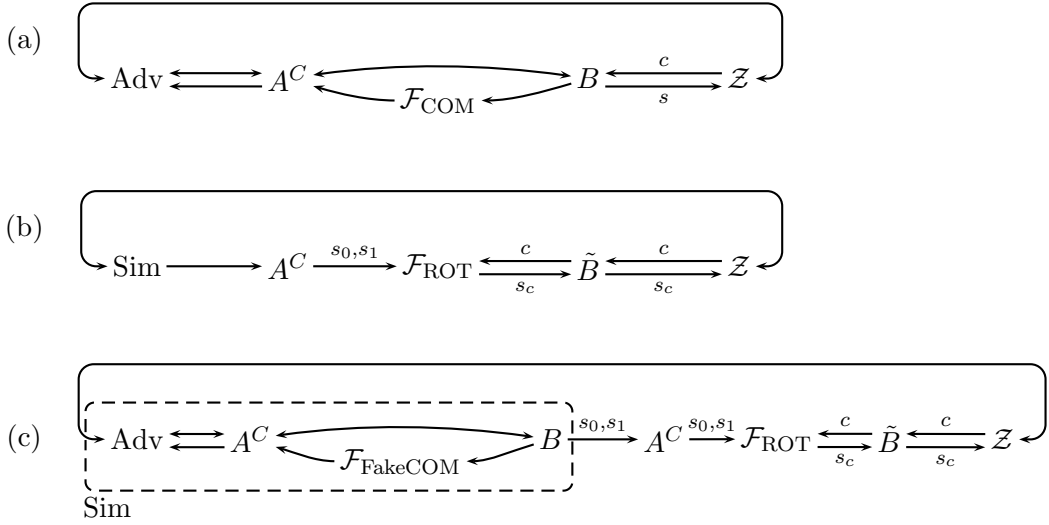

    \centering{\figureCorruptedAlice}
    \caption{\label{fig:ot.a} Networks occurring in the proof of
      \autoref{lemma:ot.a}. The dashed box represents the machine
      $\Sim$ that internally simulates $\Adv$, $A^C$, $\FFC$ and $B$.}
  \end{figure}

  In the ideal model, we have the environment $\calZ$, the simulator
  $\Sim$ (to be defined below), the corruption party $A^C$, the dummy-party $\Tilde B$, and
  the randomized OT functionality $\FROT$. The simulator $\Sim$ controls the
  corruption party $A^C$ and hence effectively chooses the inputs
  $s_0,s_1$ of $\FROT$.\footnote{Remember that, if Alice is corrupted,
    $\FROT$ behaves like $\FOT$ and takes inputs $s_0,s_1$ from Alice.}
  The input $c$ of $\FROT$ is chosen by the
  dummy-party $\Tilde B$ and thus effectively by the environment
  $\calZ$. The output $s:=s_c$ of $\FROT$ is given to the dummy-party
  $\Tilde B$ and thus effectively to the environment $\calZ$. See
  \autoref{fig:ot.a}(b).

  To show \autoref{lemma:ot.a}, we need to find a simulator $\Sim$
  such that, for any environment~$\calZ$, the real model and the ideal
  model are indistinguishable. To do so, we start with the real model,
  and change the machines in the real model step-by-step until we end
  up with the ideal model containing a suitable simulator $\Sim$
  (which we define below in the description of
  \autoref{step:make.sim}). In each step, we show that network 
  before and after the step are perfectly indistinguishable.

  \nextstep{step:equivocal.com} We replace $\FCOM$ by a commitment
  functionality $\FFC$ in which Bob (the sender) can cheat. That is,
  in the commit phase, $\FFC$ expects a message $\mathtt{commit}$ from
  $B$ (instead of $(\mathtt{commit},x)$), and in the open phase, $\FFC$ expects a
  message $(\mathtt{open},x)$ (instead of $\mathtt{open}$) and then
  sends $(\mathtt{open},x)$ to Alice.  We also change Bob's implementation
  accordingly, i.e., when Bob should commit to a bit $b$, he stores
  that bit $b$ and gives it to $\FFC$ when opening the commitment.
  Obviously, this change leads to a perfectly indistinguishable network
  (since Bob still opens the commitment in the same way).

  \nextstep{step:measure.late} Since Bob uses $\FFC$ instead of
  $\FCOM$, he does not use the outcomes~$\tilde x_i^B$ of his
  measurements before Step~\ref{pi.open} (for $i\in T$) or
  Step~\ref{pi.output} (for $i\notin T$) of the protocol. Thus, we modify Bob so that
  he performs the measurements with outcomes $\tilde x_i^B$ ($i\in T$)
  in  Step~\ref{pi.open} (in particular, after learning $T$), and the
  measurements with outcomes $x_i^B$ in 
  Step~\ref{pi.output}. Delaying the measurements leads to a perfectly
  indistinguishable network.

  \nextstep{step:alice.base} The bits $x^B_i$ with $i\in I_{1-c}$ are never
  used by Bob.  Thus
  we can modify Bob to use the bases $\theta^A_i$ instead of
  $\theta^B_i$ for these bits without changing the output of
  $\calZ$. Furthermore, since $\theta^A_i=\theta^B_i$ for $i\in I_c$,
  we can modify Bob to also use the bases $\theta^A_i$ instead of
  $\theta^B_i$ when measuring $x^B_i$ with $i\in I_c$. Summarizing, we
  modify Bob to use $\theta^A$ instead of $\theta^B$, and we get a
  perfectly indistinguishable network.

  \nextstep{step:i0i1.random} The bases $\theta^B$ are chosen randomly
  by Bob, and they are only used to compute the sets $I_0$ and
  $I_1$. We change Bob to instead pick $(I_0,I_1)$ as a random
  partition of $\{1,\dots,n\}$. Since this leads to the same
  distribution of $(I_0,I_1)$ and since $\theta^B$ is not used
  elsewhere, this leads to a perfectly indistinguishable network.

  \nextstep{step:compute.s0s1} In Step~\ref{pi.output}, we change Bob
  to compute $s_i:=m_i\oplus f_i(x^B|_{I_i})$ for $i=0,1$ and to
  output $s:=s_c$. This leads to the same value of $s$ as the original
  computation $s:=m_c\oplus f_c(x^B|_{I_c})$, hence the resulting
  network is perfectly indistinguishable from the previous one.
  Note that now, Bob only uses the choice bit $c$ to pick which of the
  two values $s_0,s_1$ to output.

  \nextstep{step:make.sim} We now construct a machine $\Sim$ that
  internally simulates the machines $\Adv$, $A^C$, $\FFC$, and Bob. We let $\Sim$
  run with an (external) corruption party $A^C$, and when (the
  simulated) Bob computes $s_0,s_1$ in Step~\ref{pi.output}, $\Sim$
  instructs the (external) corruption party $A^C$ to input $s_0,s_1$
  into $\FROT$ (instead of letting Bob output $s=s_c$). Then $\FROT$
  will, given input $c$ from the dummy-party $\TIlde B$, output $s_c$
  to the dummy-party $\TIlde B$. The dummy-party $\Tilde B$ then
  forwards $s_c$ to the environment $\calZ$. See \autoref{fig:ot.a}(c). The only difference with
  respect to the previous network (besides a regrouping of machines)
  is that now $s_c$ is computed by $\FROT$ from $s_0,s_1$. However,
  $\FROT$ computes $s_c$ in the same way as Bob would have done. Thus,
  the resulting network is perfectly indistinguishable from the
  previous one.

  Since the network from \autoref{step:make.sim}
  (\autoref{fig:ot.a}(c)) is identical to the
  ideal model (\autoref{fig:ot.a}(b)), and since the real model is perfectly
  indistinguishable from the network from \autoref{step:make.sim}, we
  have that the real and the ideal network are perfectly indistinguishable.

  Furthermore, $\Sim$ is quantum-polynomial-time if $\Adv$ is, and the
  construction of $\Sim$ does not depend on the choice of the
  environment $\calZ$.  Thus the protocol $\piQROT$ statistically quantum-UC-emulates
  $\FROT^{A\to B,\ell}$ in the case of corrupted Alice.
  \qed
\end{proof}

\delaytext{corrupted bob}{
\fullshort{\subsection{Corrupted Bob}}{\section{Security of $\piQROT$
    for corrupted Bob}}
\label{sec:corrupted-bob}

We call a commitment scheme trivially extractable if, given the
messages exchanged during the commit phase, it is efficiently possible
to determine the value to which the commitment will be
opened. Obviously, this directly contradicts the hiding property of
the commitment, so trivially extractable commitments are not overly
useful. However, we need such commitments as an intermediate
construction in the following proofs. An example of a trivially
extractable commitment is one which sends the committed message in
clear during the commit phase.

\begin{corollary}[Stand-alone quantum OT \cite{Damgaard:2009:Improving}]\label{coro:qrot.sa}
  Let $0<\alpha<1$ and $0<\lambda<\frac14$ be constants. Assume
  $m=\lceil n/(1-\alpha)\rceil$ and $\ell=\lfloor \lambda n\rfloor$
  and that $n$ grows at least linearly in the security parameter~$k$.

  Assume that $\com$ is a statistically binding, trivially extractable
  commitment scheme. Then~$\piQROT^\com$ implements~$\FROT^{A\to
    B,\ell}$ in the
  statistical quantum stand-alone model.

  The corresponding simulator is quantum-polynomial-time, internally
  simulates the adversary as a black-box, does not rewind the
  adversary, and outputs the state output by the internally simulated
  adversary.
\end{corollary}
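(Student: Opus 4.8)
The plan is to obtain this statement as essentially a repackaging of the stand-alone analysis of Damg\r{a}rd et al.~\cite{Damgaard:2009:Improving} rather than redoing the information-theoretic work. In the case of a corrupted recipient Bob, their protocol coincides (up to notation) with $\piQROT^\com$, and the statistical quantum stand-alone notion we use for corrupted Bob is the one defined in the discussion preceding \autoref{theo:sa.vs.uc}, which matches theirs. So I would first verify that the protocol, the security notion, and the parameter constraints line up with their hypotheses, and then invoke their security proof. The only genuinely new point to establish is that, when $\com$ is \emph{trivially extractable}, the simulator can be taken to have the special straight-line structure demanded here -- quantum-polynomial-time, black-box simulation of the adversary, no rewinding, and outputting the adversary's final state -- since this is precisely what \autoref{theo:sa.vs.uc} will later require.

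Second, I would describe the simulator explicitly to exhibit this structure. Because $\com$ is trivially extractable, the simulator reads Bob's committed bases $\tilde\theta^B$ and outcomes $\tilde x^B$ directly off the commit-phase transcript, with no rewinding of the (internally simulated, black-box) adversary. The simulator plays honest Alice toward Bob: it sends the BB84 states, runs the test of Step~\ref{alice.tests} using the extracted values, and after sending $\theta^A$ receives $(I_0,I_1)$. Comparing $(I_0,I_1)$ against the positions where $\theta^A$ agrees with the extracted $\theta^B$ determines Bob's effective choice bit $c$; the simulator inputs this $c$ to $\FROT^{A\to B,\ell}$ on Bob's behalf, obtains $s:=s_c$, and completes the simulation by setting $m_c:=s\oplus f_c(x^A|_{I_c})$ while choosing $m_{1-c}$, $f_0$, $f_1$ uniformly at random exactly as honest Alice would. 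Crucially, the simulator never learns $s_{1-c}$ from $\FROT$, so picking $m_{1-c}$ at random is the natural choice. All steps are straight-line and polynomial-time, and the simulator's output is that of the simulated adversary, as required.

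Third, I would appeal to the core analysis of Damg\r{a}rd et al.\ for indistinguishability of the real and ideal joint states of Alice and the adversary. The reused content is the min-entropy/uncertainty-relation bound: a possibly dishonest Bob who passes the test necessarily retains high min-entropy about $x^A|_{I_{1-c}}$, so that after privacy amplification with a random universal hash function the mask $f_{1-c}(x^A|_{I_{1-c}})$ is statistically close to uniform and independent of Bob's view. Hence in the real execution $s_{1-c}=m_{1-c}\oplus f_{1-c}(x^A|_{I_{1-c}})$ is close to uniform and independent of the pair (Bob's view, $m_{1-c}$), which is exactly the ideal distribution where $\FROT$ chooses $(s_0,s_1)$ at random and the simulator picks $m_{1-c}$ independently. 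The parameter regime is what makes the bound go through: $m=\lceil n/(1-\alpha)\rceil$ fixes an $\alpha$-fraction of test positions, $\ell=\lfloor\lambda n\rfloor$ with $\lambda<\tfrac14$ keeps the extracted length below the guaranteed min-entropy, and $n$ growing linearly in $k$ forces the statistical error to be negligible in $k$.

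The main obstacle is not the simulation but this last entropy estimate -- bounding Bob's residual knowledge of $x^A|_{I_{1-c}}$ conditioned on passing the test. Fortunately this is precisely the difficult part already carried out in \cite{Damgaard:2009:Improving}, so for us the real work reduces to checking that their protocol, notion, parameter constraints, and simulator match the corollary's hypotheses; once trivial extractability supplies a straight-line extractor for Bob's bases and outcomes, the claim -- including the special simulator properties -- follows from their result.
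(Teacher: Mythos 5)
Your proposal matches the paper's treatment: the paper proves this corollary purely by citation to Damg\r{a}rd et al., bridging only three gaps --- trivially extractable commitments are a special case of their CRS-extractable ones (a zero-length CRS), statistical rather than computational binding upgrades their computational stand-alone guarantee to a statistical one by the same proof, and their analysis of $\piQOT^\com$ adapts trivially to $\piQROT^\com$. Your explicit description of the straight-line extracting simulator and of the privacy-amplification argument supplies more detail than the paper does and is consistent with it; the only bridging point you leave implicit is the computational-to-statistical upgrade obtained from statistical binding.
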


Note that Damgård et al.~\cite{Damgaard:2009:Improving} prove a slightly different
result. First, it only assumes that the commitment scheme $\com$ is extractable in the
so-called common reference string (CRS) model. That is, a globally
known and trusted string, the CRS, is available to all parties, and it
is possible to extract the committed value when one is allowed to
choose the CRS oneself. A trivially extractable commitment can be seen
as a special case with a zero-length CRS. Second, it only assumes that
the scheme is computationally binding, and thus only proves security
in the computational quantum stand-alone model. If we assume that the
commitment is statistically binding instead, the same proof shows
security in the statistical quantum stand-alone model. Third, they
analyze the protocol $\piQOT^\com$, but the proof trivially adapts to
$\piQROT^\com$.

\begin{lemma}\label{lemma:ot.b}
  Under the same assumptions on $n,m,\ell$ as in
  \autoref{coro:qrot.sa}, the protocol $\piQROT$ statistically quantum-UC-emulates
  $\FROT^{A\to B,\ell}$ in the case of corrupted Bob.
\end{lemma}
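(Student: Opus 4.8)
My plan is to derive the corrupted-Bob case almost entirely from the stand-alone analysis already available in \autoref{coro:qrot.sa}, via the stand-alone-implies-UC bridge \autoref{theo:sa.vs.uc}. The only mismatch is that \autoref{theo:sa.vs.uc} speaks about a protocol using \emph{no} ideal functionality, while $\piQROT$ is phrased in the $\FCOM$-hybrid model. I will close this gap by first proving the statement for $\piQROT^\com$ with a concrete trivially extractable commitment, and then arguing that, for a corrupted committer, replacing that commitment by $\FCOM$ changes nothing observable.

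Concretely, I fix $\com$ to be the trivially extractable, statistically (indeed perfectly) binding commitment in which the committer sends the committed value in the clear during the commit phase; this is exactly the example noted before \autoref{coro:qrot.sa}. It is not hiding, but since Bob is the committer and is corrupted, hiding is irrelevant here. \autoref{coro:qrot.sa} then tells me that $\piQROT^\com$ implements $\FROT^{A\to B,\ell}$ in the statistical quantum stand-alone model, with a simulator that is quantum-polynomial-time, simulates the adversary as a black box, does not rewind it, and outputs the adversary's state.

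Next I verify the hypotheses of \autoref{theo:sa.vs.uc} for $\piQROT^\com$: it is a two-party protocol between Alice and Bob that now uses no ideal functionality; Alice receives no input; and Alice's final action is Step~\ref{pi.send}, after which she accepts no further messages. Together with the simulator properties just listed, \autoref{theo:sa.vs.uc} yields that $\piQROT^\com$ statistically quantum-UC-emulates $\FROT^{A\to B,\ell}$ for corrupted Bob. It remains to pass from $\piQROT^\com$ to $\piQROT$. I claim that for corrupted Bob the real executions of the two protocols are perfectly indistinguishable to every environment: honest Alice consults Bob's committed data only in the consistency check of Step~\ref{alice.tests}, and only for the opened indices $i\in T$; by binding these opened values agree with the committed ones in both worlds, and the unopened values are never used by Alice (her output depends only on her own bits and on the sets $I_0,I_1$ she receives). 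Hence the sole difference --- that with the clear-text commitment Alice would already see the values at commit time --- is invisible. Relabelling the commit/open messages therefore turns the simulator obtained above into one for $\piQROT$; equivalently, $\piQROT$ perfectly quantum-UC-emulates $\piQROT^\com$ for corrupted Bob, and transitivity (\autoref{lemma:ref.trans}) gives the lemma. Throughout, \autoref{lemma:dummy.complete} lets me restrict to the dummy-adversary, so only one simulator has to be exhibited.

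The main obstacle is this last bridging step. One has to argue with care that substituting the hiding functionality $\FCOM$ by a non-hiding but trivially extractable commitment is undetectable in the corrupted-Bob case, which rests on the protocol-specific fact that honest Alice touches the committed values only through the post-opening $T$-consistency test, and that the UC simulator's ability to read committed values by internally simulating $\FCOM$ coincides exactly with the trivial extraction performed against $\com$. By contrast, checking the ``no input''/``no messages after output'' and black-box/no-rewind conditions of \autoref{theo:sa.vs.uc}, and the message-format relabelling, are routine.
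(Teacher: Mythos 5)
Your proposal is correct and follows essentially the same route as the paper: instantiate $\com$ with the trivial send-in-the-clear commitment, invoke \autoref{coro:qrot.sa} and \autoref{theo:sa.vs.uc} to get quantum-UC-security of $\piQROT^\com$ for corrupted Bob, and then transfer the result to $\piQROT$. The only (immaterial) difference is in the last bridging step, where the paper argues purely syntactically --- Alice in $\piQROT^\com$ is literally the machine obtained by merging Alice from $\piQROT$ with the internally simulated $\FCOM$ instances, since Bob's messages $(\mathtt{commit},m)$ and $\mathtt{open}$ are identical in both worlds --- whereas you give a slightly more laborious but equally valid information-flow argument about which committed values honest Alice actually uses.
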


\begin{proof}
  Let $\com$ be the following encryption scheme: To commit to a
  message $m$, the sender sends $(\mathtt{commit},m)$, and the
  recipient always accepts the commitment. To open the commitment, the
  sender sends $\mathtt{open}$, and the recipients accepts and
  output $m$. Obviously, this commitment is not hiding. However, it is
  easily seen to be statistically binding and trivially extractable.

  Consider the protocol $\piQROT$. Here Bob sends the messages
  $(\mathtt{commit},m)$ and $\mathtt{open}$ to the commitment
  functionality, while in the protocol $\piQROT^\com$, Bob sends these
  messages directly to Alice. In other words, the machine Alice in
  $\piQROT^\com$ can be represented as a machine that internally simulates the machine
  Alice from $\piQROT$ and the ideal functionality $\FCOM$. Thus,
  as long as Alice is honest, $\piQROT^\com$ statistically quantum-UC-emulates
  $\FROT$ in the case of corrupted Bob if and only if $\piQROT$
  statistically quantum-UC-emulates $\FROT$ in the case of corrupted Bob.

  By \autoref{coro:qrot.sa}, $\piQROT^\com$ implements $\FCOM$ in the
  statistical quantum stand-alone model in the case of corrupted
  Bob with a simulator having the special properties listed in \autoref{coro:qrot.sa}.
  Thus, by \autoref{theo:sa.vs.uc}\shortonly{ in Appendix~\ref{sec:rel.sa}}, $\piQROT^\com$ statistically
  quantum-UC-emulates $\FROT$ in the case of corrupted Bob. Thus
  $\piQROT$ statistically quantum-UC-emulates $\FROT$ in the case of
  corrupted Bob.  \qed
\end{proof}
}
\fullonly{\usedelayedtext{corrupted bob}}

\begin{theorem}\label{theo:pi.rot}
  Let $0<\alpha<1$ and $0<\lambda<\frac14$ be constants. Assume
  $m=\lceil n/(1-\alpha)\rceil$ and $\ell=\lfloor \lambda n\rfloor$
  and that $n$ grows at least linearly in the security parameter.

  Then the protocol $\piQROT$ statistically quantum-UC-emulates
  $\FROT^{A\to B,\ell}$.
\end{theorem}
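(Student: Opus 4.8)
The plan is to observe that \autoref{def:stat.quc} requires establishing the emulation separately for every corruption set $C\subseteq\parties_\piQROT$, and that for a two-party protocol with parties Alice and Bob this amounts to exactly four cases, each of which has already been handled in an auxiliary lemma. Thus the proof is purely a matter of assembling the preceding results under the stated parameter hypotheses.

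First I would note that $\parties_\piQROT=\{A,B\}$, so the possible corruption sets are $C=\emptyset$, $C=\{A\}$, $C=\{B\}$, and $C=\{A,B\}$. For each such $C$, \autoref{def:stat.quc} asks that for every adversary $\Adv$ there be a simulator $\Sim$ (quantum-polynomial-time whenever $\Adv$ is) such that the real model $\piQROT^C\cup\{\Adv,\calZ\}$ and the ideal model is indistinguishable for all environments $\calZ$. Since these four sets are disjoint and exhaust all subsets of $\{A,B\}$, it suffices to exhibit a suitable simulator in each case. I would then invoke the lemmas case by case: the cases $C=\emptyset$ and $C=\{A,B\}$ are given directly by \autoref{lemma:ot.uncorr}; the case $C=\{A\}$ by \autoref{lemma:ot.a}; and the case $C=\{B\}$ by \autoref{lemma:ot.b}. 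Collecting the simulators so obtained yields, for every $C$, the simulator required by \autoref{def:stat.quc}, which establishes that $\piQROT$ statistically quantum-UC-emulates $\FROT^{A\to B,\ell}$.

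The one point deserving a word of care is the role of the parameter hypotheses. The constraints $m=\lceil n/(1-\alpha)\rceil$, $\ell=\lfloor\lambda n\rfloor$, and $n$ growing at least linearly in $k$ are precisely those required by \autoref{lemma:ot.b} (which in turn inherits them from \autoref{coro:qrot.sa}); the remaining three cases hold unconditionally, so these hypotheses transfer to the combined statement without any additional argument. One also checks that each lemma preserves quantum-polynomial-time complexity of the simulator, and that in each case the simulator is constructed independently of $\calZ$, so no uniformity issue arises when the cases are merged.

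There is no genuine obstacle left at this level: all the substance lies in the already-proved lemmas. The real difficulty was concentrated in the corrupted-Bob case (\autoref{lemma:ot.b}), which was reduced, via the stand-alone-to-UC lifting of \autoref{theo:sa.vs.uc} together with the stand-alone analysis inherited from Damg\aa rd et al.\ \cite{Damgaard:2009:Improving} (\autoref{coro:qrot.sa}), to facts already in hand; and in the corrupted-Alice case (\autoref{lemma:ot.a}), dispatched by a chain of perfectly-indistinguishable game hops. The composite theorem itself therefore requires only this four-way case split, and I expect it to be a short corollary-style argument rather than a source of new technical work.
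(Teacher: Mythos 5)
Your proposal is correct and matches the paper's own proof, which simply states that the theorem is immediate from Lemmas~\ref{lemma:ot.uncorr}, \ref{lemma:ot.a}, and~\ref{lemma:ot.b}, i.e.\ the same four-way case split over the corruption set with the parameter hypotheses inherited from the corrupted-Bob case. No further comment is needed.
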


\delaytext{proof theo:pi.rot}{
\fullshort{\begin{proof}}{\begin{proof}[of \autoref{theo:pi.rot}]}
  Immediate from
  Lemmas~\ref{lemma:ot.uncorr}, \ref{lemma:ot.a},
  and~\ref{lemma:ot.b}.
\end{proof}
}
\fullonly{\usedelayedtext{proof theo:pi.rot}}

\begin{shortversion}
  \noindent For the case of corrupted Alice, this is shown in
  \autoref{lemma:ot.a}. The cases where both parties are honest or
  both parties are corrupted are trivial. Thus for
  \autoref{theo:pi.rot} we are left to analyze the case where Bob is
  corrupted. This case needs a considerably more involved analysis
  than the case of corrupted Alice because we have to consider the
  fact that Bob may succeed in Step~\ref{alice.tests} of $\piQROT$ but
  still have a certain amount of information about the bits
  $x^A|_{I_{1-c}}$.  A very similar analysis has already been
  performed by Damgård, Fehr, Lunemann, Salvail, and Schaffner
  \cite{Damgaard:2009:Improving} in the so-called stand-alone
  model. Fortunately, we do not need to redo their analysis; it turns
  out that -- although the stand-alone model is weaker than the
  quantum-UC-model -- the particular simulator constructed by Damgård
  et al.~is already strong enough to be used as a simulator in the
  quantum-UC-model. Thus we can reuse the result of Damgård et al.~in
  our setting and get \autoref{theo:pi.rot} without re-analyzing
  $\piQROT$.\footnote{One major difference between the UC-model and
    the stand-alone model is that in the first, the honest parties'
    inputs may depend on messages the adversary intercepts during the
    protocol run.  A simulator constructed for the stand-alone model
    usually is not able to cope with such dependencies. Thus, it turns
    out to be important that we first considered the randomized OT
    protocol $\piQROT$ and not immediately the OT protocol
    $\piQOT$. In $\piQROT$, Alice gets no input, and in particular her
    inputs may not depend on messages intercepted by the adversary.}

  The full proof of \autoref{theo:pi.rot} is given in Appendix~\ref{sec:corrupted-bob}.
\end{shortversion}

\begin{theorem}\label{theo:pi.ot}
  Let $0<\alpha<1$ and $0<\lambda<\frac14$ be constants. Assume
  $m=\lceil n/(1-\alpha)\rceil$ and $\ell=\lfloor \lambda n\rfloor$
  and that $n$ grows at least linearly in the security parameter.

  Then the protocol $\piQOT$ (\autoref{def:pi})
  statistically quantum-UC-emulates $\FOT^{A\to B,\ell}$.
\end{theorem}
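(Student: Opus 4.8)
The plan is to factor $\piQOT$ as the composition of $\piQROT$ with a simple one-time-pad wrapper, and then to invoke the universal composition theorem (\autoref{theo:comp}) together with \autoref{theo:pi.rot}. Concretely, let $\sigma$ be the protocol in the $\FROT^{A\to B,\ell}$-hybrid model that proceeds as follows: Alice, on input $(v_0,v_1)$, invokes $\FROT$ as the sender, obtains the random pair $(s_0,s_1)$, and sends $t_i:=s_i\oplus v_i$ to Bob; Bob, on input $c$, invokes $\FROT$ as the recipient, obtains $s=s_c$, and outputs $s\oplus t_c$. Inspecting \autoref{def:pi}, replacing the $\FROT$ instance in $\sigma$ by the subprotocol $\piQROT$ yields exactly $\piQOT$, i.e.\ $\sigma^{\piQROT}=\piQOT$ (the random strings $\piQROT$ hands to Alice and the string $s$ it hands to Bob play the roles of the $\FROT$ outputs). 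Since $\sigma$, $\piQROT$, and the ideal protocol for $\FROT$ are all quantum-polynomial-time (recall $m$ and $\ell$ are polynomial in $k$), the composition theorem applies.

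Next I would establish that $\sigma^{\FROT}$ statistically quantum-UC-emulates $\FOT^{A\to B,\ell}$. Because $\sigma$ is classical and $\FROT$, $\FOT$ are classical functionalities, $\sigma^{\FROT}$ is a classical protocol, so it suffices to prove statistical \emph{classical}-UC-emulation and then lift it via the quantum lifting theorem (\autoref{theo:lift.stat}). The substantive corruption cases are as follows. For corrupted Alice, the simulator reads the inputs $(s_0,s_1)$ that the corrupted Alice feeds to $\FROT$ (which, with Alice corrupted, behaves like $\FOT$) and the strings $(t_0,t_1)$ she sends, sets $v_i:=s_i\oplus t_i$, and submits $(v_0,v_1)$ to $\FOT$; this reproduces the honest Bob's output $v_c$ exactly. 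For corrupted Bob, the simulator reads Bob's choice bit $c$ from his $\FROT$ input, submits $c$ to $\FOT$, receives $v_c$, and then simulates Bob's view by choosing $s_c$ uniformly, setting $t_c:=s_c\oplus v_c$, and choosing $t_{1-c}$ uniformly at random; it hands $s_c$ to Bob as the $\FROT$ output and sends $(t_0,t_1)$. This is a perfect simulation because in the real execution $s_{1-c}$ is uniform and, by the definition of $\FROT$, never revealed to Bob, so the real $t_{1-c}=s_{1-c}\oplus v_{1-c}$ is likewise uniform and independent of his view. The no-corruption case (secure channels) and the both-corrupted case are immediate. Hence $\sigma^{\FROT}$ statistically classical-UC-emulates $\FOT$, and \autoref{theo:lift.stat} upgrades this to quantum-UC-emulation.

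Finally I would assemble the pieces. By \autoref{theo:pi.rot}, $\piQROT$ statistically quantum-UC-emulates $\FROT^{A\to B,\ell}$; the composition theorem (\autoref{theo:comp}) then gives that $\sigma^{\piQROT}$ statistically quantum-UC-emulates $\sigma^{\FROT}$. Chaining this with $\sigma^{\FROT}$ emulating $\FOT$ through transitivity (\autoref{lemma:ref.trans}), and using $\sigma^{\piQROT}=\piQOT$, we conclude that $\piQOT$ statistically quantum-UC-emulates $\FOT^{A\to B,\ell}$. I expect the only genuinely delicate point to be the bookkeeping identity $\sigma^{\piQROT}=\piQOT$: one must check that the composition operation reproduces \autoref{def:pi} verbatim, in particular that bundling $t_0,t_1$ with the other Step~\ref{pi.send} messages, rather than in a separate round of $\sigma$, does not change any party's view. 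Everything else is routine one-time-pad reasoning, delegated to the classical setting and then lifted.
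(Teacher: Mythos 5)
Your proposal is correct and follows essentially the same route as the paper: the paper likewise introduces the wrapper protocol (called $\piQOT'$ there) in the $\FROT$-hybrid model, observes that $\piQOT$ results from substituting $\piQROT$ for $\FROT$, proves classical-UC-emulation of $\FOT$ (which the paper dismisses as ``easy to see'' where you spell out the simulators), lifts it via \autoref{theo:lift.stat}, and concludes with \autoref{theo:comp} and \autoref{theo:pi.rot}. The explicit one-time-pad simulators you give are a correct filling-in of the step the paper leaves implicit.
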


\begin{proof}
  Consider the following protocol $\piQOT'$ in the $\FROT$-hybrid
  model. Given inputs $v_0,v_1\in\bits{\ell(k)}$ for Alice and a bit
  $c$ for Bob, Bob invokes $\FROT$ with input $c$. Then Alice gets
  random $s_0,s_1\in\bits{\ell(k)}$, and Bob gets $s=s_c$. Then Alice
  sends $t_0,t_1$ with $t_i:=v_i\oplus s_i$ to Bob. And Bob outputs
  $s\oplus t_c$.  It is easy to see that $\piQOT'$ statistically
  classical-UC-emulates $\FOT$.  Hence, by the quantum lifting theorem
  (\autoref{theo:lift.stat}), $\piQOT'$ statistically
  quantum-UC-emulates $\FOT$. Note that the protocol $\piQOT$ is the
  protocol resulting from replacing, in $\piQOT'$, calls to $\FROT$ by
  calls to the subprotocol $\piQROT$. Furthermore,
  $\piQROT$ statistically quantum-UC-emulates $\FROT$ by \autoref{theo:pi.rot}.
  Hence, by the composition
  theorem (\autoref{theo:comp}), $\piQOT$ statistically
  quantum-UC-emulates~$\FOT$. \qed
\end{proof}

\section{Multi-party computation}
\label{sec:multi-party}
\index{multi-party computation}

\begin{theorem}\label{theo:qmpc}
  Let $\calF$ be a classical probabilistic-polynomial-time functionality.\footnote{Subject to certain
    technical restrictions stemming from the proof by Ishai et
    al.~\cite{Ishai:2008:OT}:
    Whenever the functionality gets an input,
    the adversary is informed about the length of that input. Whenever
    the functionality makes an output, the adversary is informed about
    the length of that output and may decide when this output is to be
    scheduled.} Then there exists a protocol $\pi$ in the
  $\FCOM$-hybrid model that statistically quantum-UC-emulates
  $\calF$. (Assuming the number of protocol parties does not depend on the
  security parameter.)
\end{theorem}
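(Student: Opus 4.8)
The plan is to prove this by pure composition of black boxes, combining the completeness of oblivious transfer for multi-party computation (Ishai, Prabhakaran, and Sahai \cite{Ishai:2008:OT}) with the three structural results already established: the quantum lifting theorem (\autoref{theo:lift.stat}), the security of $\piQOT$ (\autoref{theo:pi.ot}), and the universal composition theorem (\autoref{theo:comp}). At no point would I need to reopen the internal analysis of any multi-party protocol.

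First I would invoke \cite{Ishai:2008:OT} to obtain a \emph{classical} protocol $\sigma$ in the $\FOT$-hybrid model such that $\sigma^{\FOT}$ statistically classical-UC-emulates $\calF$. This is precisely where the technical side conditions on $\calF$ recorded in the footnote are used: they are the hypotheses under which the result of Ishai et al.\ applies. Now both $\sigma^{\FOT}$ and $\rho_\calF=\{\Tilde P:P\in\parties_\pi\}\cup\{\calF\}$ are classical protocols (the IPS protocol is classical, $\FOT$ is classical, and $\calF$ and its dummy-parties are classical), so the quantum lifting theorem (\autoref{theo:lift.stat}) applies verbatim and yields that $\sigma^{\FOT}$ statistically \emph{quantum}-UC-emulates $\calF$. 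This step is the crux of the whole argument: without lifting, IPS gives only classical-UC security, which cannot be fed into the quantum composition theorem.

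Next I would instantiate the oblivious transfer by $\piQOT$. By \autoref{theo:pi.ot}, with the stated parameters, $\piQOT$ statistically quantum-UC-emulates $\FOT^{A\to B,\ell}$ in the $\FCOM$-hybrid model. Applying the composition theorem (\autoref{theo:comp}) with outer protocol $\sigma$, inner real protocol $\piQOT$, and inner ideal protocol $\FOT$, I obtain that $\sigma^{\piQOT}$ statistically quantum-UC-emulates $\sigma^{\FOT}$; here I use that $\sigma$, $\piQOT$, and $\FOT$ are all quantum-polynomial-time and that, since the number of parties is independent of $k$, the protocol $\sigma$ invokes only polynomially many OT instances. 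Chaining this with the previous paragraph through transitivity of quantum-UC-emulation (\autoref{lemma:ref.trans}) gives that $\sigma^{\piQOT}$ statistically quantum-UC-emulates $\calF$. Since $\piQOT$ lives in the $\FCOM$-hybrid model, so does $\sigma^{\piQOT}$, and setting $\pi:=\sigma^{\piQOT}$ proves the theorem.

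The one genuinely fiddly point -- the main obstacle -- is bookkeeping on the OT message length: the IPS protocol invokes $\FOT$ on messages of some fixed polynomial length $\ell_0(k)$, whereas \autoref{theo:pi.ot} delivers $\FOT^{A\to B,\ell}$ with $\ell=\lfloor\lambda n\rfloor$. I would reconcile these by letting $n$ grow (at least linearly) so that $\ell\geq\ell_0$, and realizing $\FOT^{A\to B,\ell_0}$ from $\FOT^{A\to B,\ell}$ by trivial padding (the sender pads its strings, the receiver truncates); this is itself a classical-UC-emulation that lifts to quantum-UC and can be absorbed into $\sigma$. Beyond this, the only care needed is to keep track that every machine constructed along the way is quantum-polynomial-time, which is immediate since all the emulations here are between efficient protocols and involve no rewinding.
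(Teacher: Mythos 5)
Your proposal is correct and follows essentially the same route as the paper's proof: invoke Ishai--Prabhakaran--Sahai for a classical-UC protocol in the $\FOT$-hybrid model, lift it to quantum-UC via \autoref{theo:lift.stat}, plug in $\piQOT$ via \autoref{theo:pi.ot} and the composition theorem (\autoref{theo:comp}), and conclude by transitivity (\autoref{lemma:ref.trans}). Your extra remark on reconciling the OT string length by padding is a detail the paper elides, but it does not change the argument.
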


\begin{proof}
  Ishai, Prabhakaran, and Sahai \cite{Ishai:2008:OT} prove the
  existence of a protocol $\rho^{\FOT}$ in the $\FOT$-hybrid model
  that statistically classical-UC-emulates $\calF$ (assuming a
  constant number of parties). By the quantum lifting theorem
  (\autoref{theo:lift.stat}), $\rho^{\FOT}$ statistically
  quantum-UC-emulates $\calF$. By \autoref{theo:pi.ot}, $\piQOT$
  statistically quantum-UC-emulates~$\FOT$. Let $\pi:=\rho^{\piQOT}$
  be the result of replacing invocations to~$\FOT$ in~$\rho^{\FOT}$ by
  invocations of the subprotocol~$\piQOT$ (as described \fullshort{in
    \autoref{sec:comp}}{before \autoref{theo:comp}}). 
  Then by the universal composition theorem
  (\autoref{theo:comp}), $\pi$ statistically
  quantum-UC-emulates~$\rho^{\FOT}$. Using the fact that
  quantum-UC-emulation is transitive
  (\autoref{lemma:ref.trans}\shortonly{ in the appendix}), it
  follows that $\pi$ statistically quantum-UC-emulates $\calF$.
  \qed
\end{proof}

\bigskip

We proceed to show that the result from  \autoref{theo:qmpc} is
possible only in the quantum setting. That is, we show that there is a
natural functionality that cannot be statistically
classical-UC-emulated in the commitment-hybrid model. \fullonly{To show this
impossibility result, we first need the following lemma.}

\delaytext{lemma:no.and}{
\begin{lemma}\label{lemma:no.and}
  There is no classical two-party protocol (that runs in a polynomial number of
  rounds) in the commitment-hybrid model that has the following properties:
  \begin{itemize}
  \item Let $a\in\bit$ denote Alice's input, and $b\in\bit$ Bob's
    input. Then Alice's and Bob's output is $a\cdot b$ with
    overwhelming probability.
  \item The view of Alice in the case $(a,b)=(0,0)$ is statistically 
    indistinguishable from the view of Alice in the case $(a,b)=(0,1)$.
  \item The view of Bob in the case $(a,b)=(0,0)$ is statistically
    indistinguishable from the view of Bob in the case $(a,b)=(1,0)$.
  \end{itemize}
  In all three cases we assume that Alice and Bob honestly follow the
  protocol (i.e., Alice and Bob are honest-but-curious). The view of a
  party consists of all messages sent and received by that party
  together with its input and random choices.
\end{lemma}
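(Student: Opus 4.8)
The plan is to assume, for contradiction, that such a protocol exists and to analyze the distribution of its transcript. First I would reduce to the plain (commitment-free) two-party model. Since Alice and Bob are honest-but-curious, each invocation of the commitment functionality can be replaced by its trivial semi-honest realization: the committer sends a content-free ``committed'' message, and upon opening sends the committed value directly. Because the parties follow the protocol, binding is automatic and no information about the committed value flows before opening, so every party's view (and hence correctness and both privacy conditions) is unchanged. Thus without loss of generality $\pi$ is a plain classical two-party protocol running in polynomially many rounds, so its full transcript $\tau$ (seen in its entirety by both parties) is a well-defined finite random variable.

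Next I would exploit transcript factorization. Writing $P_{ab}(\tau)$ for the probability of $\tau$ on inputs $(a,b)$, the fact that each message depends only on its sender's input, local randomness, and the messages so far gives $P_{ab}(\tau)=\alpha_a(\tau)\,\beta_b(\tau)$, where $\alpha_a$ collects Alice's message probabilities and $\beta_b$ Bob's. Since $\tau$ is a function of each party's view, the two privacy hypotheses give $\mathrm{SD}(P_{00},P_{01})\le\nu$ and $\mathrm{SD}(P_{00},P_{10})\le\nu$ for some negligible $\nu$, whence $\mathrm{SD}(P_{01},P_{10})\le 2\nu$ by the triangle inequality. Correctness points the other way: Alice's output is a function of her view $(1,r_A,\tau)$ and equals $b$ with overwhelming probability when $a=1$, so it almost perfectly distinguishes $b=0$ from $b=1$; applying the analogous factorization of Alice's view, $\mathrm{SD}(P_{10},P_{11})=\mathrm{SD}(V^A_{10},V^A_{11})\ge 1-\mathrm{negl}$.

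The crux, and the step I expect to be the main obstacle, is turning the two closeness bounds into a bound on $\mathrm{SD}(P_{00},P_{11})$. The naive route via the exact identity $P_{11}=P_{01}P_{10}/P_{00}$ fails in statistical distance, because transcripts with tiny $P_{00}$ but large ratios $\beta_1/\beta_0$ or $\alpha_1/\alpha_0$ are not controlled by total variation. The fix is to pass to the Bhattacharyya coefficient $\mathrm{BC}(X,Y)=\sum_\tau\sqrt{X(\tau)Y(\tau)}$, which is multiplicative over the factorization and therefore satisfies the exact identity $\mathrm{BC}(P_{00},P_{11})=\mathrm{BC}(P_{01},P_{10})$, since both equal $\sum_\tau\sqrt{\alpha_0\alpha_1}\,\sqrt{\beta_0\beta_1}$. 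Using the elementary bounds $1-\mathrm{BC}\le\mathrm{SD}\le\sqrt{1-\mathrm{BC}^2}$, I get $\mathrm{BC}(P_{00},P_{11})=\mathrm{BC}(P_{01},P_{10})\ge 1-\mathrm{SD}(P_{01},P_{10})\ge 1-2\nu$, hence $\mathrm{SD}(P_{00},P_{11})\le\sqrt{1-(1-2\nu)^2}\le 2\sqrt{\nu}$, which is negligible. Combining with $\mathrm{SD}(P_{00},P_{10})\le\nu$ yields $\mathrm{SD}(P_{10},P_{11})\le \nu+2\sqrt{\nu}$, contradicting the correctness bound $\mathrm{SD}(P_{10},P_{11})\ge 1-\mathrm{negl}$. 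Hence no such protocol can exist. Everything outside the Hellinger/Bhattacharyya step is routine transcript-factorization bookkeeping; the only place real care is needed is precisely this multiplicative argument.
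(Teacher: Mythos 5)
Your proof is correct, and its skeleton is the same as the paper's: strip out the commitment functionality (harmless against honest-but-curious parties), use the two privacy conditions to make the transcript distributions on $(0,0)$, $(0,1)$, $(1,0)$ pairwise close, deduce that the transcripts on the ``diagonal'' inputs are also close, and contradict correctness. The only genuine difference is how the crucial step --- pairwise closeness of $\langle U,L\rangle$, $\langle \tilde U,L\rangle$, $\langle U,\tilde L\rangle$ forcing closeness of $\langle U,L\rangle$ and $\langle \tilde U,\tilde L\rangle$ --- is established. The paper imports this wholesale as Lemma~33 of M\"{u}ller-Quade and Unruh \cite{muellerquade07long-long} and then derives the contradiction from the last message carrying the output $ab$. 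You instead prove that step from scratch: the product structure $P_{ab}=\alpha_a\beta_b$ of transcript distributions makes the Bhattacharyya coefficient satisfy the exact identity $\mathrm{BC}(P_{00},P_{11})=\mathrm{BC}(P_{01},P_{10})$, and the standard bounds $1-\mathrm{BC}\le \mathrm{SD}\le\sqrt{1-\mathrm{BC}^2}$ convert this into $\mathrm{SD}(P_{00},P_{11})\le 2\sqrt{\nu}$. You are also right that the naive ratio manipulation $P_{11}=P_{01}P_{10}/P_{00}$ is not controlled in total variation and that a Hellinger-type quantity is the correct fix; and your reduction of the correctness condition to a lower bound on $\mathrm{SD}(P_{10},P_{11})$ is sound, because the conditional distribution of Alice's randomness given the transcript is independent of $b$, exactly by the same factorization. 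Your version is self-contained and makes the quantitative loss ($\nu\mapsto O(\sqrt{\nu})$) explicit, at the cost of re-deriving a known cut-and-paste lemma; the paper's version is shorter but rests on the cited lemma as a black box.
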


\begin{proof}
  Assume a protocol $\pi$ satisfying the properties from
  \autoref{lemma:no.and}.  We assume without loss of generality that
  the last message sent in an execution of $\pi$ contains the output
  of Alice.  We transform $\pi$ into a protocol $\pi'$ that does not
  use commitments. Namely, when Alice would commit to a value $m$, she
  instead sends $\mathtt{committed}$ to Bob, and when she would open
  that commitment, she sends $m$ to Bob. Analogously, we remove Bob's
  commitments. The resulting protocol $\pi'$ still satisfies the
  properties from \autoref{lemma:no.and} since we only consider
  honest-but-curious parties.

  We use Lemma 33 from \cite{muellerquade07long-long}: Let $U$, $\Tilde U$, $L$, $\Tilde{L}$
  be interactive machines that send only a polynomially-bounded number
  of messages. Let $\langle U,L\rangle$ denote the transcript of the
  communication in an interaction of~$U$ and~$L$.  Assume that
  $ \langle U,L\rangle \approx \langle \Tilde U,L\rangle \approx
  \langle U,\Tilde L\rangle $
  where $\approx$ denotes statistical indistinguishability.  Then
  $ \langle U,L\rangle \approx \langle \Tilde U,\Tilde L\rangle.  $

  Let $U$ be a machine executing Alice's program in $\pi'$ on input $0$,
  and let $\Tilde U$
  execute Alice's program on input $1$. Let $L$ and $\Tilde L$ execute
  Bob's program on inputs $0$ and $1$, respectively. Then the properties
  in \autoref{lemma:no.and} guarantee that
  $ \langle U,L\rangle \approx \langle \Tilde U,L\rangle \approx
  \langle U,\Tilde L\rangle $.
  Hence $ \langle U,L\rangle \approx \langle \Tilde U,\Tilde
  L\rangle$.
  This implies that the communication between Alice and Bob in $\pi'$
  is indistinguishable in the cases $a=b=0$ and $a=b=1$. This is a
  contradiction to the fact that in the first case, the last message
  contains the output $ab=0$, and in the second case, the last message
  contains the output $ab=1$.
\qed
\end{proof}
}
\fullonly{\usedelayedtext{lemma:no.and}}

\begin{definition}[AND]
  The functionality $\FAND$\index{functionality!AND}\index{AND!functionality} expects an input $a\in\bit$ from Alice and
  $b\in\bit$ from Bob. Then it sends $a\cdot b$ to Alice and Bob.
\end{definition}

\begin{theorem}[Impossibility of classical multi-party
  computation]\label{theo:no.and.uc}\index{multi-party
    computation!classical impossibility}
  There is no classical probabilistic-polynomial-time protocol $\pi$ in the $\FCOM$-hybrid model such that
  $\pi$ statistically classical-UC-emulates $\FAND$.
\end{theorem}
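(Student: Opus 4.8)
The plan is to assume, for contradiction, that some classical probabilistic-polynomial-time $\pi$ statistically classical-UC-emulates $\FAND$ in the $\FCOM$-hybrid model, and then show that $\pi$ satisfies the three properties of \autoref{lemma:no.and}. Since $\pi$ is probabilistic-polynomial-time it runs in a polynomial number of rounds, and it is a two-party protocol in the commitment-hybrid model, so it is exactly of the shape required by that lemma. Throughout I exploit that we are in the statistical setting: the environment may be computationally unbounded and hence can implement an optimal statistical test, so UC-indistinguishability of two networks yields statistical indistinguishability (up to a negligible term) of the full view/output distributions that the environment reads off. The first property, correctness, follows from UC-security with no corrupted parties: in the ideal model $\FAND$ hands $a\cdot b$ to both honest parties and thence to the environment, so an environment that feeds inputs $(a,b)$ and checks whether each honest output equals $a\cdot b$ distinguishes the real from the ideal model unless, in the honest real execution, both outputs equal $a\cdot b$ with overwhelming probability.

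For the second property I invoke UC-security with corrupted Alice ($C=\{A\}$) and the fixed \emph{honest-but-curious} adversary $\Adv$ that runs Alice's protocol faithfully on the input supplied by the environment and, at the end, reports Alice's entire view. Let $\Sim$ be the guaranteed simulator; crucially $\Sim$ does not depend on the environment. Consider environments $\calZ_b$ (for $b\in\bit$) that instruct the corrupted Alice with input $0$ and give Bob's input $b$ directly to the honest Bob. In the real model the reported view is precisely the honest-but-curious view of Alice on inputs $(0,b)$, so it suffices to show the two \emph{ideal}-model view distributions ($b=0$ versus $b=1$) agree up to a negligible term. In the ideal model $\Sim$ chooses some input $\hat a$ for $\FAND$ \emph{before} learning the reply $\hat a\cdot b$, and it never sees $b$ otherwise (Bob's input travels over secure channels to $\FAND$); hence the distribution of $\hat a$ is independent of $b$. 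Applying correctness to the environment with inputs $(0,1)$ that inspects Bob's output --- which is $0\cdot 1=0$ in the real model but $\hat a\cdot 1=\hat a$ in the ideal model --- forces $\Pr[\hat a=1]$ to be negligible, and since this distribution does not depend on $b$ the same bound holds for $b=0$. Conditioned on $\hat a=0$, both the value $\hat a\cdot b=0$ returned to $\Sim$ and Bob's output $0$ are independent of $b$, so $\Sim$'s output is identically distributed for $b=0$ and $b=1$. Chaining this with UC-indistinguishability (real $\approx$ ideal for each fixed $b$, with the same $\Sim$) gives the second property.

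The third property is proved symmetrically by corrupting Bob, fixing his input to $0$, and varying Alice's input between $0$ and $1$: here $\Sim$'s input $\hat b$ to $\FAND$ cannot depend on Alice's true input (which $\Sim$ never sees), and correctness applied to the environment with Alice-input $1$ forces $\hat b=0$ with overwhelming probability, after which $\Sim$'s view is independent of Alice's input. Having established all three bullets, \autoref{lemma:no.and} yields the desired contradiction. I expect the main obstacle to be exactly the privacy bullets, specifically the argument that the ideal-model simulator is \emph{forced} to feed the absorbing value $0$ to $\FAND$ and therefore provably learns nothing about the other party's input. This hinges on two points that must be stated with care: that $\Sim$ commits to its $\FAND$-input before seeing the reply (so that input is independent of the unknown honest input), and that correctness pins this input down through the $a\cdot b$ that the honest party outputs to the environment.
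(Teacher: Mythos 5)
Your proposal is correct and follows the paper's route exactly: the paper's own proof is the one-line observation that statistical classical-UC-security of $\pi$ implies the three properties of \autoref{lemma:no.and}, and you supply precisely the details it leaves implicit (correctness from the uncorrupted case, and the two privacy bullets via an honest-but-curious adversary for one corrupted party, using that the simulator must fix its $\FAND$-input before seeing the reply and that correctness forces that input to $0$). The care you take about the simulator being independent of the environment and about the absorbing value $0$ is exactly what makes the omitted implication go through.
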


\delaytext{proof theo:no.and.uc}{
\fullshort{\begin{proof}}{\begin{proof}[of \autoref{theo:no.and.uc}]}
  The statistical UC-security of $\pi$ would imply the properties
  listed in \autoref{lemma:no.and}. Hence by \autoref{lemma:no.and}
  such a protocol $\pi$ does not exist.\qed
\end{proof}
}
\fullonly{\usedelayedtext{proof theo:no.and.uc}}

\begin{shortversion}
  \noindent The proof is given in Appendix~\ref{app:proofs:sec:multi-party}.
\end{shortversion}

\begin{fullversion}
  \section{Conclusions}
  \label{sec:concl}

  We have given a definition of quantum-UC-security that provides strong
  composability guarantees for quantum protocols. We have shown that
  in this model, it is possible to construct statistically secure
  oblivious transfer protocols given only commitments. Furthermore, we
  showed that a protocol which is secure in the statistical
  \emph{classical} UC model is also secure in the statistical
  \emph{quantum} UC model. This simplifies the modular design of
  quantum protocols and allows us to construct UC-secure general multi-party
  computation protocols given only commitments.

  Directions for future work include:
  \begin{itemize}
  \item Combine the UC framework and the bounded quantum storage
    model. In this model, Damgård, Fehr, Salvail, and Schaffner
    \cite{Damgaard:2005:BoundedQuantum} have constructed statistically
    hiding and binding commitment schemes and statistically secure OT
    protocols. If variants of these protocols can be shown secure in
    the UC framework, this would allow to construct general UC-secure
    multi-party computation protocols, only assuming that the
    adversary has a certain upper bound on his quantum storage.
  \item Combine our result with the protocols for long-term classical
    UC-secure commitments by Müller-Quade and Unruh
    \cite{muellerquade07long} (see \autoref{sec:howto}). If their
    protocols can be shown to be secure in the quantum setting, this
    would enable general long-term secure multi-party computation
    based on practical setup-assumptions (the availability of
    signature cards).
  \item Find efficient constructions. Our protocol invokes a
    commitment for each qubit sent by Alice. In some settings, a
    commitment can be quite expensive. For example, commitment
    protocols in the bounded quantum storage model have a large
    quantum communication complexity. In this setting, the efficiency
    of our protocol could be improved considerably if we were able to
    use few string commitments instead of committing to each bit
    individually.
  \item Find analogues to the quantum lifting theorem in other
    security models. In the stand-alone model, it is an open question
    whether classically secure protocols are secure in the quantum
    setting, too. Similarly, we do not know whether classically secure
    zero-knowledge proofs are in general secure against quantum
    adversaries.
  \end{itemize}

\paragraph{Acknowledgements.} I thank Jörn Müller-Quade for the
original inspiration for this work and Christian
Schaffner for valuable discussions.
\end{fullversion}

\begin{shortversion}
  \appendix

  \newpage

  \usedelayedtext{compositionality restrictions}

  \section{Deferred proofs for Section~\ref{sec:quc}}
  \label{app:proofs:sec:quc}

  \usedelayedtext{lemma:ref.trans}

  \usedelayedtext{proof lemma:dummy.complete}

  \usedelayedtext{proof theo:comp}

  \usedelayedtext{relation stand-alone}

  \usedelayedtext{corrupted bob}

  \usedelayedtext{ot trivial}

  \noindent  We now have everything needed to show the security of $\piQROT$:

  \usedelayedtext{proof theo:pi.rot}

  \section{Deferred proofs for Section~\ref{sec:multi-party}}
  \label{app:proofs:sec:multi-party}

  \usedelayedtext{lemma:no.and}

  \usedelayedtext{proof theo:no.and.uc}

\end{shortversion}

\fullshort{\bibliographystyle{alpha}}{\bibliographystyle{abbrv}}
\bibliography{quantum-uc-ot}

\fullonly\printindex

\end{document}